\newcommand{\Var}{\mathop{\rm Var}}
\def\csect#1{Section~\ref{#1}}
\def\capp#1{Appendix~\ref{#1}}
\def\csects#1#2{Sections~\ref{#1} and \ref{#2}}
\let\Prp=\Pr \def\Pr{\Prp\nolimits}
\newcommand{\be}{\begin{equation}}
\newcommand{\ee}{\end{equation}}
\newtheorem{theorem}{Theorem}
\newtheorem{corollary}[theorem]{Corollary}
\newtheorem{lemma}[theorem]{Lemma}
\theoremstyle{remark}
\newtheorem{definition}[theorem]{Definition}
\newtheorem{remark}[theorem]{Remark}
\newcommand{\cthm}[1]{Theorem~\ref{#1}}
\newcommand{\clem}[1]{Lemma \ref{#1}}
\newcommand{\cthms}[3]{Theorems~\ref{#1}, \ref{#2}, and \ref{#3}}
\newcommand{\cdef}[1]{Definition~\ref{#1}}
\newcommand{\ccor}[1]{Corollary~\ref{#1}}
\newcommand{\cfig}[1]{Figure~\ref{#1}}
\newcommand{\crem}[1]{Remark~\ref{#1}}
\def\ds{\displaystyle}
\def\sss{\scriptscriptstyle}
    \def\bbz{\mathbb{Z}}
    \def\bbr{\mathbb{R}}
    \def\bbn{\mathbb{N}}
\def\n{{\bf n}}
\def\M{{\bf M}}
\def\muhat{\hat\mu}\def\rhohat{\hat\rho}
\def\murho{\mu^{(\rho)}}
\def\interiorM{\rlap{\raise9pt\hbox to9.5pt{\hss$\scriptscriptstyle\circ$}}M}
\def\M{\mathcal{M}}
\def\Xh{\bar X}
\def\Qh{\widehat Q}\def\Ph{\widehat P}
\def\Xz{X^{(0)}}
\def\ns{{\#}}
\def\Xl{X^{(l)}}\def\Xm{X^{(i)}}\def\Xh{X^{(h)}}
\def\Ul{U^{(l)}}\def\Ui{U^{(i)}}\def\Uh{U^{(h)}}\def\Us{U^{(\ns)}}
\def\Uhat{\hat U}
\long\def\kill#1\endkill{\relax}
\def\0{{\it0}}\def\1{{\it1}}\def\2{{\it2}}\def\3{{\it3}}
\def\rng#1#2{\hbox{$(#1\!:\!#2)$}}
\newdimen\mbsize \mbsize=\hsize \multiply\mbsize by 17  \divide\mbsize by 20
\numberwithin{equation}{section}
\numberwithin{theorem}{section}
\begin{document}

\title{\vskip-0.2truein
The Discrete-Time Facilitated Totally Asymmetric Simple Exclusion Process}
\author{S. Goldstein\footnote{Department of Mathematics,
Rutgers University, New Brunswick, NJ 08903.},\!
\footnote{Also Department of Physics, Rutgers.}\ \ 
J. L. Lebowitz\footnotemark[1],$\,$\footnotemark[2]\ \
and E. R. Speer\footnotemark[1]}
\date{\today}
\maketitle

\noindent{\raggedright {\bf Keywords:} Totally asymmetric exclusion
processes, one dimensional conserved lattice gas, facilitated jumps, 
non-invertible deterministic evolution, phase transitions, traffic models}

\par\medskip\noindent
 {\bf AMS subject classifications:} 60K35, 82C22, 82C23, 82C26

\begin{abstract}We describe the translation invariant stationary states
of the one dimensional discrete-time facilitated totally asymmetric
simple exclusion process (F-TASEP).  In this system a particle at site
$j$ in $\bbz$ jumps, at integer times, to site $j+1$, provided site $j-1$
is occupied and site $j+1$ is empty.  This defines a deterministic
noninvertible dynamical evolution from any specified initial
configuration on $\{0,1\}^{\bbz}$.  When started with a Bernoulli product
measure at density $\rho$ the system approaches a stationary state, with
phase transitions at $\rho=1/2$ and $\rho=2/3$. We discuss various
properties of these states in the different density regimes $0<\rho<1/2$,
$1/2<\rho<2/3$, and $2/3<\rho<1$; for example, we show that the pair
correlation $g(j)=\langle\eta(i)\eta(i+j)\rangle$ satisfies, for all
$n\in\bbz$, $\sum_{j=kn+1}^{k(n+1)}g(j)=k\rho^2$, with $k=2$ when
$0\le\rho\le1/2$ and $k=3$ when $2/3\le\rho\le1$, and conjecture (on the
basis of simulations) that the same identity holds with $k=6$ when
$1/2\le\rho\le2/3$.  The $\rho<1/2$ stationary state referred to above is
also the stationary state for the deterministic discrete-time TASEP at
density $\rho$ (with Bernoulli initial state) or, after exchange of
particles and holes, at density $1-\rho$. \end{abstract}

\section{Introduction\label{intro}}

The {\it facilitated totally asymmetric simple exclusion process}
(F-TASEP) is a model of particles moving on the lattice $\bbz$, in which
a particle at site $j$ jumps, at integer times, to site $j+1$, provided
site $j-1$ is occupied and site $j+1$ is empty; many of the results on
this model appeared in \cite{glsshort}, without complete proofs.  The
related model in which the discrete time steps are replaced by a
continuous time evolution has been studied both numerically and
analytically \cite{bbcs,CZ,gkr}, as has the continuous-time model with
symmetric evolution \cite{BESS,oliveira}.  There are extensive numerical
simulations of similar models (usually called {\it Conserved Lattice
Gases}) in two or more dimensions \cite{hl,mcl,rpv}, but there are few
analytic results (but see \cite{ST}).  The model is of interest in part
because it exhibits nonequilibrium phase transitions.

A configuration of the model is an arrangement of particles on $\bbz$,
with each site either empty or occupied by a single particle; that is,
the configuration space is $X=\{0,1\}^\bbz$, with 1 denoting the presence
of a particle and 0 that of a hole.  We write $\eta=(\eta(i))_{i\in\bbz}$
for a typical configuration, and for $j,k\in\bbz$ with $j\le k$ we let
$\eta\rng jk=(\eta(i))_{j\le i\le k}$ denote the portion of the
configuration lying between sites $j$ and $k$ (inclusive). We will
occasionally use string notation, and correspondingly concatenation, for
configurations or partial configurations, writing for example
$\eta\rng06=\eta(0)\cdots\eta(6)=0\,1\,1\,0\,1\,0\,1=01^2(01)^2$.  
For $0\le\rho\le1$ we let $X_\rho\subset X$ denote the set of
configurations with a well-defined density $\rho$, that is,
configurations $\eta$ for which
 \be\label{rhodef}
\lim_{N\to\infty}\frac1N\sum_{i=1}^N\eta(i)
 = \lim_{N\to\infty}\frac1N\sum_{i=-N}^{-1}\eta(i) = \rho.
 \ee

Here we study the F-TASEP discrete-time dynamics as described above.  For
$\rho\notin\{1/2,2/3\}$ we will determine the ultimate fate of any
initial configuration $\eta\in X_\rho$.  We will also describe the
translation invariant (TI) states (i.e., TI probability measures on $X$)
of the system which are stationary under the dynamics (the TIS states);
without loss of generality we restrict consideration to states for which
almost all configurations have the same well-defined density $\rho$,
called {\it states of density $\rho$}, and will frequently assume further
that these states are ergodic under translations.  We would also like to
determine the final TIS state when the dynamics is started in a Bernoulli
measure: an initial state $\murho$ for which each site is independently
occupied with probability $\rho$.  In this, however, we will not be
completely successful.

We will make use of a closely related model, the {\it totally asymmetric
stack model} (TASM), another particle system on $\bbz$ evolving in
discrete time.  In the TASM there are no restrictions on the number of
particles at any site, so that the configuration space is
$Y=\bbz_+^\bbz$, where $\bbz_+=\{0,1,2,\ldots\}$.  We denote stack
configurations by boldface letters, so that a typical configuration is
$\n=(\n(i))_{i\in\bbz}$.  The dynamics is as follows: at each integer
time, every stack with at least two particles ($\n(i)\ge2$) sends one
particle to the neighboring site to its right.  This model is thus
essentially a discrete-time zero range process.

There is a natural correspondence between the TASM and the F-TASEP, with
a stack configuration $\n$ corresponding to a particle configuration in
which successive strings of $\n(i)$ particles are separated by single
holes; as just stated the correspondence is somewhat loose but yields a
bijective map $\psi:\Xz\to Y$, where $\Xz\subset X$ is the set of F-TASEP
configurations $\eta$ satisfying $\eta(0)=0$.  Moreover, if $\mu$ is a TI
probability measure on $X$ and we define
$\muhat=\mu(\Xz)^{-1}\mu\circ\psi^{-1}$ then $\mu\mapsto\muhat$ is a
bijective correspondence between TI, or TIS, probability measures on $Y$
and on $X$; this correspondence is discussed in detail in
\csect{correspondence}.  Using it, we show in \csect{phases} that there
are three phases for the F-TASEP, that is, three distinct regimes in
which the model exhibits qualitatively different behavior: the regions of
low, intermediate, and high density in which respectively $0<\rho<1/2$,
$1/2<\rho<2/3$, and $2/3<\rho<1$.

In subsequent sections we show, for each density region, how to determine
the ``final configuration'' resulting from the evolution of some
arbitrary initial configuration; we then suppose that the initial
configuration has a Bernoulli distribution and study the distribution of
the final configuration---that is, the TIS measure which is the
$t\to\infty$ limit of an initial Bernoulli measure (this problem was
studied for the continuous time model in \cite{CZ}).  It is for the low
density phase, treated in \csect{low}, that we can say the most.  We show
that every initial configuration $\eta_0$ of density $\rho<1/2$ has a
limit $\eta_\infty$---that is, it eventually freezes---and compute, for a
Bernoulli initial measure, the distribution of these final
configurations, which arises from a certain renewal process.  Moreover,
we show that if site $i$ is a point of this renewal process then the
expected density at any site an odd distance ahead of $i$ is $\rho$, and
that the two-point function in the final state,
$g(i)=\langle\eta_\infty(j)\eta_\infty(i+j)\rangle$, satisfies
$g(2n-1)+g(2n)=2\rho^2$ for any $n\ge1$; the latter property implies that
the asymptotic value of $V_L/L$, where $V_L$ is the variance of the
number of particles in an interval of length $L$, has the same value
$\rho(1-\rho)$ as for the initial Bernoulli measure.  We also compute the
distribution of the distance moved by a typical particle through the
evolution and find that the expected value of this distance is finite.
Finally, we show (see \crem{usual}) that the $\rho<1/2$ stationary state
is also the stationary state for the deterministic discrete-time totally
asymmetric simple exclusion process (TASEP) at density $\rho$ (in each
case with Bernoulli initial state) or, after exchange of particles and
holes, at density $1-\rho$.

A key technique for the study of the intermediate and high density
regions is to consider the dynamics in a moving frame; it is in this
frame that a limiting configuration exists for each initial
configuration.  Rather surprisingly, perhaps, the behavior of the model
in the high density region is largely parallel to that in the low density
region; we thus content ourselves with a rather brief treatment in
\csect{high}.  For the intermediate region, discussed in
\csect{intermediate}, the dynamics is considerably more complicated.
Here we are able to carry out the second step of the program, that is, to
determine the limit of the initial Bernoulli measure, only partially,
although we do show that the final measure can be characterized in terms
of a certain hidden Markov process.  Some technical and peripheral
results are relegated to appendices.

We mention finally some further observations about the model which can be
found in \cite{glsshort}.  When the empty lattice sites are regarded as
cars and the occupied sites as empty spaces, the model is closely related
to certain traffic models \cite{GG,LZGM}, with the low density region
corresponding to jammed traffic, the high density to free flow, and the
intermediate density to stop and go.  If in the low density phase an
initial Bernoulli measure is perturbed in some local way then the
perturbation does not dissipate; this is related to the finite expected
value of the distance a particle moves, mentioned above.  Finally, the
${\rm F}_k$-TASEP, defined by requiring that a particle have $k$ adjacent
particles to its left before it can jump, has properties analogous to the
F-TASEP itself; in particular, there are again three phases,
corresponding to density regions $\rho<k/(k+1)$,
$k/(k+1)<\rho<(k+1)/(k+2)$, and $(k+1)/(k+2)<\rho$ (the continuous-time
version of this model is discussed in \cite{BM}).

\section{Preliminary considerations}\label{preliminary}

We begin this section by introducing some notation to be used throughout
the paper.  We write $\bbz_\pm=\{0,\pm1,\pm2,\ldots\}$ and
$\bbn=\{1,2,\ldots\}$.  If $\lambda$ is a measure on a set $A$ and
$F:A\to\bbr$ then $\lambda(F)$ denotes the expectation of $F$ under
$\lambda$; if further $f:A\to B$ then $f_*\lambda$ is the measure on $B$
given by $(f_*\lambda)(C)=\lambda(f^{-1}(C))$.  Finally, we let $\tau$ be
the translation operator which acts on a function $f$ defined on $\bbz$
via $(\tau f)(k)=f(k-1)$.

\subsection{Correspondence of the F-TASEP and TASM}\label{correspondence}

In \csect{intro} we introduced the F-TASEP, with configuration space
$X=\{0,1\}^\bbz$, and the TASM, with configuration space $Y=\bbz_+^\bbz$;
in this section we establish the natural bijective correspondence between
the invariant measures for these two models.  This correspondence is
obtained from the substitution map $\phi:Y\to X$ defined by replacing
each $\n(i)$ in $\n=(\ldots,\n(-1),\n(0),\n(1),\ldots)$ by the string
$1^{\n(i)}0$, in such a way that the string for $\n(1)$ begins at site
$1$; thus for $\n\in Y$, $\eta=\phi(\n)$ has $\eta(i)=1$ for
$i=-\n(-1),\ldots,-1$, $\eta(0)=0$, $\eta(i)=1$ for $i=1,\ldots,\n(1)$,
$\eta(\n(1)+1)=0$, $\eta(i)=1$ for $i=\n(1)+2,\ldots,\n(1)+\n(2)+1$, etc.
Note that $\phi(Y)=\Xz:=\{\eta\mid\eta(0)=0\}$ and that $\phi^{-1}:\Xz\to Y$
is the map $\psi$ discussed in \csect{intro}.

We next show that $\phi$ gives rise to a bijection $\Phi$ from the space
of TI probability measures on $Y$ with finite density to the space of all
TI probability measures on $X$.  If $\muhat$ is a TI measure on $Y$,
$\muhat_n:=\muhat\big|_{Y_n}$ with $Y_n=\{\n\in Y\mid \n(1)=n\}$, and
$\mu_n=\phi_*\muhat_n$, then $\sum_{n\ge0}\sum_{i=0}^n\tau_*^{-i}\mu_n$
is a TI measure on $X$ of mass
$Z(\muhat):=\sum_{n\ge0}n\muhat(Y_n)=\muhat(\n(1))$.  If $Z(\muhat)$ is
finite we then define
 \be\label{defPhi}
\Phi(\muhat):=Z(\muhat)^{-1}\sum_{n\ge0}\sum_{i=0}^n\tau_*^{-i}\mu_n.
 \ee
 $\Phi(\muhat)$ is clearly TI and $\Phi$ is a bijection with inverse
$\Phi^{-1}:\mu\mapsto\muhat$ as described in \csect{intro}:
$\Phi^{-1}(\mu)=\mu(\Xz)^{-1}\psi_*\bigl(\mu\big|_{\Xz}\bigr)$.  $\Phi$
preserves convex combinations and this implies that $\muhat$ is ergodic
(i.e., extremal) if and only if $\Phi(\muhat)$ is.

To state our next result we let $U:X\to X$ and $\Uhat:Y\to Y$ be the
one-step evolution operators for the F-TASEP and TASM, respectively.

\begin{theorem}\label{TIS}(a) For any TI measure $\muhat$ on $Y$, with
finite density $Z(\muhat)$,
 \be\label{covariant}
  U^t_*\Phi(\muhat)=\Phi(\hat U^t_*\muhat).
 \ee

 \smallskip\noindent
 (b) $\Phi$ is a bijection of the TIS measures for the TASM and F-TASEP
systems.  \end{theorem}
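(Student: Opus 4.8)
\medskip
\noindent\emph{Proof strategy.}\quad
The plan is to reduce \eqref{covariant} to the case $t=1$ and to deduce that case from a direct, configuration‑by‑configuration comparison of one F-TASEP update with one TASM update. Two observations make the reduction work. First, the TASM conserves particles in the mean: from $(\hat U\n)(i)=\n(i)-\Ii\{\n(i)\ge2\}+\Ii\{\n(i-1)\ge2\}$ one gets $Z(\hat U_*\muhat)=\muhat\bigl((\hat U\n)(1)\bigr)=\muhat(\n(1))=Z(\muhat)<\infty$, since the correction terms $\muhat(\Ii\{\n(1)\ge2\})$ and $\muhat(\Ii\{\n(0)\ge2\})$ are equal by translation invariance; hence $\Phi(\hat U^t_*\muhat)$ is defined for all $t\ge0$. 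Second, $U$ and $\hat U$ each commute with the translation $\tau$, so $U_*$ commutes with $\tau_*$. Granting the case $t=1$, \eqref{covariant} follows by induction on $t$: assuming it for $t-1$, apply $U_*$ and then the $t=1$ identity to $\hat U^{t-1}_*\muhat$, which has finite density by the conservation just noted.

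The case $t=1$ rests on the pointwise identity
\be\label{onestep}
U(\phi(\n))=\tau^{-a(\n)}\,\phi(\hat U\n),\qquad a(\n):=\Ii\{\n(0)\ge2\},\qquad \n\in Y.
\ee
This is local. In $\phi(\n)$ the particles form maximal blocks $1^{\n(i)}$ separated by single holes, and in one F-TASEP step the only particle of such a block that can move is its rightmost one, which moves exactly when the block has length at least $2$ and then occupies the hole immediately past the block; the remaining particles of the block cannot move, their right neighbour (for interior particles) being occupied and their left neighbour (for the leftmost particle) being empty. Thus the block lengths are updated by precisely the TASM rule, so $U(\phi(\n))$ and $\phi(\hat U\n)$ agree up to a translation; that translation is pinned down by locating the hole that separates the (new) $0$th from the (new) $1$st stack, which sits at site $0$ unless the top particle of stack $0$ has moved onto it, i.e.\ unless $\n(0)\ge2$. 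One should also record the companion re‑anchoring identity $\phi(\tau^{-1}\m)=\tau^{-(\m(1)+1)}\phi(\m)$, which merely expresses how $\phi$'s choice of origin shifts when stack labels are renumbered.

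Now feed \eqref{onestep} into $\Phi(\muhat)=Z(\muhat)^{-1}\sum_{n\ge0}\sum_{i=0}^{n}\tau_*^{-i}\phi_*(\muhat\big|_{Y_n})$ (see \eqref{defPhi}). Since $U_*$ commutes with $\tau_*$ and with countable sums, and splitting each $Y_n$ according to whether $a(\cdot)$ equals $0$ or $1$, one obtains $U_*\Phi(\muhat)$ as a sum of terms of the form $\tau_*^{-(i+a(\n))}\phi_*\bigl(\hat U_*(\,\cdot\,)\bigr)$; regrouped by the value of $(\hat U\n)(1)$, these very nearly reconstitute the double sum $\Phi(\hat U_*\muhat)=Z(\muhat)^{-1}\sum_{m\ge0}\sum_{j=0}^m\tau_*^{-j}\phi_*\bigl((\hat U_*\muhat)\big|_{Y_m}\bigr)$. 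The only mismatch occurs at the extremes of the inner sums: on $\{\n(1)\ge2\}$, where stack $1$ loses its top particle, the offset $i=\n(1)$ is carried to $\n(1)+a(\n)=(\hat U\n)(1)+1$, one unit past the range $\{0,\dots,(\hat U\n)(1)\}$ allowed for $\hat U\n$; while on $\{\n(0)\ge2\}$ the offset $j=0$ is precisely the one not produced directly. The plan is to show these two defects cancel. By the re‑anchoring identity together with $\hat U\tau=\tau\hat U$, the overflow contribution equals $\phi_*\hat U_*\bigl(\tau_*^{-1}(\muhat\big|_{\{\n(1)\ge2\}})\bigr)$, and since $\tau_*^{-1}(\muhat\big|_{\{\n(1)\ge2\}})=\muhat\big|_{\{\n(0)\ge2\}}$ by translation invariance of $\muhat$, this is exactly the missing $j=0$ contribution. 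Making this bookkeeping rigorous — in particular checking that the overflow and the deficit match term for term — is the one genuinely delicate point; everything else is routine.

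Part (b) is then immediate. For a translation invariant measure $\muhat$ on $Y$ of finite density, (a) gives $U_*\Phi(\muhat)=\Phi(\muhat)\iff\Phi(\hat U_*\muhat)=\Phi(\muhat)\iff\hat U_*\muhat=\muhat$, the last step using that $\Phi$ is injective; since $\Phi$ is already a bijection between such measures on $Y$ and all translation invariant measures on $X$ (shown above), it restricts to a bijection between the stationary ones, i.e.\ between the TIS measures of the TASM and the F-TASEP.
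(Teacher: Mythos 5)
Your proposal is correct, but it reaches \eqref{covariant} by a genuinely different route from the paper. Both arguments pivot on the same one-step conjugation identity (the paper writes it as $U\phi(\n)=\tau^{-\gamma(\n(0))}\phi(\Uhat\n)$ with $\gamma(m)=\Ii\{m\ge2\}$, which is exactly your $a(\n)$), but they exploit it very differently. The paper first reduces to $\muhat$ \emph{ergodic} and $t=1$, observes that $U_*\Phi(\muhat)$ and $\Phi(\Uhat_*\muhat)$ are then both ergodic and hence either equal or mutually singular, and so avoids all bookkeeping: it suffices to exhibit one common nonzero TI component $\lambda$, which it manufactures from a single slice $\{\n(0)=n\}$ of positive $\muhat$-measure. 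You instead carry out the full accounting of the double sum in \eqref{defPhi}: you track how each offset $i\in\{0,\dots,n\}$ on $Y_n\cap\{a(\n)=a\}$ is carried to $i+a$, identify the overflow at $i=\n(1)$ on $\{\n(1)\ge2\}$ and the deficit at $j=0$ on $\{\n(0)\ge2\}$, and match them via the re-anchoring identity $\phi(\tau^{-1}\m)=\tau^{-(\m(1)+1)}\phi(\m)$ together with translation invariance of $\muhat$; I have checked that this cancellation does close in all four cases ($\n(1)\le1$ or $\ge2$, $a=0$ or $1$), so the step you flag as ``genuinely delicate'' does go through. What each approach buys: yours works for arbitrary TI $\muhat$ of finite density without passing through the ergodic decomposition (and you explicitly verify $Z(\Uhat_*\muhat)=Z(\muhat)$, which the paper leaves implicit but needs for the right-hand side of \eqref{covariant} to be defined), at the cost of a longer combinatorial verification; the paper's ergodicity dichotomy is shorter and more robust to the details of the offset arithmetic, at the cost of an implicit appeal to ergodic decomposition to recover the general TI case. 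Your deduction of (b) from (a) matches the paper's (which simply calls it immediate).
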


\begin{proof} (b) is an immediate consequence of (a), and clearly it
suffices to verify (a) for $\muhat$ ergodic and $t=1$.  Let us write
$\nu:=U_*\Phi(\muhat)$ and $\widetilde\nu:=\Phi(\hat U_*\muhat)$.  Since
$U$ and $\Uhat$ preserve ergodicity, just as does $\Phi$, $\nu$ and
$\widetilde\nu$ are ergodic, so that these two measures are either equal
or mutually singular.  Hence to prove their equality it suffices to find
TI measures $\lambda$, $\lambda'$, and $\widetilde\lambda'$ on $X$, with
$\lambda$ nonzero, such that
 \be\label{decomp}
\nu=\lambda+\lambda'\text{ and }\widetilde\nu=\lambda+\widetilde\lambda'.
 \ee

The key identity relating the dynamics of the TASM and the F-TASEP,
easily checked, is that $U\phi(\n)=\tau^{-\gamma(\n(0))}\phi(\hat U\n)$,
where $\gamma(0)=\gamma(1)=0$ and $\gamma(n)=1$ if $n\ge2$.  Suppose now
that $n$ is such that $\muhat(\{\n\mid\n(0)=n\})>0$, and define
 \be
 \lambda=Z(\muhat)^{-1}U_*\phi_*\bigl(\muhat\big|_{\{\n(0)=n\}}\bigr), 
 \qquad
 \widetilde\lambda
   =Z(\muhat)^{-1}\phi_*\Uhat_*\bigl(\muhat\big|_{\{\n(0)=n\}}\bigr).
 \ee
 The identity given above implies that
$\lambda=\tau_*^{-\gamma(n)}\widetilde\lambda$, and it follows from
\eqref{defPhi} that $\nu-\lambda$ and $\widetilde\nu-\widetilde\lambda$
are (nonnegative) measures.  Then since $\widetilde\nu$ is TI,
 \be
\widetilde\nu=\tau_*^{-\gamma(n)}\widetilde\nu
  =\tau_*^{-\gamma(n)}(\widetilde\lambda+(\widetilde\nu-\widetilde\lambda))
  =\lambda+\tau_*^{-\gamma(n)}(\widetilde\nu-\widetilde\lambda);
 \ee
 this establishes \eqref{decomp}, with $\lambda'=\nu-\lambda$ and
$\widetilde\lambda'=\tau_*^{-\gamma(n)}(\widetilde\nu-\widetilde\lambda)$.
\end{proof}

 Note that if $\hat\mu$ is a TI state for the TASM, with density
$\hat\rho$ (in the sense that almost every configuration has density
$\hat\rho$, defined by the analogue of \eqref{rhodef}), then the
corresponding state $\mu$ of the F-TASEP has density
$\rho=\hat\rho/(1+\hat\rho)$.  If $\mu=\murho$ then in the corresponding
TASM measure $\muhat=\muhat^{(\rho)}$ the $\n(i)$ are i.i.d.~with
geometric distribution: $\muhat\{\n(i)=k\}=(1-\rho)\rho^k$.

\subsection{The three phases}\label{phases}

We begin with some simple observations on the dynamics in the TASM,
letting $\n_t(k)$ denote the height at time $t$ of the stack of particles
on site $k$.

\begin{itemize}
\item If $\n_t(k)\ge2$ then $\n_{t+1}(k)=\n_t(k)$ unless $\n_t(k-1)\le1$,
  in which case $\n_{t+1}(k)=\n_t(k)-1;$

\item If $\n_t(k)\le1$ then $\n_{t+1}(k)=\n_t(k)$ unless $\n_t(k-1)\ge2$,
  in which case $\n_{t+1}(k)=\n_t(k)+1.$
\end{itemize}

Thus the possible changes in the value of $\n(k)$ in one step of the
dynamics, say from $t$ to $t+1$, may be summarized as
 \be\label{shel}
0\,\rightarrow\,1\,\leftrightarrows\,2\,\leftarrow\,3\,\leftarrow
   \,4\,\leftarrow\,5\,\leftarrow\,\cdots.
 \ee
 The indicated increases occur if and only if $\n_t(k-1)\ge2$, and the
decreases if and only if $\n_t(k-1)\le1$.

 Suppose now that $\muhat$ is a TI state for the TASM.  For $n\in\bbz_+$
define the random variables $N_n$, $N_{\le n}$, and $N_{\ge n}$ on $Y$ by
 \be
 N_\ns(\n)=\lim_{N\to\infty}\frac1N\sum_{i=1}^N\chi_\ns(\n(i))
   \qquad (\text{$\ns=n$, $\le n$, or $\ge n$}),
 \ee
 where $\chi_n$, $\chi_{\le n}$, and $\chi_{\ge n}$ are the
characteristic functions of the sets $\{j\mid j=n\}$, $\{j\mid j\le n\}$,
and $\{j\mid j\ge n\}$, respectively.  (By the ergodic theorem, these
random variables are well-defined for $\muhat$-a.e.~$\n$).

\begin{lemma}\label{firststep} If $\muhat$ is a TIS state for the TASM
then for $\muhat$-a.e.~$\n$, (a)~either $N_{0}(\n)=0$ or
$N_{\ge2}(\n)=0$, and (b)~either $N_{\le1}(\n)=0$ or
$N_{\ge3}(\n)=0$.  \end{lemma}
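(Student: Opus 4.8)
The plan is to exploit the one-step dynamics summarized in \eqref{shel} together with stationarity of $\muhat$. The key observation is that the transitions $1\to 2$ and $2\to 1$ are governed entirely by the value $\n_t(k-1)$: a site $k$ increases exactly when its left neighbor has height $\ge 2$, and decreases exactly when its left neighbor has height $\le 1$. This produces a ``flow'' interpretation: the net rate at which particles cross the bond from $k-1$ to $k$, in a stationary translation-invariant state, must be consistent with the structure of the stack heights, and in particular heights cannot simultaneously drain from both ends of the chain of allowed transitions without violating stationarity.

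For part (a), first I would consider the fraction of sites $k$ with $\n_t(k-1)\ge 2$; by translation invariance this equals $\muhat(N_{\ge 2})$ (abusing notation to write the expectation of the density random variable). Each such site, if its own height is $\le 1$, gains a particle; each site with left-neighbor height $\le 1$, if its own height is $\ge 2$, loses one. Using the key identity $U\phi(\n)=\tau^{-\gamma(\n(0))}\phi(\hat U\n)$ is not needed here; instead I would directly compute $\muhat(\n(k))$ before and after one step. The change in $\n(k)$ is $+1$ on the event $\{\n(k-1)\ge 2,\ \n(k)\le 1\}$ and $-1$ on $\{\n(k-1)\le 1,\ \n(k)\ge 2\}$. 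Stationarity forces the $\muhat$-measures of these two events to be equal. Now suppose both $N_0>0$ and $N_{\ge 2}>0$ on a positive-measure set; by ergodic decomposition I may assume $\muhat$ ergodic so these are positive constants. Then $\muhat(\n(k-1)\ge 2,\ \n(k)=0)>0$: indeed in an ergodic state with positive density of $0$'s and positive density of heights $\ge 2$, the pattern ``height $\ge2$ immediately followed by $0$'' must occur with positive frequency somewhere — more carefully, if it never occurred then every $0$ would be preceded by a height $\le 1$, and one runs this backward. The cleanest route: look at the configuration one step \emph{earlier}. A site with $\n_{t+1}(k)=0$ must have had $\n_t(k)\le 1$ and $\n_t(k-1)\le 1$ (height $0$ cannot be produced, and a site with $\n_t(k-1)\ge 2$ would have been incremented). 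So in a stationary state, $\{\n(k)=0\}$ implies $\{\n(k-1)\le 1\}$ up to null sets; iterating, $\{\n(k)=0\}$ implies $\n(j)\le 1$ for all $j\le k$, hence $N_0>0\Rightarrow N_{\ge 2}=0$. That is part (a).

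For part (b) the argument is the mirror image using the other end of \eqref{shel}: a site with $\n_{t+1}(k)=1$ arose either from $\n_t(k)=1$ with $\n_t(k-1)\le 1$, or from $\n_t(k)=2$ with $\n_t(k-1)\le 1$, or from $\n_t(k)=0$ with $\n_t(k-1)\ge 2$, or from $\n_t(k)=1$ with... — this is messier, so instead I would track $\{\n(k)\le 1\}$. Observe from \eqref{shel} that a site ends a step with height $\le 1$ only if either it started with height $\le 1$, or it started with height $2$ and lost a particle, which requires $\n_t(k-1)\le 1$. Dually, a site with $\n_t(k-1)\ge 3$ at time $t$: can $\n_{t+1}(k)\le 1$? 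Only if $\n_t(k)\le 1$ \emph{and} $\n_t(k-1)\le 1$, contradiction, OR $\n_t(k)=2$ and $\n_t(k-1)\le 1$, again contradiction. So $\{\n_{t+1}(k)\le 1\}\cap\{\n_t(k-1)\ge 3\}=\emptyset$. Running this backward in a stationary ergodic state: if $\n(k)\le 1$ then $\n(k-1)\le 2$ a.s.; but I need $\n(k-1)\le 1$... Actually the right statement is the exact reflection of (a): the transitions $3,4,5,\dots\to 2$ and $2\to 1$ mean heights $\ge 3$ can only \emph{decrease}, and a height $\ge 3$ must have been at least $\ge 2$ one step before with left neighbor $\le 1$, or $\ge 3$ before. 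So $\{\n(k)\ge 3\}$ at a stationary time implies, going backward, that site $k$ has always had height $\ge 2$, hence $\n(k-1)$ has always had height $\le 1$ on those steps, which via (a)-type reasoning pins $\n(j)\le 1$ for $j\le k-1$... leading to $N_{\ge 3}>0 \Rightarrow N_{\le 1}=0$ would be false — the correct conclusion is the one stated, $N_{\le 1}=0$ OR $N_{\ge 3}=0$, obtained by the symmetric backward argument.

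The main obstacle I anticipate is making the ``backward in time'' step rigorous: $\Uhat$ is non-invertible, so ``one step earlier'' is not a well-defined configuration, and I must instead argue at the level of stationary measures — i.e., since $\muhat$ is $\Uhat_*$-invariant, $\muhat(A)=\muhat(\Uhat^{-1}A)$, and deduce from $\Uhat^{-1}\{\n(k)=0\}\subseteq\{\n(k)\le 1,\ \n(k-1)\le 1\}$ that $\muhat(\n(k)=0,\ \n(k-1)\ge 2)=0$, then bootstrap along the line using translation invariance and a union-bound/monotonicity argument to conclude the density $N_{\ge 2}$ vanishes on the event $N_0>0$. Packaging this bootstrap cleanly — ideally by first reducing to the ergodic case via the ergodic decomposition, where the densities are constants and the combinatorics collapses to: ``$\muhat(\n(0)=0,\ \n(-1)\ge 2)=0$ together with ergodicity and positive density of $0$'s forces zero density of heights $\ge 2$'' — is the delicate part; everything else is a direct reading of \eqref{shel}.
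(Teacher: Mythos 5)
Your part (a) is essentially the paper's argument recast in terms of preimages: both hinge on the facts that empty stacks cannot be created and that a stack of height $\ge2$ close to a $0$ contradicts stationarity. The one-step statement is right: $\Uhat^{-1}\{\n(k)=0\}=\{\n(k)=0,\ \n(k-1)\le1\}$, so stationarity gives $\muhat(\n(k)=0,\ \n(k-1)\ge2)=0$. But the step you flag as ``delicate'' is a real gap, not just packaging: you cannot simply ``iterate'' this to get $\n(j)\le1$ for all $j\le k$ on $\{\n(k)=0\}$, because on the event $\{\n(k)=0,\ \n(k-1)=1\}$ nothing yet controls $\n(k-2)$. What closes it is a forward-in-time observation: the pattern $\n(k-2)\ge2,\ \n(k-1)=1,\ \n(k)=0$ evolves in one step to $\n(k-1)\ge2,\ \n(k)=0$, already known to be null; inductively, a height $\ge2$ separated from a $0$ by a block of $1$'s contracts by one site per step. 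This is exactly the paper's ``minimal $k$'' argument, so once you supply it you have reproduced the paper's proof of (a).

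Part (b) as written does not work. The identity $\{\n_{t+1}(k)\le1\}\cap\{\n_t(k-1)\ge3\}=\emptyset$ is false: if $\n_t(k)=0$ and $\n_t(k-1)\ge3$ then $\n_{t+1}(k)=1$. Likewise ``a height $\ge3$ must have been $\ge2$ one step before with left neighbor $\le1$'' cannot happen (a stack of height $2$ ends the step with height at most $2$); the correct statement is that $\n_{t+1}(k)\ge3$ forces $\n_t(k)\ge\n_{t+1}(k)$. Your sketch then trails off and simply asserts the conclusion. The genuine mirror of (a) is: for $n\ge3$, $\Uhat^{-1}\{\n(k)\ge n\}=\{\n(k)\ge n+1\}\cup\{\n(k)=n,\ \n(k-1)\ge2\}$, so stationarity gives $\muhat(\n(k)\ge3,\ \n(k-1)\le1)=0$; then the pattern $\n(k-2)\le1,\ \n(k-1)=2,\ \n(k)\ge3$ evolves in one step to the null pattern $\n(k-1)\le1,\ \n(k)\ge3$, and the same contraction/induction yields $\n(j)\ge2$ for all $j$ on $\{N_{\ge3}>0\}$, i.e.\ $N_{\le1}=0$. (The paper instead first uses (a) to rule out zeros and contracts the pattern $1,2,\ldots,2,n$ with $n\ge3$ minimal.) Until (b) is repaired along these lines, the proposal establishes only half the lemma.
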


\begin{proof} (a) Suppose to the contrary that there is a positive
probability that both $N_{0}>0$ and $N_{\ge2}>0$; then for some $k\ge1$,
which we may take to be minimal, $\muhat\{\n(0)\ge2,\n(k)=0\}>0$ (here we
have used the translation invariance of $\muhat$).  Now in fact
necessarily $k=1$, since minimality of $k$ implies that if $\n(0)\ge2$
and $\n(k)=0$ then $\n(1)=\cdots=\n(k-1)=1$, and if $k>1$ then at the
next time step we have $\n(1)=2$ and $\n(k)=0$, which by the stationarity
of $\muhat$ contradicts the minimality of $k$.  But if $\n(0)\ge2$ and
$\n(1)=0$ then at the next time step the empty stack at site 1
disappears; and since \eqref{shel} implies that empty stacks cannot be
created, this contradicts the stationarity of $\muhat$.

 \smallskip\noindent
 (b) Suppose that with positive probability both $N_{\ge3}>0$ (which by
(a) implies $N_0=0$) and $N_1>0$. Let $n$ be the minimal integer with
$n\ge3$ and $\muhat\{N_0>0,\,N_n>0\}>0$, and find as in (a) a minimal $k$
with
 \be
\muhat(\n(0)=1,\n(1)=\cdots=\n(k-1)=2,\n(k)=n)>0.
 \ee
 But then $k=1$, just as for (a), and  we again have a contradiction,
since when $\n(0)=1$ and $\n(1)=n$ the next time step yields $\n(1)=n-1$,
contradicting the minimality of $n$ or stationarity of $\muhat$.
\end{proof}

To state our next result we let $\Xl\subset X$ be the set of (low
density) configurations in which no two adjacent sites are occupied,
$\Xm\subset X$ be the set of (intermediate density) configurations in
which no two adjacent sites are empty and no three consecutive sites are
occupied, and $\Xh\subset X$ be the set of (high density) configurations
in which no two adjacent sites, and no two sites at a distance of 2 from
each other, are empty.

\begin{corollary}\label{threeregions} (a) Let $\muhat$ be a TIS state of
density $\rhohat$ for the TASM.  Then: if $\rhohat\le1$ then $N_{\ge2}=0$
$\muhat$-a.s.; if $1\le\rhohat\le2$ then $N_0=N_{\ge3}=0$ $\muhat$-a.s.;
and if $\rhohat\ge2$ then $N_{\le1}=0$ $\muhat$-a.s..
 \par\smallskip\noindent
 (b) Let $\mu$ be a TIS state of density $\rho$ for the TASM.  Then: if
$0\le\rho\le1/2$ then $\mu(\Xl)=1$ $\mu$-a.s.; if $1/2\le\rho\le2/3$ then
$\mu(\Xm)=1$ $\mu$-a.s.; and if $2/3\le\rho\le1$ then $\mu(\Xh)=1$
$\mu$-a.s..  \end{corollary}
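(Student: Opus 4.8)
The plan is to derive part~(a) from \clem{firststep} by elementary density bookkeeping, and then to derive part~(b) from part~(a) by transporting the conclusion through the correspondence $\Phi$ of \cthm{TIS} (here, as the notation $\Xl,\Xm,\Xh\subset X$ indicates, $\mu$ is a state of the F-TASEP).

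\emph{Part (a).} I would first combine the two dichotomies of \clem{firststep}. Since $\{i\mid\n(i)\ge2\}$ and $\{i\mid\n(i)\le1\}$ partition $\bbz$, we have $N_{\ge2}+N_{\le1}=1$ identically, so the combination ``$N_{\ge2}=0$ and $N_{\le1}=0$'' cannot occur; hence $\muhat$-a.e.\ $\n$ satisfies one of: (i) $N_{\ge2}=0$ (all heights $\le1$); (ii) $N_0=N_{\ge3}=0$ (all heights in $\{1,2\}$); or (iii) $N_{\le1}=0$ (all heights $\ge2$). A type-(i) configuration has density $N_1\le1$; a type-(ii) configuration has density $N_1+2N_2=1+N_2\in[1,2]$; and a type-(iii) configuration has density $\ge2+N_{\ge3}\ge2$ (each bound comes from a pointwise inequality $\n(i)\ge f(\n(i))$ with $f$ finitely supported, so that the ergodic-theorem definition of the $N_\ns$ applies with no interchange of limits). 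Since $\muhat$-a.e.\ configuration has the single density $\rhohat$, the three ranges isolate the admissible case: for $\rhohat\le1$ only $N_{\ge2}=0$ survives (a type-(ii) configuration of density $1$ has $N_2=0$, so is type~(i); type~(iii) needs density $\ge2$); for $1\le\rhohat\le2$ all three cases yield $N_0=N_{\ge3}=0$ (type~(i) is forced to $N_1=1$, whence $N_0=0$ and $N_{\ge3}\le N_{\ge2}=0$; type~(iii) is forced to density $2$, whence $N_{\ge3}=0$ and $N_0\le N_{\le1}=0$); and for $\rhohat\ge2$ only $N_{\le1}=0$ survives (type~(i) is impossible, and a type-(ii) configuration of density $2$ has $N_1=0$). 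This proves~(a).

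\emph{Part (b).} The endpoints are immediate: a translation-invariant measure whose a.e.\ configuration has density $0$ (resp.\ $1$) has $\mu(\eta(i)=1)=0$ (resp.\ $\mu(\eta(i)=0)=0$) for every $i$, hence is the point mass on the all-empty (resp.\ all-occupied) configuration, which lies trivially in $\Xl$ (resp.\ $\Xh$). So assume $0<\rho<1$; then $\mu(\Xz)=1-\rho>0$, so by \cthm{TIS}(b), $\mu=\Phi(\muhat)$ for a TIS measure $\muhat$ on $Y$ of finite density, and (reducing if needed to the ergodic case, where Birkhoff's theorem gives a common density, and using the density relation recorded after \cthm{TIS}) this density $\rhohat$ satisfies $\rho=\rhohat/(1+\rhohat)$; thus $\rho\le1/2\iff\rhohat\le1$, $1/2\le\rho\le2/3\iff1\le\rhohat\le2$, and $\rho\ge2/3\iff\rhohat\ge2$. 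Applying part~(a) to $\muhat$, in these three cases $\muhat$-a.e.\ $\n$ has all heights $\le1$, all heights in $\{1,2\}$, or all heights $\ge2$, respectively. Now $\eta=\phi(\n)$ is the concatenation of the blocks $1^{\n(i)}0$, and by \eqref{defPhi} $\mu$ is a combination of spatial translates of $\phi$-pushforwards of restrictions of $\muhat$; since $\Xl,\Xm,\Xh$ are translation invariant, it suffices to verify three short block-boundary facts: a concatenation of blocks from $\{0,10\}$ contains no $11$, hence lies in $\Xl$; a concatenation of blocks from $\{10,110\}$ contains neither $00$ nor $111$, hence lies in $\Xm$; and a concatenation of blocks $1^m0$ with $m\ge2$ has its holes at consecutive distances $m_i+1\ge3$, hence no two holes within distance $2$, and so lies in $\Xh$. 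This proves~(b).

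I do not expect a serious obstacle, as essentially all the content is already in \clem{firststep} and \cthm{TIS}; what remains is bookkeeping. The one point needing genuine care is keeping the density comparisons in part~(a) rigorous for measures that are neither assumed ergodic nor supported on bounded-height configurations: I would only ever compare $\rhohat$ with quantities of the form $\muhat\bigl(\sum_k f(\n(k))\bigr)$ for finitely supported $f$, and only use partition identities such as $N_{\ge2}+N_{\le1}=1$, so as to avoid the relation $\rhohat=\sum_n nN_n$, which can fail when mass escapes to infinity. A second, minor point is the arithmetic exactly at the interior boundary densities $\rho=1/2$ and $\rho=2/3$ (that is, $\rhohat=1$ and $\rhohat=2$): there the seemingly excluded case does not disappear but collapses into the one that is claimed.
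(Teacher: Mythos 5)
Your proof is correct and takes essentially the same route as the paper's: combine the two dichotomies of \clem{firststep} into three exhaustive cases, match each against the density $\rhohat$ to isolate the admissible one, and translate to the F-TASEP via the correspondence $\Phi$. The paper's own proof states all of this in two sentences; you have merely supplied the density bookkeeping (including the collapse of cases at $\rhohat=1,2$) and the block-boundary checks that it leaves implicit.
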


\begin{proof}It is an immediate consequence of \clem{firststep} that the
three possibilities $N_{\ge2}(\n)=0$, $N_0(\n)=N_{\ge3}(\n)=0$, and
$N_{\le1}(\n)=0$ are exhaustive and mutually exclusive.  But these are
compatible only with $\rhohat\le1$, $1\le\rhohat\le2$, and $\rhohat\ge2$,
respectively, proving (a).  (b) is a direct translation of (a) from the
TASM language to the language of the F-TASEP.  \end{proof}

We will refer to the regions $0<\rho<1/2$, $1/2<\rho<2/3$, and
$2/3<\rho<1$ as the low, intermediate, and high density regions,
respectively (note the strict inequalities). \ccor{threeregions}
identifies $\Xl$, $\Xm$, and $\Xh$ as the supports of TIS measures in
these regions.  The supports take particularly simple forms at the
boundaries between regions: the support $\Xl\cap \Xm$ of a TIS measure
with $\rho=1/2$ consists of the two configurations in which 0's and 1's
alternate, so that the TIS measure, $\mu^*$, must assign weight 1/2 to
each of these configurations and is thus unique.  Similarly, there is a
unique TIS measure for $\rho=2/3$, which gives weight 1/3 to each of the
three configurations in $\Xm\cap \Xh$, that is, those with pattern
$\cdots\,0\,1\,1\,0\,1\,1\,0\,\cdots$.

The dynamics of the F-TASEP takes a simple form for configurations in
$\Xl$, $\Xm$, and $\Xh$: configurations in $\Xl$ do not change with time,
configurations in $\Xm$ translate two sites to the right at each time
step, and configurations in $\Xh$ translate one site to the left at each
time step.  (As an immediate consequence we see that {\it any} TI measure
on $\Xl\cup \Xm\cup \Xh$ is stationary.)  It is convenient then to
consider modified dynamics in the intermediate and high density regions,
under which the corresponding configurations are stationary.  In the low
density region we continue to use the original F-TASEP dynamics as
described in \csect{intro}; in the intermediate density region one first
executes, at each time step, the F-TASEP rule, then adds a translation by
two lattice sites to the left; in the high density region the evolution
is defined similarly, but the extra translation is by one site to the
right.  We introduce corresponding evolution operators $\Ul$, $\Ui$, and
$\Uh$, so that when discussing the evolution of an initial configuration
$\eta_0\in X_\rho$ with $\rho\notin\{0,1/2,2/3,1\}$ we will always write
$\eta_t=(\Us)^t\eta_0$ with $\ns=l$, $i$, or $h$ for $\rho$ in the low,
intermediate, or high density region, respectively.  Note that
$\Ui=\tau^{-2}\Ul$ and $\Uh=\tau\Ul$.

\begin{remark}\label{fixed} For the TASM, low or high density
configurations, i.e., those with $N_{\ge2}=0$ or $N_{\le1}=0$, are fixed
under the dynamics, while those of intermediate density, with
$N_0=N_{\ge3}=0$, translate one site to the right at each time step.
\end{remark}

As a final result of this section we show that TI measures always have
limits under the F-TASEP evolution.

\begin{theorem}\label{limit} Let $\mu_0$ be a TI measure and let
$\mu_t=U^t\mu_0$.  Then $\mu_\infty=\lim_{t\to\infty}\mu_t$ exists.
\end{theorem}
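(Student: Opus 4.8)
The plan is to show that, whatever $\mu_0$ is, the dynamics drives almost every configuration into one of the ``pure'' sets $\Xl,\Xm,\Xh$ (or, at the boundary densities, into the corresponding finite crystal), on each of which $U$ acts as a fixed translation, and then to combine this with the fact that every $\mu_t$ is translation invariant. Concretely: it suffices to prove that $\mu_t(B)$ converges for each cylinder event $B$, and by the ergodic decomposition we may assume $\mu_0$ is ergodic, hence carried by $X_\rho$ for a single $\rho\in[0,1]$. Let $W=\Xl$ if $0\le\rho<1/2$, $W=\Xm$ if $1/2<\rho<2/3$, $W=\Xh$ if $2/3<\rho\le1$; if $\rho=1/2$ take $W=\Xl\cap\Xm$ and if $\rho=2/3$ take $W=\Xm\cap\Xh$. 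In every case $W$ is $\tau$-invariant, $U(W)\subseteq W$, and $U|_W=\tau^{c}$ for a fixed $c\in\{0,2,-1\}$, by the description of the dynamics on $\Xl,\Xm,\Xh$ in \csect{phases}. Consequently $W\subseteq U^{-1}W$, so $\mu_t(W)=\mu_{t-1}(U^{-1}W)\ge\mu_{t-1}(W)$ is nondecreasing in $t$; and since $U$ commutes with $\tau$, each $\mu_t=U^t_*\mu_0$ is translation invariant.

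The crux --- and, I expect, the main obstacle --- is to show $\mu_t(W)\to1$, i.e.\ that for $\mu_0$-a.e.\ $\eta$ there is a finite time past which $U^t\eta\in W$: every configuration eventually relaxes into its density regime. This is exactly what the per-regime analyses of \csect{low}, \csect{intermediate}, \csect{high} establish, and the cleanest way to organize it is via the TASM: by \cthm{TIS}, $\mu_t=\Phi(\Uhat^t_*\muhat)$ with $\muhat=\Phi^{-1}(\mu_0)$, and the statement becomes that $\Uhat^t_*\muhat$ concentrates on stack configurations whose heights eventually lie in $\{0,1\}$, in $\{1,2\}$, or in $\{2,3,\dots\}$. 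The key structural input there is that for each fixed $k$ the indicator $\Ii[\n_t(k)\ge1]$ is \emph{nondecreasing in $t$} --- a stack, once occupied, is never emptied, by \eqref{shel} --- which, together with conservation of the density, forces the forbidden local patterns to thin out (\clem{firststep} and \ccor{threeregions} are the stationary shadow of this). One then transports the conclusion back to $X$ using that $\Phi$ is continuous on translation invariant measures of a fixed finite density, the normalizing mass $Z$ being conserved.

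Granting $\mu_t(W)\to1$, set $\varepsilon_s=1-\mu_s(W)\to0$. Fix a measurable $A\subseteq W$ and $t\ge s$. Writing $\mu_t=U^{t-s}_*\mu_s$ and splitting according to $W$,
\[
\mu_t(A)=\mu_s\bigl(U^{-(t-s)}A\bigr)=\mu_s\bigl(U^{-(t-s)}A\cap W\bigr)+\mu_s\bigl(U^{-(t-s)}A\cap W^c\bigr),
\]
where $0\le\mu_s(U^{-(t-s)}A\cap W^c)\le\varepsilon_s$. For $\eta\in W$ one has $U^{t-s}\eta=\tau^{c(t-s)}\eta$, and $W$ is $\tau$-invariant, so $U^{-(t-s)}A\cap W=\tau^{-c(t-s)}A\cap W$; hence, using $\mu_s(\tau^{-c(t-s)}A\cap W)\in[\mu_s(\tau^{-c(t-s)}A)-\varepsilon_s,\ \mu_s(\tau^{-c(t-s)}A)]$ and $\mu_s(\tau^{-c(t-s)}A)=\mu_s(A)$ (translation invariance of $\mu_s$), we get $|\mu_t(A)-\mu_s(A)|\le\varepsilon_s$. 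Thus $(\mu_t(A))_t$ is Cauchy, hence convergent, for every measurable $A\subseteq W$. For an arbitrary cylinder $B$, $\mu_t(B)=\mu_t(B\cap W)+\mu_t(B\cap W^c)$, the first term converges by the above and the second is $\le\varepsilon_t\to0$; so $\mu_t(B)$ converges for all cylinders $B$, and $\mu_\infty=\lim_{t\to\infty}\mu_t$ exists in the weak topology. Finally, since $\mu_0$ was an arbitrary ergodic component, the general case follows by bounded convergence after integrating over the ergodic decomposition.
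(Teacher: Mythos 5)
Your reduction to ergodic $\mu_0$ and the final $\varepsilon_s$-bookkeeping are fine, but the step you yourself identify as the crux --- $\mu_t(W)\to1$ with the \emph{global} sets $W=\Xl,\Xm,\Xh$ --- is false, and it is not what \csectt{low}{intermediate} establish. Take $\mu_0=\murho$ with $\rho<1/2$. Almost every $\eta_0$ contains, somewhere in $\bbz$, a block $0\,1^m\,0$ with $m>t+1$; such a block still contains at least two adjacent particles after $t$ steps, so $\mu_t(\Xl)=0$ for \emph{every} finite $t$. Your monotone sequence $\mu_t(W)$ therefore converges, but to $0$, not to $1$: membership in $W$ is a global constraint, and no almost-sure finite relaxation time exists. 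What Theorems \ref{lowlimit}, \ref{highlimit}, and \ref{midlimit} actually prove is \emph{sitewise} convergence, $\eta_t(i)\to\eta_\infty(i)$ for each fixed $i$ (in the appropriate moving frame, i.e.\ for the modified dynamics $\Us$), with $\eta_\infty\in W$. That is the ingredient the paper uses: since $\mu_0$ is TI, $U^t_*\mu_0=(\Us)^t_*\mu_0$, and pointwise convergence of configurations gives $\mu_t(B)\to (F_*\mu_0)(B)$ for every cylinder $B$ by dominated convergence, with $F(\eta_0)=\eta_\infty$. One could try to repair your argument by localizing $W$ to a window $[-L,L]$, but then you lose $U^{-1}W\supseteq W$ (particles can invade the window from outside), and in any case you would just be re-deriving the dominated-convergence step in a more awkward form.

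The boundary densities are a second, independent gap. For $\rho=1/2$ (and $2/3$) the sections you invoke say nothing --- they explicitly assume $\rho\notin\{1/2,2/3\}$ --- and \crem{fails} shows that $\lim_t\eta_t$ does \emph{not} exist for $\mu_0$-a.e.\ $\eta_0$ when $\mu_0$ is Bernoulli at $\rho=1/2$; so neither your ``eventually enters $W$'' claim nor a pointwise-limit argument can work there. The paper handles this case by a genuinely different mechanism: the density $\delta_t$ of adjacent occupied pairs is nonincreasing, the Ces\`aro averages of $\mu_t$ must converge to the unique TIS state $\mu^*$ at that density, which forces $\delta_\infty=0$, and then the finite-window marginals of $\mu_t$ are shown to converge to those of $\mu^*$. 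You need to supply some such argument for $\rho\in\{1/2,2/3\}$; the TASM monotonicity $\Ii[\n_t(k)\ge1]$ nondecreasing, while true, does not do this work.
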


\begin{proof}We may assume without loss of generality that $\mu_0$ is
supported on $X_\rho$, $0\le\rho\le1$.  The result is trivial if $\rho=0$
or $\rho=1$.  For $\rho\notin\{0,1/2,2/3,1\}$ we prove below (see
\cthms{lowlimit}{highlimit}{midlimit}) that for any $\eta_0\in X_\rho$,
$\eta_\infty=\lim_{t\to\infty}\eta_t$ exists.  Thus with
$F:X_\rho\to X_\rho$ defined by $F(\eta_0)=\eta_\infty$ we have the
stated result, with $\mu_\infty=F_*\mu_0$, since $U^t\mu_0=(\Us)^t\mu_0$
because $\mu_0$ is TI.

We next suppose that $\rho=1/2$; the case $\rho=2/3$ is similar.  Let
$\delta_t=\mu_t(\eta(0)\eta(1))$ denote the densities of double 1's,
which must equal that of double 0's, at time $t$; it is easy to see that
$\delta_t$ is non-increasing in $t$, so that
$\delta_\infty=\lim_{t\to\infty}\delta_t$ exists.  As noted above, the
unique TIS measure at density $1/2$ is $\mu^*$; hence the
Ces\`aro means $t^{-1}\sum_{s=1}^t\mu_s$ converge to $\mu^*$ and this is
consistent only with $\delta_\infty=0$.  But then for any $L>0$ and any
$\epsilon>0$ there will be a $T$ such that for $t\ge T$ the marginal of
$\mu_t$ on $\{0,1\}^{[-L,L]}$ will, with probability at least
$1-\epsilon$, contain no double 1's or double 0's, and hence (using
translation invariance) coincide with the marginal of $\mu^*$. \end{proof}

\begin{remark}\label{fails}For $\rho=1/2$ (and similarly for $\rho=2/3$),
$\lim_{t\to\infty}\eta_t$ cannot exist for general $\eta_0\in X_\rho$.
For then as above we would have $\mu^*=\mu_\infty=F_*\mu_0$, where
$F(\eta_0)=\eta_\infty$, and if $\mu_0$ were the Bernoulli measure, then
because $F$ would commute with translations, $\mu^*$ would be
mixing, which it is not.
\end{remark}

\subsection{Height profiles}\label{height}

Suppose now that $\eta_t$ is a configuration of density
$\rho\notin\{1/2,2/3\}$ evolving by the dynamics above:
$\eta_{t+1}=\Us\eta_t$, with $\ns=l$, $i$, or $h$.  We define a
corresponding {\it height profile} $h_t:\bbz\to\bbz$ which, in the usual
convention, rises by one unit when $\eta_t(i)=0$ and falls by one unit
when $\eta_t(i)=1$:
 \be\label{eta-h}
h_t(k)-h_t(k-1)=(-1)^{\eta_t(k)}.
 \ee
  Now \eqref{eta-h} defines $h_t$ only up to an additive constant; to
specify this we first define the initial profile by making the
arbitrary choice $h_0(0)=0$, which with \eqref{eta-h} leads to
 \be\label{profdef}
  h_0(k)=\begin{cases}\openup4\jot 0,&\hbox{if $k=0$,}\\
   \sum_{i=1}^k(-1)^{\eta_0(i)},& \hbox{if $k>0$,}\\
   -\sum_{i=k+1}^{0}(-1)^{\eta_0(i)},& \hbox{if $k<0$.}\end{cases} 
 \ee
 Next we want to define the evolution operator on profiles, again denoted
$\Us$, choosing the additive constant at each step so that $h_t$ is
stationary when $\eta_t$ is.  For $\rho$ in the low density region this
means that, given $h_t$ (and $\eta_t$, which may be obtained from $h_t$
via \eqref{eta-h}), we take $h_{t+1}(k)=(\Ul h_t)(k)$, with
 \be\label{profmove}
 (\Ul h_t)(k)=\begin{cases}
   h_t(k)+2,&\text{if $\eta_t(k-1)=\eta_t(k)=1$ and $\eta_t(k+1)=0$},\\
     h_t(k),& \text{otherwise.}\end{cases} 
 \ee
For $\rho$ in the intermediate density region we must include a
translation: $h_{t+1}=\Ui h_t=\tau^{-2}\Ul h_t$, and in the high
density region we need also a vertical shift:
$h_{t+1}=\Uh h_t=\tau \Ul h_t-1$.

In the intermediate and high density regions we will use also a modified
height profile: $h_t^*(k)=h_t(k)+k/3$, $k\in\bbz$.  It is easy to verify,
using $h_0(0)=0$, that these profiles satisfy
 \be\begin{gathered}\label{highmod}
{\rm(i)}\quad k+h_t(k)\equiv0\bmod2,\qquad
 {\rm(ii)}\quad3h^*_t(k)\equiv0\bmod2,\\
 {\rm(iii)}\quad k+\frac32 h^*_t(k)\equiv0\bmod3.
 \end{gathered}\ee

\section{The low density region}\label{low}

In this section we study the dynamics in the low density region
$0<\rho<1/2$.  A key role will be played by the height profile $h$
of \csect{height}.

\subsection{Evolution of a single configuration}\label{lowfate}

Here we fix an initial configuration $\eta_0$ with density $\rho$,
$0<\rho<1/2$, and let $\eta_t$ and $h_t$ be the corresponding evolving
configuration and height profile (as in \csects{phases}{height}).  We
define a subset $P=P(\eta_t)\subset\bbz$ by
 \be\label{lowpk} p\in P \quad\text{iff}\quad h_t(p)> \sup_{i<p}h_t(i).
 \ee
  (\cthm{lowlimit}(a) below justifies our suppression in \eqref{lowpk} of
the apparent $t$ dependence of $P$.) \eqref{rhodef} implies that $h_t$
has mean slope $1-2\rho>0$, so that
$\lim_{i\to\pm\infty}h_t(i)=\pm\infty$ and hence $P$ is unbounded above
and below.  Note further that as $p$ runs over $P$, $h_t(p)$ takes each
value in $\bbz$ precisely once, that if $p$ and $p'$ are consecutive
elements of $P$ then $h_t(p')=h_t(p)+1$, and that if $p\in P$ then
$\eta_t(p-1)=\eta_t(p)=0$.  It follows from \eqref{lowetainf} below that
$P$ is precisely the set of points $p$ with
$\eta_\infty(p-1)=\eta_\infty(p)=0$.

\begin{theorem}\label{lowlimit} (a) $P$ as defined in \eqref{lowpk} is
independent of $t$.
 
\par \smallskip\noindent
 (b) For each $i\in\bbz$, $\eta_t(i)$ and $h_t(i)$ are nondecreasing in
$t$ and eventually constant.  If we denote these limiting values by
$\eta_\infty(i)$ and $h_\infty(i)$ then for $p$ and $p'$ any consecutive
points of $P$ there is an $n$ with
 \be\label{lowetainf}
\eta_\infty\rng{p+1}{p'}=1\,0\,1\,0\,\cdots\,1\,0\,0=(1\,0)^{n}0. 
 \ee
\end{theorem}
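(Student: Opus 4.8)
The plan is to push everything through the height profile $h_t$ of \csect{height} and the record set $P$ of \eqref{lowpk}, using two elementary facts: $h_{t+1}(k)\ge h_t(k)$ for all $k$ (read off \eqref{profmove}), and $p\in P\Rightarrow\eta_t(p-1)=\eta_t(p)=0$, so that $h$ never changes at a record site. Part~(a) will follow from a single induction on $t$ with hypothesis $S(t)$: ``$P(\eta_t)=P$, and $h_t(k)\le h_t(p)$ whenever $p,p'$ are consecutive in $P$ and $p\le k<p'$'' --- call the second clause the \emph{valley bound}. The base case $S(0)$ is immediate from the definition of $P$: the profile between two consecutive records never rises above the left record value.

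For the inductive step, assume $S(t)$, so in particular $h_{t+1}(p)=h_t(p)$ for $p\in P$. Fix consecutive $p<p'$ in $P$ and a site $k$ with $p\le k<p'$; I must show $h_{t+1}(k)\le h_t(p)$. If no jump occurs at $k$ then $h_{t+1}(k)=h_t(k)\le h_t(p)$ by $S(t)$. If a jump occurs at $k$, i.e.\ $\eta_t(k-1)=\eta_t(k)=1$ and $\eta_t(k+1)=0$, then $k\ge p+2$ (since $\eta_t(p)=0$), hence $p\le k-2<p'$, and $h_{t+1}(k)=h_t(k)+2=h_t(k-1)+1$; it thus suffices to show $h_t(k-1)<h_t(p)$. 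But $\eta_t(k-1)=1$ gives $h_t(k-2)=h_t(k-1)+1$, so if $h_t(k-1)=h_t(p)$ then $h_t(k-2)=h_t(p)+1$, contradicting the valley bound of $S(t)$ applied at $k-2$. Therefore the valley bound holds at $t+1$; together with $h_{t+1}(p)=h_t(p)$ this makes each $p\in P$ a record at time $t+1$ while no site strictly between two records can be one, so $P(\eta_{t+1})=P$ and $S(t+1)$ holds. This case analysis --- in effect the rigidity of $\Ul$ at the bottom of a valley --- is the delicate point, and I expect it to be the main obstacle.

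For (b), fix $k$ and let $p$ be the largest record $\le k$. The valley bound (now available at every $t$) gives $h_t(k)\le h_t(p)=h_0(p)$, while $h_{t+1}(k)\ge h_t(k)$; hence the integer sequence $(h_t(k))_{t\ge0}$ is nondecreasing and bounded, so eventually constant, say equal to $h_\infty(k)$. Since \eqref{eta-h} expresses $\eta_t(k)$ through $h_t(k)-h_t(k-1)$, the sequence $(\eta_t(k))_{t\ge0}$ is also eventually constant, with limit $\eta_\infty(k)$, and $h_\infty,\eta_\infty$ again satisfy \eqref{eta-h}. Because $\eta_t$ coincides with $\eta_\infty$ on any fixed finite window once $t$ is large, one further application of the update rule shows $\Ul\eta_\infty=\eta_\infty$; in particular $\eta_\infty$ contains no block $1\,1\,0$, since otherwise a particle would jump.

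It remains to identify $\eta_\infty$ between consecutive records $p<p'$. As above $P(\eta_\infty)=P$, so $h_\infty(p)=h_0(p)$, $h_\infty(p')=h_0(p)+1$ (consecutive records differ by $1$), $h_\infty(p'-1)=h_0(p)$ (as $\eta_\infty(p')=0$), and $h_\infty(k)\le h_0(p)$ for $p\le k\le p'-1$. The one remaining estimate is the matching lower bound $h_\infty(k)\ge h_0(p)-1$ on $[p,p']$. Granting it, $h_\infty$ takes only the values $h_0(p)$ and $h_0(p)-1$ on $[p,p'-1]$, and the parity relation $h_\infty(k)\equiv k\bmod2$ of \eqref{highmod} forces it to alternate between them, reaching $h_0(p)$ at $p'-1$ and then stepping up to $h_0(p)+1$ at $p'$; reading this back through \eqref{eta-h} gives exactly $\eta_\infty\rng{p+1}{p'}=(1\,0)^n0$ with $2n+1=p'-p$. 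I would prove the lower bound by infinite descent: if $\min_{[p,p']}h_\infty\le h_0(p)-2$, the minimum is attained at an interior local minimum $k^*$, where $\eta_\infty(k^*)=1$ and $\eta_\infty(k^*+1)=0$; the absence of $1\,1\,0$ then forces $\eta_\infty(k^*-1)=0$, so $h_\infty(k^*-2)$ attains the same minimum and $k^*-2$ is again such a local minimum lying in $(p,p')$ --- it cannot equal $p$ or $p+1$, where $h_\infty$ equals $h_0(p)$ or $h_0(p)-1$ --- and iterating produces an infinite strictly decreasing sequence of integers inside the finite interval $(p,p')$, which is absurd. This completes (b).
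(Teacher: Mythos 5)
Your proof is correct and follows essentially the same route as the paper's: persistence of the record set $P$ via an induction on the height profile (your ``valley bound'' is just an explicit form of the paper's observation that $h_{t+1}(i)>h_t(i)$ requires $h_t(i-1)>h_t(i)$, so a jump can only raise $h(i)$ to $h_t(i-2)$), convergence of $h_t$ from monotonicity plus the bound by the record heights, and identification of $\eta_\infty$ from the absence of $1\,1\,0$ blocks in the fixed limit, your infinite-descent argument for the lower bound $h_\infty\ge h_0(p)-1$ being a more elaborate version of the paper's one-line remark that a dip of depth $2$ would force such a block. The only item neither you nor the paper's proof addresses is the claim in the statement that $\eta_t(i)$ itself is nondecreasing in $t$, which as literally written is false (a particle leaving site $i$ decreases $\eta_t(i)$) and is evidently a slip for the monotonicity of $h_t(i)$ alone.
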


We note that \eqref{lowetainf} specifies $\eta_\infty$ completely.  A
graphical representation of the contents of this theorem is shown in
\cfig{fig:ldprof}, where the profiles $h_0$ and $h_\infty$ are
represented as piecewise linear curves in the plane which are obtained by
connecting each pair of points $(i,h(i))$ and $(i+1,h(i+1))$ by a
straight line segment.

\begin{figure}[ht!]  \begin{center}
\includegraphics[width=5.0truein,height=2.0truein]{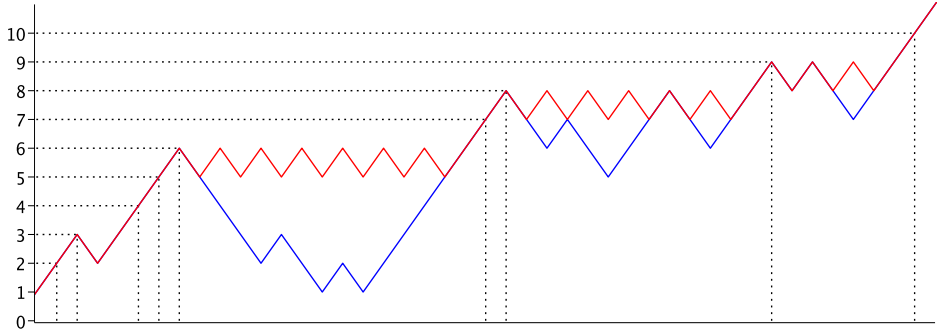} 
\caption{Portion of typical initial (blue, lower) and final (red, upper) height
profiles in the F-TASEP.  The vertical dotted lines are at sites in $P$.}
\label{fig:ldprof} 
\end{center}
\end{figure}

  \begin{proof}[Proof of \cthm{lowlimit}] For the moment we denote the
set defined in \eqref{lowpk} by $P_t$.  Observe first that
\eqref{profmove} implies that for fixed $i$, $h_t(i)$ is nondecreasing in
$t$; moreover, $h_{t+1}(i)>h_t(i)$ is possible only if $h_t(i-1)>h_t(i)$.
This implies that if $p\in P_0$ then for all $t\ge0$, $h_t(p)=h_0(p)$ and
$h_t(p)>\max_{i<p}h_t(i)$.  Thus $p\in P_t$ and so $P\subset P_t$; since
$h_t(p)$ takes each value in $\bbz$ precisely once, $P_t=P$, verifying
(a).  Moreover, for any $i\in\bbz$ there will be a $p\in P$ with $p>i$,
and the upper bound $h_t(i)<h_t(p)=h_0(p)$ shows the existence of the
limit $h_\infty(i)=\lim_{t\to\infty}h_t(i)$, and hence, via
\eqref{eta-h}, also of the limit $\eta_\infty(i)$.  Further, if $p$ and
$p'$ are two consecutive elements of $P$ and $p\le i<p'$ then
$h_\infty(i)\ge h_\infty(p)-1$, since if $h_\infty(i)\le h_\infty(p)-2$
for some $i$ with $i>p$ then necessarily
$\eta_\infty\rng j{j+2}=1\,1\,0$ for some $j$ with $p<j<p'$, and an
exchange must then take place, contradicting the time-independence of
$\eta_\infty$.  The conclusion that
$h_\infty(p)-1\le h_\infty(i)\le h_\infty(p)$ for $p\le i<p'$ yields
\eqref{lowetainf}.\end{proof}

\subsection{A Bernoulli initial distribution}\label{lowbern}

In this section we assume that the initial configuration $\eta_0$ is
distributed according to the Bernoulli measure $\murho$, with
$0<\rho<1/2$.  Then almost every initial configuration $\eta_0$ satisfies
\eqref{rhodef}; for such configurations the set $P$ of \eqref{lowpk} is
well defined, the analysis of the preceding section applies, and
$\eta_\infty$ is determined as a function of $\eta_0$.  \cthm{lowlimit}
suggests that to obtain the distribution of $\eta_\infty$ we should
obtain the joint distribution of the (ill-defined at the moment) ``random
variables'' $p'-p$ of \eqref{lowetainf}.  To state a precise result we
would like to index the points of $P$, with $p_{k}<p_{k+1}$ for all $k$,
but unfortunately this cannot be done without introducing some unwanted
bias into the differences $p_{k+1}-p_k$.

To deal with this problem we first introduce the set
$V:=\{\eta_0\mid0\in P(\eta_0)\}$ and let $\mu$ denote the measure
$\murho$ conditioned on $V$ (we could just as well replace $V$ by
$\{\eta_0\mid j\in P(\eta_0)\}$ for any $j\in\bbz$).  For configurations
$\eta_0\in V$ we label the points of $P(\eta_0)$ so that $p_0=0$ and
$p_k<p_{k+1}$. To describe the distribution of the differences
$p_{k+1}-p_k$ under $\mu$ we will use the {\it Catalan numbers}
 \be\label{catalan}
   c_n=\frac1{n+1}\binom{2n}n,\quad n=0,1,2,\ldots
 \ee
 (the sequence is entry {\bf A000108} in the Online Encyclopedia of
Integer Sequences \cite{OEIS}).  $c_n$ counts the number of strings of
$n$ 0's and $n$ 1's in which the number of 0's in any initial segment
does not exceed the number of 1's, or alternatively the number of {\it
Dyck paths} of length $2n$: paths in the lower half plane, with possible
steps $(1,1)$ and $(1,-1)$, from $(0,0)$ to $(2n,0)$.

\begin{theorem}\label{ldrenewal}The random variables $N_k=p_{k+1}-p_k$
are i.i.d.~under $\mu$, with distribution
 \be\label{catdist}
\mu(\{N_k=2n+1\})=c_n\rho^n(1-\rho)^{n+1}, \quad n=0,1,2,\ldots,
 \ee
 \end{theorem}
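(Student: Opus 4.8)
The plan is to recognize the set $P$ of \eqref{lowpk} as the set of strict ascending ladder points of the bi-infinite random walk $S_j:=h_0(j)$, and to identify $\mu$ as the Palm measure at the origin of the renewal process $P$. Under $\murho$ the increments $S_j-S_{j-1}=(-1)^{\eta_0(j)}$ are i.i.d., equal to $+1$ with probability $1-\rho$ and to $-1$ with probability $\rho$, so $S$ has positive drift $1-2\rho$; hence $S_j\to\pm\infty$ as $j\to\pm\infty$ and $P$ is a.s.\ a locally finite, bi-infinite subset of $\bbz$. First I would record that $S_{p_k}=k$ for every $k$---immediate from $S_0=0$ under $\mu$ together with the facts, noted in \csect{lowfate}, that the heights $S_p$ ($p\in P$) exhaust $\bbz$ and that consecutive ladder points have heights differing by one---so that $p_{k+1}$ is the first $j>p_k$ with $S_j=k+1$, with $S_j\le k$ for $p_k<j<p_{k+1}$; thus $N_k$ is the first-passage time of $S$ from level $k$ to level $k+1$. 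I would also note that, since $P$ has positive density $1-2\rho$, the event $V=\{0\in P\}$ has positive $\murho$-probability, so $\mu$ is well defined, and the image of $\mu$ under $\eta_0\mapsto P$ is exactly the Palm measure of the stationary point process $P$.

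The key structural step is to show that $P$ is a stationary renewal process. Stationarity is immediate from translation invariance of $\murho$. For the regenerative property, I would use that every $p$ satisfies $\{p\in P\}=\{S_p>\sup_{i<p}S_i\}\in\sigma(S_i:i\le p)$, so the points of $P$ are stopping times of $S$, and that for $q>p$ one has $q\in P$ iff $S_q-S_p>\max\bigl(0,\sup_{p<i<q}(S_i-S_p)\bigr)$, so that $P\cap(p,\infty)$, shifted by $-p$, is the set of strict ascending ladder points of the post-$p$ walk $(S_{p+j}-S_p)_{j\ge0}$. By the strong Markov property this walk is independent of $\sigma(S_i:i\le p)$ with a law independent of $p$, which is precisely the regenerative property. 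The standard structure of stationary renewal processes on $\bbz$ then gives that under the Palm measure $\mu$ the gaps $N_k=p_{k+1}-p_k$, $k\in\bbz$, are i.i.d., their common law being the renewal distribution, i.e.\ the $\mu$-law of $N_0=\min(P\cap\bbn)$.

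Finally I would compute that law. On $\{N_0=2n+1\}$ the path $(S_j)_{0\le j\le 2n+1}$ runs from $0$ to $1$ while remaining $\le0$ for $0<j<2n+1$; since the steps are $\pm1$, this forces $S_{2n}=0$, the last step to be an up-step, and the first $2n$ steps to form a lattice path from $0$ back to $0$ that never becomes positive---equivalently, after negating heights, a Dyck path of length $2n$, of which there are $c_n$---each such trajectory carrying $\murho$-probability $\rho^n(1-\rho)^n$ for its $n$ down- and $n$ up-steps, times $1-\rho$ for the final up-step (oddness of the length being forced by the net height change $1$); here that conditioning on $V$ leaves these probabilities unchanged follows from the $\murho$-independence of $(\eta_0(i))_{i\le0}$, on which $V$ depends, from $(\eta_0(i))_{i\ge1}$. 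Hence $\mu(N_0=2n+1)=c_n\rho^n(1-\rho)^{n+1}$, which is \eqref{catdist}; that these sum to $1$ (so that $p_1<\infty$ $\mu$-a.s., as is anyway forced by the positive drift) follows from $\sum_{n\ge0}c_nx^n=(1-\sqrt{1-4x}\,)/(2x)$ at $x=\rho(1-\rho)$, using $\sqrt{1-4\rho(1-\rho)}=1-2\rho$ since $\rho<1/2$. I expect the main obstacle to be the passage from the forward regenerative description to the fully two-sided i.i.d.\ statement under the Palm measure---in particular, making rigorous that $\mu$ is that Palm measure and that the backward gaps $(N_k)_{k\le-1}$ are also i.i.d. One can instead establish the i.i.d.\ property of $(N_k)_{k\ge0}$ directly from the strong Markov property at the stopping times $p_k$, but then the negative-index gaps must be handled through the classical duality between the ascending ladder epochs of $S$ and the descending ladder epochs of its reversal---equivalently, through the walk conditioned to stay negative---which is the delicate part.
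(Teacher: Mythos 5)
Your proposal is correct and is essentially the paper's argument: the paper likewise observes that whether a site belongs to $P$ is determined by the configuration to its left while, given $p\in P$, the location of the next point of $P$ is determined by the configuration strictly to its right, and it computes the gap law by exactly your Dyck-path/Catalan count with weight $\rho^n(1-\rho)^{n+1}$ per path. The one place you over-engineer is the ``delicate part'' you flag (the backward gaps and the Palm-measure identification): no ladder-epoch duality is needed, since for any sites $a_{-j}<\cdots<a_0=0<\cdots<a_j$ the event $\{p_k=a_k,\ -j\le k\le j\}$ is the intersection of $\{a_{-j}\in P\}$ (measurable with respect to the $\eta_0(i)$ with $i\le a_{-j}$ and of probability $1-2\rho$ by translation invariance) with the mutually independent block events ``the next point of $P$ after $a_k$ is $a_{k+1}$'' living on the disjoint intervals $(a_k,a_{k+1}]$, each of probability depending only on $a_{k+1}-a_k$, so dividing by $\murho(V)=1-2\rho$ yields the two-sided i.i.d.\ product form for forward and backward gaps simultaneously.
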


Note that the independence of the variables $N_k$ implies that the set
$P$ is a renewal point process; this is the renewal process mentioned in
\csect{intro}.

\begin{proof}[Proof of \cthm{ldrenewal}]Whether or not a site $p$ belongs
to $P$ is determined by the $\eta(i)$ with $i\le p$, whereas given that
$p\in P$, the next point $p'$ of $P$ is determined by the $\eta(i)$ with
$i>p$.  This establishes the independence of the increments $N_k$.  More
specifically, if $p\in P$ and $l=p+2n+1$ then $p'=l$ if and only if
$h_0(l)=h_0(p)+1$ and $h_0(i)\le h_0(p)$ for $p<i<l$; the latter
condition holds if and only if in the string $\eta\rng{p+1}{l-1}$ the
number of 0's in any initial segment does not exceed the number of 1's,
that is, if the segment of $h_0(i)$ for $p\le i<p+l$ forms a Dyck path.
Thus there are $c_n$ configurations of $\eta\rng{p+1}l$ yielding $p'=l$,
and since each such configuration has probability $\rho^n(1-\rho)^{n+1}$,
\eqref{catdist} is established. \end{proof}

\begin{remark}\label{muinf}(a) The distribution $\mu_\infty$ of
$\eta_\infty$ may be expressed in terms of $\mu$ by a standard
construction:
$\mu_\infty=Z^{-1}\sum_{m\ge0}\sum_{i=0}^{m-1}\tau_*^{-i}
\bigl(\mu\big|_{V_m}\bigr)$, where $V_m=\{\eta_0\in V\mid
p_1-p_0=m\}$ and $Z:=\sum_{m\ge0}m\mu(V_m)$ (compare the construction of
$\Phi$ in \csect{correspondence}).

 \smallskip\noindent
 (b) In the {\it continuous-time Facilitated Partially Asymmetric
Exclusion Process} the transitions $1\,1\,0\to1\,0\,1$ and
$0\,1\,1\to1\,0\,1$ occur at rates $p$ and $1-p$, respectively.  It can
be shown \cite{agls} that if this process is started in a Bernoulli
measure with density $\rho<1/2$ then the final state is again described
by the measure $\mu_\infty$ of (a), whatever the value of
$p$.\end{remark}

We now discuss some further properties of the final state of the system,
still when started from a Bernoulli measure.  

\begin{lemma}\label{density}For any $i\in\bbz$,
$\murho(\{i\in P(\eta_0)\})=1-2\rho$.  \end{lemma}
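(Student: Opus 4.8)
The plan is to reduce the assertion, via translation invariance, to the probability that a one-sided nearest-neighbor random walk with positive drift stays strictly positive, which is a classical computation.

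By the translation invariance of $\murho$ the probability $\murho(\{i\in P(\eta_0)\})$ is independent of $i$, so it suffices to take $i=0$. Recall from \cthm{lowlimit}(a) that $P$ does not depend on $t$, so we may work with $h_0$; the normalization $h_0(0)=0$ together with \eqref{eta-h} gives $h_0(0)-h_0(-n)=\sum_{k=-n+1}^{0}(-1)^{\eta_0(k)}$ for every $n\ge1$. Set $Y_j:=(-1)^{\eta_0(1-j)}$ for $j\ge1$ and $W_n:=\sum_{j=1}^{n}Y_j=h_0(0)-h_0(-n)$. Under $\murho$ the $Y_j$ are i.i.d.\ with $\Pr(Y_j=1)=1-\rho$ and $\Pr(Y_j=-1)=\rho$, so $(W_n)_{n\ge0}$ is a nearest-neighbor random walk on $\bbz$ started from $0$ with drift $1-2\rho>0$. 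Since $h_0$ has increments $\pm1$, the defining condition \eqref{lowpk} for $0\in P$, namely $h_0(0)>h_0(i)$ for all $i<0$, is equivalent to $W_n\ge1$ for all $n\ge1$; hence $\murho(\{0\in P(\eta_0)\})=\Pr(W_n\ge1\text{ for all }n\ge1)$.

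To evaluate this, note that the event forces $W_1=Y_1=1$, which has probability $1-\rho$; conditionally on $W_1=1$ what is required is that the walk started from level $1$ never returns to $0$, and the standard gambler's-ruin formula for a nearest-neighbor walk with up-probability $1-\rho>1/2$ gives this probability as $1-\rho/(1-\rho)=(1-2\rho)/(1-\rho)$ — this is where $\rho<1/2$ enters. Multiplying the two factors yields $\murho(\{0\in P(\eta_0)\})=(1-\rho)\cdot\frac{1-2\rho}{1-\rho}=1-2\rho$, as claimed.

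I do not expect a substantial obstacle: the only point requiring care is the translation of the geometric record condition \eqref{lowpk} into the positivity of the reversed walk $(W_n)$ (using that $h_0$ has increments $\pm1$, so that the strict inequality $h_0(0)>h_0(-n)$ is $W_n\ge1$), after which one merely invokes the escape probability of a biased nearest-neighbor walk. As an alternative route avoiding random walks, one can combine \cthm{ldrenewal} with the renewal construction in \crem{muinf}(a): evaluating the Catalan generating function $\sum_{n}c_nx^n=(1-\sqrt{1-4x})/(2x)$ at $x=\rho(1-\rho)$, where $\sqrt{1-4x}=1-2\rho$, one computes the mean gap $E_\mu[N_0]=\sum_{n\ge0}(2n+1)c_n\rho^n(1-\rho)^{n+1}=(1-2\rho)^{-1}$, and then the construction of \crem{muinf}(a) identifies $\murho(\{0\in P\})$ with the reciprocal of this mean.
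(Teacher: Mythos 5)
Your argument is correct, but it is not the route the paper takes: the paper's proof observes (via \cthm{lowlimit}(b) and the remark following \eqref{lowpk}) that $\{i\in P\}=\{\eta_\infty(i-1)=\eta_\infty(i)=0\}$, and then gets $1-2\rho$ by inclusion--exclusion from particle conservation, $\murho(\eta_\infty(i-1))=\murho(\eta_\infty(i))=\rho$, together with $\murho(\eta_\infty(i-1)\eta_\infty(i))=0$ (no adjacent particles in $\eta_\infty\in\Xl$). Your proof is exactly the ``direct consideration of the initial state'' that the paper mentions and then declines to carry out: you translate the record condition $h_0(0)>\sup_{i<0}h_0(i)$ into strict positivity of the reversed increment walk $W_n=-h_0(-n)$, whose steps are i.i.d.\ $\pm1$ with up-probability $1-\rho$, and then multiply the probability $1-\rho$ of a first up-step by the escape probability $1-\rho/(1-\rho)$ of the biased walk from level $1$. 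The reduction is sound (strictness of \eqref{lowpk} becomes $W_n\ge1$ because $h_0$ is integer-valued with $\pm1$ increments), and the gambler's-ruin evaluation is standard; the computation is self-contained at $t=0$ and does not need the dynamics at all beyond \cthm{lowlimit}(a) -- in fact it does not even need that, since the statement concerns $P(\eta_0)$ directly. What the paper's route buys is brevity and no appeal to random-walk hitting probabilities, at the cost of invoking the structure of the final state and conservation of density under the evolution. Your secondary route through \cthm{ldrenewal} and the Palm/renewal inversion of \crem{muinf}(a) also checks out numerically ($E_\mu[N_0]=(1-2\rho)^{-1}$), but as stated it is the loosest of the three arguments, since identifying $\murho(\{0\in P\})$ with the reciprocal mean gap is precisely the stationarity-of-the-point-process fact one would want to justify rather than assume; I would keep the random-walk argument as the main proof.
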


\begin{proof}One can verify the result by direct consideration of the
initial state, but it is easier to observe from \cthm{lowlimit}(b) that the
desired probability is just
$\murho(\{\eta_\infty(i-1)=\eta_\infty(i)=0\})$, and the result then
follows from $\murho(\eta_\infty(i-1))=\murho(\eta_\infty(i))=\rho$ and
$\murho(\eta_\infty(i-1)\eta_\infty(i))=0$.  \end{proof}

Recall now the definitions of $V$ and $\mu$ given above; note that
$\murho(V)=1-2\rho$ and that, if $\eta_0\in V$, then during the evolution
of $\eta_0$ no particle can cross the bond $\langle0,1\rangle$.  This
implies that $\eta_\infty\rng1\infty$ depends only on
$\eta_0\rng1\infty$, and in such a manner that $\eta_\infty(1)=\eta_0(1)$
so that $\mu(\eta_\infty(1))=\murho(\eta_0(1))=\rho$. 

\begin{lemma}\label{addx}For any $n\ge1$,
$\mu(\{2n\in P\})=\mu(\{2n+1\in P\})$.  \end{lemma}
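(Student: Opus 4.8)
The plan is to reduce the statement to a single generating‑function identity and then close that identity using the defining quadratic equation for the Catalan generating function. The structural input, already contained in \cthm{ldrenewal}, is that under $\mu$ the renewal points $0=p_0<p_1<p_2<\cdots$ of $P$ have i.i.d.\ gaps $N_k=p_{k+1}-p_k$ with $\mu(N_k=2n+1)=c_n\rho^n(1-\rho)^{n+1}$; in particular \emph{every gap is odd}, so $p_k$ has the same parity as $k$. Thus, for $m\ge1$, $\{m\in P\}$ is a renewal event, and writing $u(m):=\mu(m\in P)$ for $m\ge1$ and $u(0):=1$ we have, as formal power series, $U(x):=\sum_{m\ge0}u(m)x^m=(1-F(x))^{-1}$ where $F(x):=\sum_m\mu(N_0=m)\,x^m$. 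Now
\begin{equation}
  (x-1)U(x)+(x+1)U(-x)=2\sum_{n\ge0}\bigl(u(2n)-u(2n+1)\bigr)x^{2n+1},
\end{equation}
because $(x-1)x^m+(x+1)(-x)^m$ equals $2x^{m+1}$ for even $m$ and $-2x^m$ for odd $m$; since $u(0)=1$ and $u(1)=\mu(N_0=1)=1-\rho$, the lemma ($u(2n)=u(2n+1)$ for all $n\ge1$) is therefore equivalent to
\begin{equation}\label{addxgf}
  (x-1)U(x)+(x+1)U(-x)=2\rho x .
\end{equation}

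To prove \eqref{addxgf} I would first put $F$ in closed form. Factoring out $(1-\rho)x$ and recognizing the Catalan generating function $C(y):=\sum_{n\ge0}c_n y^n=(1-\sqrt{1-4y})/(2y)$ gives $F(x)=(1-\rho)\,x\,C(\rho(1-\rho)x^2)$, which is an \emph{odd} function of $x$; hence $U(-x)=(1+F(x))^{-1}$. Substituting $U(x)=(1-F)^{-1}$ and $U(-x)=(1+F)^{-1}$ into the left side of \eqref{addxgf} and clearing the denominator $1-F^2$ reduces \eqref{addxgf} to $x-F(x)=\rho x\,(1-F(x)^2)$, i.e.\ $(1-\rho)x=F(x)\bigl(1-\rho x\,F(x)\bigr)$. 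Finally, inserting $F(x)=(1-\rho)x\,C$ with $C=C(\rho(1-\rho)x^2)$ and cancelling the common factor $(1-\rho)x$ turns this into $C=1+\rho(1-\rho)x^2C^2$, which is exactly the relation $C(y)=1+yC(y)^2$ defining the Catalan generating function with $y=\rho(1-\rho)x^2$. This establishes \eqref{addxgf}, hence the lemma.

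There is no serious obstacle: all the series in play converge for $|x|<1$ (indeed $F(1)=1$, reflecting that the $N_k$ genuinely have total mass $1$), and at the formal level $1-F$ and $1\pm F$ are invertible since they have constant term $1$. The only points needing a little care are the sign $U(-x)=(1+F(x))^{-1}$, which uses that $F$ is odd, and the coefficient bookkeeping that yields the clean identity \eqref{addxgf} — in particular the fact that $u(0)=1\neq1-\rho=u(1)$, which is precisely why the case $n=0$ is excluded. As a numerical check, the gap distribution gives directly $u(2)=(1-\rho)^2=u(3)$.
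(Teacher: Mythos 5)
Your proof is correct, but it takes a genuinely different route from the paper's. The paper proves the lemma by a coupling/insertion argument: for $\eta_0\in V$ it defines $\eta_0^x$ by inserting a bit $x\in\{0,1\}$ immediately to the right of the origin, checks directly from the height-function definition of $P$ that $2n\in P(\eta_0)$ iff $2n+1\in P(\eta_0^0)\cap P(\eta_0^1)$, and concludes because inserting an independent Bernoulli($\rho$) bit preserves the law of $\eta_0\rng1\infty$ under $\mu$. That argument uses no information about the gap distribution and in fact produces a pathwise equivalence stronger than the stated distributional identity (which is closer to what is actually exploited in the proof of \cthm{twopt}). You instead take \cthm{ldrenewal} as input, view $P\cap\bbz_+$ as a renewal process with renewal function $U=(1-F)^{-1}$, and reduce the claim to the identity $(x-1)U(x)+(x+1)U(-x)=2\rho x$, which you close using $F(x)=(1-\rho)x\,C(\rho(1-\rho)x^2)$ and the Catalan quadratic $C=1+yC^2$; I have checked the parity bookkeeping, the reduction to $(1-\rho)x=F(1-\rho xF)$, and the reversibility of the manipulations (valid since $1\pm F$ are units as formal power series), and all steps are sound, including the correct exclusion of $n=0$ via $u(0)-u(1)=\rho$. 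Your computation is very much in the spirit of the paper's own \capp{genfct}, which gives an alternative generating-function proof of \cthm{twopt}(b); what it costs you is the extra structural content of the insertion argument, and what it buys is a short, self-verifying identity once the renewal theorem is in hand.
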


\begin{proof}For $\eta_0\in V$ and $x\in\{0,1\}$ let $\eta_0^x$ be
obtained by inserting $x$ into $\eta_0$ immediately to the right of the
origin: $\eta_0^x(i)=\eta_0(i)$ for $i\le0$, $\eta_0^x(1)=x$, and
$\eta_0^x(i)=\eta_0(i-1)$ for $i\ge2$.   We claim that
$2n\in P(\eta_0)$ iff $2n+1\in P(\eta_0^0)\cap P(\eta_0^1)$,
which immediately implies the result.  For the claim, 
let $h_0$ be the height function
for $\eta_0$ and $h_0^x$ that for $\eta_0^x$.  Note first that if
$2n+1\in P(\eta_0^1)$ then $2n\in P(\eta_0)$; this is clear
geometrically, since in passing from $h_0^1$ to $h_0$ we raise, and shift
one site to the left, the portion of the height profile to the right of
site 1.  (An analytic proof similar to the argument given just below is
easy to write down.)  Next, if $2n\in P(\eta_0)$ then similarly
$2n+1\in P(\eta_0^0)$.  Finally, we check in more detail that if
$2n\in P(\eta_0)$ then $2n+1\in P(\eta_0^1)$.  We know that
$h_0(2n)>h_0(i)$ for any $i<2n$ and must show that $h_0^1(2n+1)>h_0^1(i)$
for $i<2n+1$.  Now if $i\ge1$ this follows from
$h_0^1(2n+1)=h_0(2n)-1>h_0(i-1)-1=h_0^1(i)$, while if $i\le0$ we use the
fact that $h_0(2n)$ is even and $h_0(2n)>h_0(0)=0$ to write
$h_0^1(2n+1)=h_0(2n)-1\ge1>h_0(0)\ge h_0(i)=h_0^1(i)$.  \end{proof}

In stating the next theorem we let $g$ denote the two-point correlation
function in the final state: $g(k)=\murho(\eta_\infty(0)\eta_\infty(k))$.

\begin{theorem}\label{twopt}(a) For any $n\ge1$,
  $\mu(\eta_\infty(2n-1))=\rho$.
 \par\smallskip\noindent
 (b) For any $n\ge1$, $g(2n-1)+g(2n)=2\rho^2$.
\end{theorem}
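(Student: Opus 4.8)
Here $P=P(\eta_0)$, and I write $g(k)=\murho(\eta_\infty(0)\eta_\infty(k))$ as in the statement; recall also the conditioned measure $\mu=\murho(\cdot\mid V)$ with $V=\{\eta_0\mid 0\in P(\eta_0)\}$ and $\murho(V)=1-2\rho$ (\clem{density}). The plan is to reduce both parts to a single pointwise identity and then iterate it using \clem{addx}. By \eqref{lowetainf} every block $\eta_\infty\rng{p+1}{p'}=(1\,0)^n0$ contains no two adjacent occupied sites, so $\eta_\infty\in\Xl$ always; combined with the observation recorded just after \eqref{lowpk}, that $k\in P$ exactly when $\eta_\infty(k-1)=\eta_\infty(k)=0$, this gives, for every $\eta_0\in X_\rho$ and every $k\in\bbz$,
\be\label{pm:add}
\eta_\infty(k-1)+\eta_\infty(k)+\Ii_{\{k\in P\}}=1.
\ee
Writing $a_k=\mu(\eta_\infty(k))$ and $q_k=\mu(\{k\in P\})$, the $\mu$-expectation of \eqref{pm:add} is the recursion $a_{k-1}+a_k=1-q_k$ ($k\in\bbz$). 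Since $0\in P$ holds $\mu$-a.s.\ we have $a_{-1}=a_0=0$, hence $q_1=1-a_1$; and $a_1=\mu(\eta_\infty(1))=\murho(\eta_0(1))=\rho$ was already observed in the paragraph preceding \clem{addx}, so $q_1=1-\rho$.

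For (a) I would induct on $n\ge1$, the base case $n=1$ being $a_1=\rho$. Assuming $a_{2n-1}=\rho$, the recursion at $k=2n$ gives $a_{2n}=1-q_{2n}-\rho$, and then at $k=2n+1$
\be
a_{2n+1}=1-q_{2n+1}-a_{2n}=(q_{2n}-q_{2n+1})+\rho=\rho,
\ee
the last equality being \clem{addx} (valid for $n\ge1$). Hence $\mu(\eta_\infty(2n-1))=\rho$ for all $n\ge1$.

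For (b), \eqref{pm:add} at $k=2n$ gives $\eta_\infty(2n-1)+\eta_\infty(2n)=1-\Ii_{\{2n\in P\}}$, so, since $\murho(\eta_\infty(0))=\rho$ (as noted in the proof of \clem{density}),
\be
g(2n-1)+g(2n)=\rho-\murho\bigl(\eta_\infty(0)=1,\ 2n\in P\bigr).
\ee
Because the F-TASEP dynamics is translation covariant, the limit map $F:\eta_0\mapsto\eta_\infty$ satisfies $F\tau=\tau F$, and moreover $P(\tau\eta_0)=P(\eta_0)+1$; applying $\tau^{2n}$ therefore carries the event $\{\eta_\infty(-2n)=1,\ 0\in P\}$ onto $\{\eta_\infty(0)=1,\ 2n\in P\}$, whence by \clem{density} and the definition of $\mu$,
\be
\murho\bigl(\eta_\infty(0)=1,\ 2n\in P\bigr)=(1-2\rho)\,\mu\bigl(\eta_\infty(-2n)\bigr).
\ee
It thus remains to show $\mu(\eta_\infty(-2n))=\rho$, which I would get by running the recursion $a_{k-1}+a_k=1-q_k$ leftward: translation invariance of $\murho$ applied to the pair event $\{-j\in P,\ 0\in P\}$ gives $q_{-j}=q_j$, so the recursion at $k=-1$ yields $a_{-2}=1-q_{-1}-a_{-1}=1-q_1=\rho$, and for $n\ge2$, subtracting its instances at $k=1-2n$ and $k=2-2n$ gives $a_{-2n}-a_{-(2n-2)}=q_{-(2n-2)}-q_{-(2n-1)}=q_{2n-2}-q_{2n-1}=0$ by \clem{addx}. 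Hence $a_{-2n}=\rho$ for all $n\ge1$, and $g(2n-1)+g(2n)=\rho-(1-2\rho)\rho=2\rho^2$.

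Apart from this, the argument is routine arithmetic with the recursion; the one step needing care is the reduction in (b) to the value $\mu(\eta_\infty(-2n))$ at the negative site $-2n$, i.e.\ correctly exploiting the translation covariance of both the set $P$ and the limit map $F$ to convert $\murho$-probabilities of events involving $\{2n\in P\}$ into $\mu$-probabilities, together with the accompanying identity $q_{-j}=q_j$.
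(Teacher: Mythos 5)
Your argument is correct, and part (a) is essentially the paper's: both rest on \clem{addx} together with the observation $\mu(\eta_\infty(1))=\rho$; you merely package the step from $2n-1$ to $2n+1$ as a two-term recursion $a_{k-1}+a_k=1-q_k$ coming from the pointwise identity $\eta_\infty(k-1)+\eta_\infty(k)+\Ii_{\{k\in P\}}=1$, where the paper phrases the same fact as a symmetry of the law of $\bigl(\eta_\infty(2n-1),\eta_\infty(2n),\eta_\infty(2n+1)\bigr)$ under exchange of the outer entries. Part (b) is where you genuinely diverge. The paper obtains (b) from (a) in two lines by writing $\Ii_V=(1-\eta_\infty(-1))(1-\eta_\infty(0))$ (up to the a.s.\ vanishing product $\eta_\infty(-1)\eta_\infty(0)$) and expanding
$\mu(\eta_\infty(2n-1))=(1-2\rho)^{-1}\bigl(\rho-g(2n)-g(2n-1)\bigr)$,
i.e.\ it correlates the defining pair $(-1,0)$ of $V$ against the single site $2n-1$. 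You instead correlate the single site $0$ against the pair $(2n-1,2n)$ via your pointwise identity at $k=2n$, which reduces (b) to the value $\mu(\eta_\infty(-2n))=\rho$ at a \emph{negative} site; this is not supplied by part (a) and forces the extra work you correctly carry out (the symmetry $q_{-j}=q_j$ from translation invariance of $\murho$, plus a leftward induction again powered by \clem{addx}). The trade-off: the paper's route is shorter and needs nothing beyond (a), while your unifying identity $\eta_\infty(k-1)+\eta_\infty(k)+\Ii_{\{k\in P\}}=1$ is a clean device that makes the telescoping structure of $\sum_k g^T(k)$ transparent and, as a by-product, establishes $\mu(\eta_\infty(k))=\rho$ at all even negative sites as well---information the paper's proof does not produce.
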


\begin{proof}From \clem{addx} and the fact that
$\murho(\eta_\infty(i)\eta_\infty(i+1))=0$ for any $i$ it follows that
the distribution under $\mu$ of
$(\eta_\infty(2n-1),\eta_\infty(2n),\eta_\infty(2n+1))$ is symmetric
under the exchange of the first and last variables.   Thus
$\mu(\eta_\infty(2n-1))=\mu(\eta_\infty(2n+1))$, and this, with the
observation above that $\mu(\eta_\infty(1))=\rho$, yields (a).  But then
(b) follows immediately, from
 \begin{align*}
   \mu(\eta_\infty(2n-1))
  &=\frac1{1-2\rho}\murho((1-\eta_\infty(-1))
    (1-\eta_\infty(0))\eta_\infty(2n-1))\\
  &=\frac1{1-2\rho}(\rho-g(2n)-g(2n-1)).\qedhere
 \end{align*}
\end{proof}

An alternative proof of \cthm{twopt}(a)---which, in fact, generalizes
that result to $\mu(\eta_t(2n-1))=\rho$ for all $t$---is presented in
\capp{semiinf}.  \cthm{twopt}(b) is then a consequence, as above. A third
proof of the latter is obtained from the computation in \capp{genfct} of
the generating function for $g(k)$.

We next observe that the truncated two point function
$g^T(k):=g(k)-\rho^2$ decays exponentially.

\begin{lemma}\label{decay}Let $\alpha_0:=(4\rho(1-\rho))^{1/2}<1$.  Then
for any $\alpha>\alpha_0$ there is a $C_\alpha>0$ such that
$|g^T(k)|\le C_\alpha\alpha^{k}$.  \end{lemma}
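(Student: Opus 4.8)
The plan is to extract the exponential decay of $g^T(k)$ from the renewal structure established in \cthm{ldrenewal}. Since $P$ is a renewal process under $\mu$ with increments $N_k$ distributed as in \eqref{catdist}, and since $\eta_\infty$ is fully determined by $P$ via \eqref{lowetainf} (between consecutive points $p,p'$ of $P$, the pattern is $(10)^n0$ with $2n+1=p'-p$), the two-point function $g(k)$ can be written explicitly in terms of the renewal function of this process. Concretely, $\eta_\infty(i)=1$ exactly when $i$ lies at an \emph{even} distance below the next renewal point to its right, i.e.\ when the ``residual life'' at $i$ is odd; so $g(k)=\murho(\eta_\infty(0)\eta_\infty(k))$ is a sum over the (finite) number of renewal points in $(0,k]$ of probabilities that factor through the increment distribution. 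The generating-function machinery for renewal processes then applies: if $\hat N(z)=\sum_{n\ge0}c_n\rho^n(1-\rho)^{n+1}z^{2n+1}$ is the probability generating function of a single increment, the generating function of the renewal sequence is $1/(1-\hat N(z))$, and $g(k)-\rho^2$ will be (up to elementary factors) the coefficient sequence of a rational-in-$\hat N$ expression whose only singularities come from the zeros of $1-\hat N(z)$ and from the radius of convergence of $\hat N$ itself.

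The key computational input is the generating function of the Catalan numbers: $\sum_{n\ge0}c_n w^n=\frac{1-\sqrt{1-4w}}{2w}$, convergent for $|w|\le 1/4$. Hence, with $w=\rho(1-\rho)z^2$,
\[
 \hat N(z)=(1-\rho)z\sum_{n\ge0}c_n\bigl(\rho(1-\rho)z^2\bigr)^n
   =(1-\rho)z\cdot\frac{1-\sqrt{1-4\rho(1-\rho)z^2}}{2\rho(1-\rho)z^2}
   =\frac{1-\sqrt{1-4\rho(1-\rho)z^2}}{2\rho z}.
\]
This has a branch point at $z^2=1/(4\rho(1-\rho))=1/\alpha_0^2$, i.e.\ at $|z|=1/\alpha_0$. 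One checks $\hat N(1)=1$ (consistency: the increments are a genuine probability distribution) and $\hat N'(1)<\infty$ (finite mean increment — this is the finiteness already noted for the expected distance a particle moves), so $z=1$ is a simple zero of $1-\hat N(z)$; I would verify that there are no zeros of $1-\hat N(z)$ with $1<|z|<1/\alpha_0$, which follows because on that annulus $|\hat N(z)|\le\hat N(|z|)<\hat N(1/\alpha_0)$ and $\hat N$ restricted to the real interval $(1,1/\alpha_0)$ stays below $1$ — indeed $\hat N(1/\alpha_0)=\frac{1}{2\rho/\alpha_0}=\alpha_0/(2\rho)=\sqrt{(1-\rho)/\rho}$, which is $\le 1$ precisely when $\rho\ge 1/2$, so for $\rho<1/2$ one instead argues that $\hat N$ is increasing and convex on the real axis with $\hat N'(1)<1$ forcing $\hat N(x)<x$ for $x$ slightly above $1$, and a second zero, if any, would lie beyond $1/\alpha_0$. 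The upshot: the renewal generating function $1/(1-\hat N(z))$ is analytic in $1<|z|<1/\alpha_0$ except for the simple pole at... wait — the pole at $z=1$, whose residue contributes exactly the constant $\rho^2$ to $g(k)$ (the renewal density tends to $1/\text{mean}$, translating to the bulk value). After subtracting that pole, $g^T(k)=g(k)-\rho^2$ is the coefficient sequence of a function analytic on $|z|<1/\alpha_0$, whence by the standard Cauchy-estimate argument, for any $R<1/\alpha_0$ one gets $|g^T(k)|\le C_R R^{-k}$; taking $R=1/\alpha$ with $\alpha_0<\alpha<1$ (possible since $\alpha_0<1$) gives the claim with $C_\alpha=C_{1/\alpha}$.

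The main obstacle is the bookkeeping in the middle step: translating ``$\eta_\infty(0)=\eta_\infty(k)=1$'' into a clean expression in the renewal generating function, including the correct handling of the size-biased first increment (the origin is typically not a renewal point of $P$ under $\murho$, so one passes through $\mu$ and the construction in \crem{muinf}(a), exactly as in the passage from $\muhat$ to $\Phi(\muhat)$) and the parity constraint coming from \eqref{lowetainf}. An alternative that sidesteps this is to invoke \capp{genfct}, where (per the remark after \cthm{twopt}) the generating function of $g(k)$ is computed directly; its poles are then read off and the same Cauchy estimate applies. Either way, once the generating function is in hand with its nearest singularity at $|z|=1/\alpha_0$, the exponential bound is immediate.
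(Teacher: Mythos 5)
Your overall strategy is exactly the paper's: the stated proof sketch says precisely ``find the generating function $G^T(z)$ and observe that it is analytic for $|z|<(4\rho(1-\rho))^{-1/2}$,'' and \capp{genfct} carries this out by decomposing $g(n)=\sum_m g_m(n)$ over the number of renewal points in $(0,n]$. Your $\hat N(z)$ is the paper's $z\Psi(z^2)$, and the denominator $1-z\Psi(z^2)$ in \eqref{Gmgz} is your $1-\hat N(z)$. The bookkeeping you defer (the size-biased first increment and the two boundary tails) is handled in the appendix by the tail sums $\Theta$ and $\Lambda$ and a separate treatment of the $m=0$ term; that part is routine and your plan for it is fine.

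The genuine problem is your argument that $1-\hat N(z)$ has no zero in the annulus $1<|z|<(4\rho(1-\rho))^{-1/2}$. First, $\hat N'(1)$ is the mean increment, which by \cthm{ldrenewal} (or \clem{density}) equals $1/(1-2\rho)>1$, not $<1$. Moreover, since $\hat N$ has nonnegative coefficients and $\hat N(1)=1$, one has $\hat N(r)>1$ for every real $r>1$, so the bound $|\hat N(z)|\le\hat N(|z|)$ gives nothing in the annulus, and no monotonicity or convexity argument on the real axis can exclude \emph{complex} zeros there. (Your sentences also drift between solving $\hat N(z)=1$ and $\hat N(z)=z$, which are different equations.) This step cannot be waved away: a complex zero in the annulus would produce a pole of $G^T$ there and the claimed decay rate would fail. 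The fix is algebraic, as the paper indicates via the identity $\Psi(u)=1-\rho+u\rho\Psi(u)^2$: equivalently, from your closed form, $\hat N(z)=1$ is $\sqrt{1-4\rho(1-\rho)z^2}=1-2\rho z$, and squaring reduces this to $4\rho z(1-z)=0$, so that (after checking the branch of the square root) $z=1$ is the unique root on the principal sheet. With that repaired, the residue computation at $z=1$ and the Cauchy estimate go through as you describe.
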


\begin{proof}[Proof sketch]One finds the generating function
$G^T(z):=\sum_{n=1}^\infty g^T(n)z^n$ and observes that it is
analytic for $|z|<1/\alpha_0$.  Some details are given in
\capp{genfct}.  \end{proof}

We next consider the variance of the number of particles in large boxes.

\begin{theorem}\label{variance}Let $S_L=\sum_{i=1}^L\eta_\infty(i)$ and
$V_L=\Var(S_L)=\murho(S_L^2)-(\rho L)^2$. Then for $0<\rho<1/2$,
$\lim_{L\to\infty} V_L/L = \rho(1-\rho)$.  \end{theorem}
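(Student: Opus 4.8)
The plan is to express $V_L$ in terms of the truncated two-point function and then exploit the summation identity $g(2n-1)+g(2n)=2\rho^2$ from \cthm{twopt}(b). Writing $S_L=\sum_{i=1}^L\eta_\infty(i)$, a standard expansion gives
\[
V_L=\sum_{i,j=1}^L g^T(|i-j|)+L\bigl(\rho-\rho^2\bigr)
   =L\,\rho(1-\rho)+2\sum_{k=1}^{L-1}(L-k)\,g^T(k),
\]
where we have used $g(0)=\murho(\eta_\infty(0)^2)=\murho(\eta_\infty(0))=\rho$ and $g^T(k)=g(k)-\rho^2$. Dividing by $L$,
\[
\frac{V_L}{L}=\rho(1-\rho)+\frac2L\sum_{k=1}^{L-1}(L-k)\,g^T(k),
\]
so the claim reduces to showing that the error term $\frac2L\sum_{k=1}^{L-1}(L-k)g^T(k)$ tends to $0$ as $L\to\infty$. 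I expect two routes to this, and would present whichever is cleaner.

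The quick route uses \clem{decay}: since $|g^T(k)|\le C_\alpha\alpha^k$ with $\alpha<1$, the sum $\sum_{k\ge1}k|g^T(k)|$ converges absolutely, hence $\sum_{k=1}^{L-1}(L-k)g^T(k)=L\sum_{k\ge1}g^T(k)-\sum_{k\ge1}kg^T(k)+o(L)$, and so $\frac{V_L}{L}\to\rho(1-\rho)+2\sum_{k\ge1}g^T(k)$. It then remains to verify that $\sum_{k\ge1}g^T(k)=0$. This is exactly where \cthm{twopt}(b) enters: pairing consecutive terms, $\sum_{k=1}^{2N}g^T(k)=\sum_{n=1}^{N}\bigl(g(2n-1)+g(2n)-2\rho^2\bigr)=0$ for every $N$, and the exponential decay makes the partial sums over odd upper limits also vanish in the limit; hence the series sums to $0$ and we are done.

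The self-contained route avoids invoking \clem{decay}. Group the inner sum in pairs: $\sum_{k=1}^{2N}(L-k)g^T(k)=\sum_{n=1}^{N}\bigl[(L-2n+1)g^T(2n-1)+(L-2n)g^T(2n)\bigr]$. Using $g^T(2n-1)+g^T(2n)=0$, each bracket collapses to $\bigl(L-2n+1-(L-2n)\bigr)g^T(2n-1)=g^T(2n-1)$, so $\sum_{k=1}^{2N}(L-k)g^T(k)=\sum_{n=1}^{N}g^T(2n-1)$, which is \emph{independent of $L$} and bounded in $N$ by $\rho$ in absolute value (since $0\le g(2n-1)\le\rho$ and $0\le\rho^2\le\rho$, each partial sum of $g^T(2n-1)$ is $O(1)$; more carefully one checks $|\sum_{n\le N}g^T(2n-1)|\le\rho$ using $g(2n-1)\le\mu$-probabilities). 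Handling the possibly-unpaired last term when $L-1$ is odd contributes at most $(L-k)|g^T(L-1)|\le L\cdot 2\rho^2=O(L)$—which is \emph{not} $o(L)$ on its own, so this route does still need a crude decay or summability bound on a single $g^T$ term, making the first route genuinely cleaner.

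The main obstacle is thus not the algebra but ensuring the tail is controlled: the pairing identity kills the bulk of the sum exactly, but one must rule out that a single large-$k$ correlation $g^T(k)$ with $k\sim L$ contributes at order $L$. The honest fix is to cite \clem{decay} (whose proof is sketched via the generating function in \capp{genfct}), after which the whole error term is $O(1)/L\to0$ and $\frac{V_L}{L}\to\rho(1-\rho)+2\sum_{k\ge1}g^T(k)=\rho(1-\rho)$, the last equality by the pairing identity of \cthm{twopt}(b). This matches the value $\rho(1-\rho)$ for the initial Bernoulli measure, as asserted in \csect{intro}.
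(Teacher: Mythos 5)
Your argument is essentially the paper's own proof: invoke \clem{decay} to justify the standard formula $\lim_L V_L/L=\rho(1-\rho)+2\sum_{k\ge1}g^T(k)$ and then kill the sum by pairing consecutive terms via \cthm{twopt}(b). The only blemish is the first displayed line, where $\sum_{i,j=1}^L g^T(|i-j|)+L(\rho-\rho^2)$ counts the diagonal twice; your second expression $L\rho(1-\rho)+2\sum_{k=1}^{L-1}(L-k)g^T(k)$ is the correct one and the rest of the argument is fine.
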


\begin{proof}The exponential decay of \clem{decay} is amply sufficient to
  justify the standard formula 
 \be
\lim_{L\to\infty}\frac{V_L}L=\rho(1-\rho)+\sum_{k=1}^\infty g^T(k).
 \ee
 But by \cthm{twopt}, $\sum_{k=1}^n g^T(k)=0$ if $n$ is even. \end{proof}

The result of \cthm{variance} may also be understood in terms of the fact
that, as we next discuss, a typical particle moves only a microscopic
distance during the evolution.  Thus the number of particles in a large
box is, to high relative accuracy, the same at the end of the evolution
as it was at the beginning.  We will in fact show in \cthm{distance} that
the distance moved by such a typical particle has a geometric
distribution with mean $\rho/(1-2\rho)$.

 Consider then a particle initially located at a site $i$, with
$p_k<i<p_{k+1}-1$, and let $i_t$ be its position at time $t$.  During the
evolution, $h_t(i_t)$ will increase from $h_0(i)$ to $h_0(p_k)-1$, so
that the particle will move a distance $h_0(p_k)-h_0(i)-1$.  Now consider
further the collection of all Dyck paths of length $2n$, which for the
moment we think of as starting at $(0,0)$; there are $c_n$ such paths and
each configuration described by one of them contains $n$ particles, for a
total of $nc_n$ particles.  Let $\Delta(n,d)$ be the number of these
particles which will move a distance exactly $d$, and note that
$\Delta(0,d)=0$.  By conditioning on the site $2m$ where the path first
returns to height 0 (a standard trick for obtaining the recursion for
Catalan numbers) we find the recursion
 \be\label{recursion}
\Delta(n,d)=\sum_{m=1}^n\bigl(c_{n-m}\Delta(m-1,d-1)
    +c_{m-1}\Delta(n-m,d)\bigr).
 \ee
 This relation holds even for $d=0$ if we define $\Delta(n,-1)=c_n$.

 Now introduce the generating functions $G(u)=\sum_{n=0}^\infty c_nu^n$
and $G_d(u)=\sum_{n=0}^\infty \Delta(n,d)u^n$ (so that $G(u)=G_{-1}(u)$).
It is well-known \cite{Catalan} that $G(u)$ satisfies $G(u)=1+uG(u)^2$
and that explicitly $G(u)=2/(1+\sqrt{1-4u})$.  From \eqref{recursion} we
have $G_d(u)=uG(u)(G_{d-1}(u)+G_d(u))$, easily solved to give
 \be
G_d(u)=G(u)(G(u)-1)^{d+1}.
 \ee

 We next condition on there being a particle at the origin in $\eta_0$,
let $D$ be the distance that that particle moves, and find the
distribution of $D$.  For some $k$ we will have $p_k<0<p_{k+1}$; we first
calculate the probability $\pi_n$ that $p_{k+1}-p_k=2n+1$.  In that event
there are $n$ possible sites for $p_k$, the probability that a selected
site lies in $P(\eta_0)$ is $(1-2\rho)$ (\clem{density}), and we must
divide by $\rho$ to condition on $\eta_0(0)=1$, so that from
\eqref{catdist},
 \be
\pi_n = n(1-2\rho)c_n\rho^{n-1}(1-\rho)^{n+1}.
 \ee
 But since, given that $p_{k+1}-p_k=2n+1$, all compatible Dyck paths and
 positions of the origin relative to the path are equally likely, 
 \begin{align}
\murho(D=d\mid\eta_0=1)&=\sum_{n=1}^\infty\pi_n\frac{\Delta(n,d)}{nc_n}\nonumber\\
 &=\frac{(1-\rho)(1-2\rho)}{\rho}G_d(\rho(1-\rho))\nonumber\\
 &= \frac{1-2\rho}{1-\rho}\left(\frac\rho{1-\rho}\right)^d.
 \end{align}
 We have proved:

\begin{theorem}\label{distance}The distance $D$ moved by a ``typical''
particle, i.e., by the particle at the origin given that at time zero
there is such a particle, has geometric distribution, with ratio
$\rho/(1-\rho)$ and mean $\rho/(1-2\rho)$.  \end{theorem}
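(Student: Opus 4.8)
The plan is to reduce the law of $D$ to a combinatorial count over Dyck paths, exactly as the renewal picture of \cthm{ldrenewal} and \cthm{lowlimit} suggests, and then to package that count into generating functions.

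First I would fix a particle initially at a site $i$ lying strictly inside a block, $p_k<i<p_{k+1}-1$. By \cthm{lowlimit} the height $h_t$ at the particle's location rises monotonically, and in the final state every particle site in $\rng{p_k+1}{p_{k+1}}$ carries the height $h_0(p_k)-1$ (the final pattern there being $(1\,0)^n0$, with $h_\infty(p_k)=h_0(p_k)$ untouched); since each unit of height gained corresponds to one rightward step, the particle moves a distance $h_0(p_k)-h_0(i)-1$. Crucially, this displacement depends only on the Dyck path obtained from $\eta_0\rng{p_k+1}{p_{k+1}-1}$ (the partial sums of $(-1)^{\eta_0(\cdot)}$, shifted to start at $h_0(p_k)$) together with the position of the particle along that path. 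So the relevant quantity is $\Delta(n,d)$: the total number, over the $c_n$ Dyck paths of length $2n$, of particles (i.e.\ of $1$'s) whose displacement equals $d$.

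Second I would establish the recursion \eqref{recursion} for $\Delta(n,d)$ by the usual first-return decomposition of a Dyck path (conditioning on the first return to the starting height), with the boundary conventions $\Delta(0,d)=0$ and $\Delta(n,-1)=c_n$. Passing to $G(u)=\sum_n c_nu^n$ and $G_d(u)=\sum_n\Delta(n,d)u^n$, the recursion becomes $G_d=uG\,(G_{d-1}+G_d)$, which together with the standard identity $G=1+uG^2$ is solved explicitly by $G_d(u)=G(u)\,(G(u)-1)^{d+1}$.

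Third comes the probabilistic bookkeeping. Conditioning on $\eta_0(0)=1$ forces $0$ to be an interior site of its block (since $\eta_0(p-1)=\eta_0(p)=0$ for $p\in P$), lying in some $(p_k,p_{k+1})$. I would compute $\pi_n:=\murho(p_{k+1}-p_k=2n+1\mid\eta_0(0)=1)$ by summing over the candidate renewal points $p_k$ to the left of the origin: each site lies in $P$ with probability $1-2\rho$ by \clem{density}, from it the forward gap has the Catalan law $c_n\rho^n(1-\rho)^{n+1}$ of \eqref{catdist}, inside the resulting block all $c_n$ Dyck paths are equally likely, and the number of (path, origin-position) pairs placing a $1$ at the origin totals $nc_n$; dividing by $\murho(\eta_0(0)=1)=\rho$ gives $\pi_n=n(1-2\rho)c_n\rho^{n-1}(1-\rho)^{n+1}$. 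Since, conditionally, the pair (Dyck path, which of its particles sits at the origin) is uniform over these $nc_n$ possibilities, $\murho(D=d\mid\eta_0(0)=1)=\sum_{n\ge1}\pi_n\,\Delta(n,d)/(nc_n)=\frac{(1-\rho)(1-2\rho)}{\rho}\,G_d(\rho(1-\rho))$. Evaluating at $u=\rho(1-\rho)$, where $1-4u=(1-2\rho)^2$ gives $G(\rho(1-\rho))=1/(1-\rho)$ and $G(\rho(1-\rho))-1=\rho/(1-\rho)$, yields $\murho(D=d\mid\eta_0(0)=1)=\frac{1-2\rho}{1-\rho}\bigl(\tfrac{\rho}{1-\rho}\bigr)^{d}$, geometric with ratio $\rho/(1-\rho)$ and mean $\rho/(1-2\rho)$.

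The routine parts are the first-return recursion and the generating-function algebra. The step needing the most care is the third: making the ``equally likely'' statements precise (that conditioning on the block length makes the Dyck path uniform, and that the origin is then uniform among that block's particle sites) and correctly tracking the size-biasing introduced by conditioning on a particle at the origin --- the factor $n$ coming from $\sum_j a_{n,j}=nc_n$, where $a_{n,j}$ is the number of length-$2n$ Dyck paths with a particle at a fixed relative position, together with the compensating division by $\rho$. One should also check at the outset that the conditioning event indeed places $0$ strictly inside its block, so that the interior-particle displacement formula applies.
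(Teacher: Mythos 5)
Your proposal is correct and follows essentially the same route as the paper: the displacement formula $h_0(p_k)-h_0(i)-1$, the first-return recursion for $\Delta(n,d)$, the generating-function identity $G_d(u)=G(u)(G(u)-1)^{d+1}$, and the size-biased conditioning via $\pi_n$ are all exactly the paper's argument. No substantive differences to report.
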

 
\begin{remark}\label{usual} Consider the deterministic discrete-time
TASEP, in which all particles with an empty site to their left jump to
that site at integer times.  (We have reversed the conventional choice of
jump direction, with which the model is also called CA 184 \cite{BF}, for
reasons to be seen shortly.) The model is often studied in a
probabilistic version, in which each jump takes place with some
probability $p$; in this case there is a unique TIS state \cite{Evans}.
For the deterministic model with density $\rho<1/2$, however, {\it any}
TI state in which, with probability 1, each particle is isolated, is
stationary, since each configuration simply translates to the left with
velocity 1.  These are all the TIS measures \cite{BF}.  It is then
natural to consider a modified dynamics in which, at each time step, one
first does a TASEP update, then translates all particles to the right by
one site.  This gives a modification of the facilitated dynamics: an
isolated particle does not move, but if there is a block of $k$ particles
then the left-most one stays fixed and the remaining $k-1$ move one step to
the right.  Our analysis of the F-TASEP through the height function can
then be applied directly, so that an initial configuration $\eta_0$
evolves under the modified TASEP dynamics to the same $\eta_\infty$ as in
the F-TASEP.

For TI initial (and hence final) states the modification of the dynamics
will not affect stationarity, so that if we start the system in some TI
measure $\lambda$ then the final measure will be the F-TASEP final
measure $\lambda_\infty$; in particular, if $\lambda$ is Bernoulli then
the final measure will be the one of \crem{muinf}.  On the other hand, if
$\lambda$ is the initial measure for the TASEP in which particles move to
the right then the final measure will be $R_*((R_*\mu)_\infty)$, where
$R:X\to X$ is reflection.  For either direction of motion, stationary
states at $\rho>1/2$ may then be determined through the usual
particle-hole symmetry.  \end{remark}

 \section{The high density region}\label{high}

We now turn to the high density region $2/3<\rho<1$; we will be brief,
because the behavior of the model here is very similar to that in the low
density region.  Recall from \csects{preliminary}{height} that
the dynamics will now be given by $\eta_{t+1}=\Uh\eta_t$ and
$h_{t+1}=\Uh h_t$. 

Let us first fix an initial configuration $\eta_0$ with density $\rho$,
$2/3<\rho<1$, and determine its final form $\eta_\infty$.  We define
$Q:=\{q\in\bbz\mid h_t^*(q)<\inf_{i<q}h_t^*(i)\}$; \cthm{highlimit}(a)
below justifies this notation, which ignores the apparent $t$ dependence
of $Q$.  Since $h_t$ has mean slope $1-2\rho<-1/3$,
$\lim_{i\to\pm\infty}h^*_t(i)=\mp\infty$, so that $Q$ is well defined and
unbounded above and below.  Note that if $q\in Q$ then
$\eta\rng{q-2}{q}=1\,1\,1$ for all $k$.

\begin{theorem}\label{highlimit} (a) $Q$ as defined above is independent
of $t$.
 
\par \smallskip\noindent
 (b) For each $i\in\bbz$, $\eta_t(i)$ and $h_t(i)$ are eventually
constant, and if we denote these limiting values by $\eta_\infty(i)$ and
$h_\infty(i)$, then  then for $q$ and $q'$ any consecutive
points of $Q$ there is an $n$ with
 \be\label{highetainf}
\eta_\infty\rng{q+1}{q'}=(0\,1\,1)^nd\,1. 
 \ee
\end{theorem}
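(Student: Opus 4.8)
The plan is to follow the proof of \cthm{lowlimit} closely, with the ordinary height profile $h_t$ replaced by the modified profile $h^*_t(k)=h_t(k)+k/3$ of \csect{height} and the running--maximum set $P$ replaced by the running--minimum set $Q$.  First I would record how $h^*$ evolves: from $h_{t+1}=\tau\Ul h_t-1$ and the definition of $h^*$ one computes

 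\[
 h^*_{t+1}(k)=\begin{cases}
  h^*_t(k-1)+4/3,&\text{if }\eta_t\rng{k-2}{k}=1\,1\,0,\\
  h^*_t(k-1)-2/3,&\text{otherwise,}\end{cases}
 \]

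the first case being exactly the occurrence of an exchange $1\,1\,0\to1\,0\,1$ at site $k-1$ at time $t$.  Since $h^*_t(k)-h^*_t(k-1)$ equals $-2/3$ if $\eta_t(k)=1$ and $+4/3$ if $\eta_t(k)=0$, comparison gives $h^*_{t+1}(k)\le h^*_t(k)$ for every $k$, with strict inequality --- and then a decrement of exactly $2$ --- precisely when $\eta_t(k)=0$ and no exchange occurs at $k-1$.  Thus $h^*_t(k)$ is non-increasing in $t$, playing the role of the monotonicity of $h_t$ in the low density region.

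For part (a): if $q\in Q$ then $\eta_t(q)=1$ (indeed $\eta_t\rng{q-2}{q}=1\,1\,1$), so no exchange occurs at $q-1$ and hence $h^*_{t+1}(q)=h^*_t(q-1)-2/3=h^*_t(q)$: the value of $h^*$ at a point of $Q$ never changes.  To see that such a $q$ remains a strict running minimum at time $t+1$, take $i<q$: if an exchange occurs at $i-1$ then $h^*_{t+1}(i)=h^*_t(i-1)+4/3>h^*_t(i-1)>h^*_t(q)$, while if not then $h^*_{t+1}(i)=h^*_t(i-1)-2/3$, and a short case analysis --- using repeatedly that $h^*_t(j)>h^*_t(q)$ for $j<q$ and that the increments of $h^*$ lie in $\tfrac23\bbz$ --- excludes the borderline value $h^*_t(i-1)=h^*_t(q)+2/3$, so that $h^*_t(i-1)\ge h^*_t(q)+4/3$ and $h^*_{t+1}(i)\ge h^*_t(q)+2/3>h^*_{t+1}(q)$.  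Hence $Q$ at time $t$ is contained in $Q$ at time $t+1$, with matching $h^*$-values.  Finally, since all down-increments of $h^*$ equal $-2/3$, at every time the running-minimum values of $h^*$ decrease by exactly $2/3$ from one point of $Q$ to the next, so these values form a doubly infinite arithmetic progression of step $2/3$ with which $Q$ is in bijection; the inclusion together with the equality of $h^*$-values then forces $Q$ at the two times to coincide.

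For part (b), $h^*_t(k)$ is non-increasing in $t$ and bounded below (any $q\in Q$ with $q>k$ gives $h^*_t(k)>h^*_t(q)=h^*_0(q)$, and $Q$ is unbounded above), and it changes only by decrements of $2$, so it is eventually constant; hence so are $h_t(k)$ and, via \eqref{eta-h}, $\eta_t(k)$, with limits $h_\infty(k)$, $\eta_\infty(k)$.  To identify $\eta_\infty$ between consecutive $q,q'\in Q$ I would use that $\eta_\infty$ is a fixed point of $\Uh$, i.e.\ $h^*_\infty(k)=(\Ul h^*_\infty)(k-1)-2/3$ for all $k$; reading off the two cases of the evolution rule shows that every $0$ of $\eta_\infty$ is immediately preceded by two $1$'s.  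Combined with $h^*_\infty(q)<h^*_\infty(j)$ for $j<q$, with $h^*_\infty(q')=h^*_\infty(q)-2/3$, and with $h^*_\infty(i)\ge h^*_\infty(q)$ for $q\le i<q'$ (because $q'$ is the first site at which $h^*_\infty$ drops strictly below $h^*_\infty(q)$), this pins $h^*_\infty$ down on $[q,q']$: normalizing $h^*_\infty(q)=0$, it traces, for some $n\ge0$, $n$ successive ``bumps'' $0\to4/3\to2/3\to0$ followed by the single down-step to $-2/3$ at $q'$.  Translating back via \eqref{eta-h} yields \eqref{highetainf}.

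The step I expect to be the main obstacle is the borderline case analysis in (a) --- ruling out $h^*_t(i-1)=h^*_t(q)+2/3$ when there is no exchange at $i-1$.  This is where the argument genuinely departs from the low density proof: because the local increments of $h^*$ are $\tfrac23$-valued and non-monotone (in contrast to the monotone $\pm1$ increments of $h$), one must exclude a couple of near-minimal local patterns by hand rather than invoke a one-line geometric observation.  The rest is a faithful, if more bookkeeping-heavy, transcription of the low density arguments, together with the new ingredient that $h^*$ is non-increasing in $t$.
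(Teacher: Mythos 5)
Your proposal is correct and takes essentially the same approach as the paper, whose proof simply asserts that the argument is ``completely parallel'' to the low-density case via the monotonicity of $h^*_t(i)$ in $t$; you supply the details of exactly that parallel argument. The one step you defer---excluding $h^*_t(i-1)=h^*_t(q)+2/3$ when a decrease occurs at $i<q$---does close in two short cases (if $\eta_t(i-1)=0$ then $h^*_t(i-2)=h^*_t(q)-2/3$, while if $\eta_t(i-1)=1$ and $\eta_t(i-2)=0$ then $h^*_t(i-3)=h^*_t(q)$, each contradicting the strict minimality defining $q\in Q$), so the proof is complete.
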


\begin{proof} The proof is completely parallel to that of \cthm{lowfate}.
One first checks that, for $i\in\bbz$, $h^*_t(i)$ is nonincreasing in
$t$, with $h^*_{t+1}(i)<h^*_t(i)$ possible only if $h^*_t(k-1)>h^*_t(k)$;
this implies that $Q$ is time-independent and that the $h^*_\infty$, and
hence $h_\infty$ and $\eta_\infty$, exist.  Then the fact that
$\eta_\infty\in\Xh$ (see \ccor{threeregions}(c)) yields
\eqref{highetainf}.  \end{proof}

Now we suppose that the initial configuration $\eta_0$ is distributed
according to the Bernoulli measure $\murho$, with $\rho$ satisfying
$2/3<\rho<1$.  Let $V^*:=\{\eta_0\mid0\in Q(\eta_0)\}$, let $\mu^*$ denote
the measure $\murho$ conditioned on $V^*$, and for $\eta_0\in Q$ index
the points of $Q$ in increasing order, with $q_0=0$.  We first find the
distribution of the differences $q_k-q_{k-1}$, and in doing so will refer
to the sequence
 \be\label{raney}
d_n=\frac1{2n+1}\binom{3n}{n},\qquad n=0,1,2,\ldots.
 \ee
 The $d_n$ are a particular case of {\it Fuss-Catalan} or {\it Raney}
numbers \cite{MPZ} (OEIS entry {\bf A001764} \cite{OEIS}).  $d_n$ counts
the number of paths from the origin to $(3n,-n)$, with possible steps
$(1,1)$ and $(1,-1)$, such that the path never goes below the line
$y=-x/3$.

\begin{theorem}\label{hdrenewal}The random variables $N^*_k=q_{k+1}-q_k$
are i.i.d. under $\mu^*$, with distribution
 \be\label{randist}
\mu^*\{N^*_k=3n+1\}=d_n\rho^{2n+1}(1-\rho)^n, \quad n=0,1,2,\ldots,
 \ee
 \end{theorem}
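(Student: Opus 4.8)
The plan is to run the proof of \cthm{ldrenewal} essentially verbatim, with two substitutions: the height profile $h_0$ is replaced by the modified profile $h_0^*(k)=h_0(k)+k/3$ of \csect{height}, and the count of Dyck paths is replaced by the count of the ``Raney paths'' (paths from $(0,0)$ to $(3n,-n)$ with steps $(1,\pm1)$ staying on or above the line $y=-x/3$) that, by the interpretation recorded just before the theorem, equals $d_n$. Throughout I work with a $\murho$-typical $\eta_0$, so that $\eta_0$ has density $\rho$ and $h_0^*$ has mean slope $4/3-2\rho<0$, making $Q$ well defined and bi-infinite.

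First I would establish that $\mu^*$ is well defined and that the increments $N^*_k$ are i.i.d.\ under it. Since $Q$ is a.s.\ nonempty (as noted just before \cthm{highlimit}) and $\murho$ is translation invariant, $\murho(V^*)=\murho(\{0\in Q\})>0$, so $\mu^*$ makes sense. Whether a site $q$ lies in $Q$ depends only on $h_0^*(q)$ and $\inf_{i<q}h_0^*(i)$, hence only on the $\eta_0(i)$ with $i\le q$; and, as the next paragraph shows, once $q\in Q$ the location of the next point $q'$ of $Q$ is determined by the $\eta_0(i)$ with $i>q$. Because $\murho$ is a product measure this yields independence of the $N^*_k$, and translation invariance gives the common law.

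Next, the distribution. Fix $q\in Q$ and put $l=q+3n+1$. Since $q\in Q$ we have $\inf_{i\le q}h_0^*(i)=h_0^*(q)$, and a short induction on $j$ shows that ``$q+1,\dots,l-1\notin Q$'' is equivalent to $h_0^*(j)\ge h_0^*(q)$ for all $q<j<l$; thus $q'=l$ iff this holds and in addition $h_0^*(l)<h_0^*(q)$. Reading this off the $\pm1$-step path $j\mapsto h_0(q+j)-h_0(q)$ ($0\le j\le 3n$), the condition $h_0^*(q+j)\ge h_0^*(q)$ becomes $h_0(q+j)-h_0(q)\ge-j/3$. Here is the one genuinely new point: since $h_0^*$ changes in multiples of $2/3$ and $h_0^*(q)$ is its infimum over $i\le q$ (use items (ii)--(iii) of \eqref{highmod}), validity of the partial path forces $h_0(q+3n)-h_0(q)=-n$ exactly; so the admissible partial paths are precisely the $d_n$ Raney paths from $(0,0)$ to $(3n,-n)$ staying above $y=-x/3$, with \emph{no} extra endpoint constraint. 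Finally $h_0^*(l)<h_0^*(q)$ is equivalent to the last step being a down-step, i.e.\ $\eta_0(l)=1$ (which, given the partial path, drops $h_0^*$ by $2/3$ below its previous value $h_0^*(q)$, so this is exactly what makes $l\in Q$). Hence the strings $\eta_0\rng{q+1}{l}$ with $q'=l$ are in bijection with the $d_n$ Raney paths; each contains $n$ zeros and $2n+1$ ones, so has $\murho$-probability $\rho^{2n+1}(1-\rho)^n$, and summing over such strings gives \eqref{randist}.

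The main obstacle is precisely the quantization step: one must use that $h_0^*$ lives in a shifted copy of $\tfrac23\bbz$ and that $h_0^*(q)$ is a running infimum to conclude that an admissible path of length $3n$ terminates at height exactly $h_0^*(q)$, so that the Raney condition alone (with no separate return-to-floor requirement) captures the event $q'=l$, and the $(3n{+}1)$-st step is the unique continuation that carries $h_0^*$ strictly below $h_0^*(q)$. Once this is in place the remainder is bookkeeping identical in form to the low-density argument.
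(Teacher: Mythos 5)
Your proof is correct and takes essentially the same route as the paper's: both transplant the argument of \cthm{ldrenewal} to the modified profile $h^*$, characterizing the event $q'=l$ by ``$h^*_0(i)\ge h^*_0(q)$ for $q<i<l$ and $h^*_0(l)<h^*_0(q)$'' and identifying the admissible strings $\eta_0\rng{q+1}{l}$ with the $d_n$ paths to $(3n,-n)$ staying on or above $y=-x/3$. The quantization step you single out (that $h^*$ moves in multiples of $2/3$, so the endpoint is forced and $h^*_0(l)=h^*_0(q)-2/3$) is precisely what the paper's stated condition $h^*_0(l)=h^*_0(q_k)-2/3$ encodes; you have simply made explicit a detail the paper leaves to the reader.
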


\begin{proof} Independence of the increments $N^*_k$ is established as in
the proof of \cthm{ldrenewal}.  As in that proof, if $l=q_k+3n+1$ then
$q_{k+1}=l$ if and only if $h^*_0(l)=h^*_0(q_k)-2/3$ and the segment of
$h_0(i)$ for $q_k\le i<q_k+l$ is a path of the type counted by $d_n$ (see
the previous paragraph).  Thus there are $d_n$ configurations of
$\eta\rng{q_k+1}l$ yielding $q_{k+1}=l$, each with probability
$\rho^{2n+1}(1-\rho)^n$, establishing \eqref{catdist}.  \end{proof}

Our next theorem summarizes various results for the high density region
which are parallel to the results of \csect{low} for the low density
region.  In stating these we define, as in \csect{low},
$g(k)=\murho(\eta_\infty(0)\eta_\infty(k))$.  We omit the proofs, all of
which are modifications of those of the previous section.

\begin{theorem}\label{hdvarious}Suppose that $2/3<\rho<1$.  Then:

 \smallskip\noindent
 (a) For any $i\in\bbz$, $\murho(\{i\in Q(\eta_0)\})=3\rho-2$.

 \smallskip\noindent
 (b) For any $n\ge1$, $\mu^*(\{3n\in Q\})=\mu^*(\{3n+1\in Q\})$.

 \smallskip\noindent
 (c) For any $n\ge0$, $\mu^*(\{\eta_\infty(3n+1)=1\})=\rho$.
 
\smallskip\noindent
 (d) For any $n\ge1$, $g(3n-2)+g(3n-1)+g(3n)=3\rho^2$.

 \smallskip\noindent
 (e) Let $\alpha^*_0:=(27\rho^2(1-\rho)/4d)^{1/3}<1$. Then for any
$\alpha^*>\alpha^*_0$ there is a $C_{\alpha^*}>0$ such that
$|g^T(k)|\le C_{\alpha^*}\alpha^{*k}$.

 \smallskip\noindent
 (f) Let $S_L=\sum_{i=1}^L\eta_\infty(i)$ and
$V_L=\Var(S_L)=\murho(S_L^2)-(\rho L)^2$. Then for $2/3<\rho<1$,
$\lim_{L\to\infty} V_L/L = \rho(1-\rho)$.  \end{theorem}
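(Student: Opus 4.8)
The overall strategy is to transport, essentially verbatim, the chain of arguments developed in \csect{low} for the low density region, replacing the height profile $h$ by the modified profile $h^*$ of \csect{height}, the renewal point set $P$ by $Q$, the relation ``$\eta_\infty\in\Xl$'' by ``$\eta_\infty\in\Xh$'', and the period-$2$ structure by a period-$3$ structure. Since part (d) is used to prove (f) (exactly as \cthm{twopt}(b) feeds into \cthm{variance}), I will order the steps (a), (b), (c), (d), (e), (f).

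First, (a): by \cthm{highlimit}(b), $i\in Q(\eta_0)$ holds if and only if $\eta_\infty\rng{i-2}i=1\,1\,1$; since $\eta_\infty\in\Xh$ almost surely, the events $\{\eta_\infty(j)=1\}$ for $j$ in a window of size $3$ are constrained so that $\murho(\eta_\infty(i))=\rho$ together with the exclusion rules in $\Xh$ force $\murho(\{i\in Q\})=3\rho-2$ --- this is the exact analogue of the short computation in \clem{density}, now using that $h^*$ has mean slope $1-2\rho+1/3=4/3-2\rho<0$ so the density of points of $Q$ is $2\rho-4/3$ per unit of $h^*$, i.e. $3\rho-2$ per site. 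Next, (b) is the $h^*$-version of \clem{addx}: for $\eta_0\in V^*$ and $x\in\{0,1\}$ insert $x$ just right of the origin to form $\eta_0^x$, and show $3n\in Q(\eta_0)$ iff $3n+1\in Q(\eta_0^0)\cap Q(\eta_0^1)$, using that passing from $h_0^{*x}$ to $h_0^*$ shifts and vertically displaces the portion of the profile right of site~$1$ by controlled amounts (here one must track the $k/3$ term in $h^*$, so the ``even/odd'' parity bookkeeping of \clem{addx} is replaced by the mod-$3$ congruences \eqref{highmod}); the geometric picture is identical. Then (c) follows from (b) as \cthm{twopt}(a) follows from \clem{addx}: (b) makes the law under $\mu^*$ of $(\eta_\infty(3n-2),\dots,\eta_\infty(3n+1))$ symmetric under exchange of the first and last coordinates, so $\mu^*(\eta_\infty(3n+1))$ is independent of $n$, and since no particle crosses $\langle0,1\rangle$ when $0\in Q$ we get $\mu^*(\eta_\infty(1))=\murho(\eta_0(1))=\rho$. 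For (d), combine (c) with the identity $\eta_\infty(3n+1)=(1-\eta_\infty(3n+1))$-type algebra: writing $\mu^*(\eta_\infty(3n+1))$ in terms of the event $\{0\in Q\}=\{\eta_\infty(-2)=\eta_\infty(-1)=\eta_\infty(0)=1\}$ and expanding $\murho(\eta_\infty(-2)\eta_\infty(-1)\eta_\infty(0)\eta_\infty(3n+1))$ via inclusion--exclusion, together with the $\Xh$ constraint that $g(1)=g(2)=0$, yields $g(3n-2)+g(3n-1)+g(3n)=3\rho^2$ after dividing by $\murho(V^*)=3\rho-2$. Finally, (e) is proved by the generating-function method of \capp{genfct}/\clem{decay}: the renewal structure of Theorem~\ref{hdrenewal} with Fuss--Catalan weights $d_n\rho^{2n+1}(1-\rho)^n$ gives a generating function $G^T(z)$ whose radius of convergence is governed by the singularity of the Fuss--Catalan generating function, located where $27\rho^2(1-\rho)z^3/4d=1$, i.e. $|z|<1/\alpha_0^*$; and (f) then follows from the standard formula $\lim V_L/L=\rho(1-\rho)+\sum_{k\ge1}g^T(k)$ (justified by (e)) combined with the telescoping consequence of (d) that $\sum_{k=1}^{3m}g^T(k)=0$.

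The main obstacle --- really the only place where the high density argument is genuinely more delicate than a transcription --- is step (b), the insertion lemma. In \clem{addx} the key inequalities rely on $h_0(2n)$ being \emph{even} and strictly positive; in the $h^*$ setting the analogous statements must be read off from the mod-$2$ and mod-$3$ congruences \eqref{highmod}(ii)--(iii), and because $h^*$ carries the non-integer drift term $k/3$ one has to be careful that inserting a single site shifts the profile right of the insertion point by exactly the right fractional amount for the strict inequality $h_0^{*1}(3n+1)>h_0^{*1}(i)$ (for $i\le0$) to survive. Once the correct bookkeeping is set up this is routine, but it is the step where a careless translation would fail. Everything downstream --- (c), (d), (e), (f) --- is then a mechanical adaptation of \cthm{twopt}, \clem{decay}, and \cthm{variance}, which is why we omit those proofs.
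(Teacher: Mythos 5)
Your proposal follows exactly the route the paper intends (the paper omits these proofs, stating only that they are modifications of the Section~\ref{low} arguments), and the adaptations you describe---in particular the insertion lemma for (b), where the mod-$2$/mod-$3$ congruences \eqref{highmod} give $h^*(3n)\in2\bbz$ and hence $h^*(3n)\le-2$, which is precisely the margin needed for the $x=0$ insertion---are correct. One slip in (d): the $\Xh$ constraint is not ``$g(1)=g(2)=0$'' (false at high density, where in fact $g(1)=g(2)=2\rho-1$), but rather that no two holes occur within distance two, i.e.\ $\murho\bigl((1-\eta_\infty(0))(1-\eta_\infty(k))\bigr)=0$ for $k=1,2$; this is what gives $\eta_\infty(-2)\eta_\infty(-1)\eta_\infty(0)=\eta_\infty(-2)+\eta_\infty(-1)+\eta_\infty(0)-2$ almost surely and lets your inclusion--exclusion yield $g(3n+1)+g(3n+2)+g(3n+3)=3\rho^2$.
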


There is no analogue of \cthm{distance}, since in the high density region
particles never stop moving, either in the original dynamics given by $U$
or the modified dynamics given by $\Uh$.

\section{The intermediate density region}\label{intermediate}

In this section we study the dynamics in the intermediate density region
$1/2<\rho<2/3$.

\subsection{Evolution of a single  configuration}\label{interfate}

 We first discuss the evolution of a given configuration of density
$\rho$, beginning with some preliminary technical results.  Recall that
in each of \csects{low}{high} the final configuration was determined by a
special family of sites; these families were denoted $P$ and $Q$
respectively, and were stationary during the evolution.  In the
intermediate region we need to define, somewhat similarly, two families
of sites, which we will denote by $P=\{p_k\}_{k\in\bbz}$ and
$Q=\{q_k\}_{k\in\bbz}$; here, however, the sites in these families move
during the evolution.

\begin{definition}\label{PQ}Suppose that we are given a height profile
$h(i)$, $i\in\bbz$, corresponding via \eqref{eta-h} to a configuration
$\eta$ of density $\rho$, $1/2<\rho<2/3$; as usual we let
$h^*(i)=h(i)+i/3$.  Let $A,B\,\bigl(=A(\eta),B(\eta)\bigr)\subset\bbz$ be
the sets of those sites which satisfy respectively
 \be\label{repeat}
 h(a)>\sup_{r>a}h(r),\ a\in A,\qquad\hbox{and}\qquad
  h^*(b)<\inf_{r>b}h^*(r),\ b\in B.
 \ee
 $A$ and $B$ are disjoint, since if $r\in A\cap B$ then \eqref{repeat}
implies that $h(r)>h(r+1)>h(r)-1/3$, impossible since $h$ takes integer
values.  Moreover, $A$ and $B$ are unbounded, both above and below, by
our assumption on $\rho$; we index the points of $A$ and $B$ as
increasing sequences $(a_j)_{j\in\bbz}$ and $(b_j)_{j\in\bbz}$,
respectively.  Let $P$ be the set of elements $a_j\in A$ such that there
exists a $b\in B$ satisfying $a_{j-1}<b<a_j$, and similarly let $Q$ be
the set of $b_j\in B$ such that there exists an $a\in A$ satisfying
$b_{j-1}<a<b_j$.  If we index $P$ as the increasing sequence
$(p_k)_{k\in\bbz}$, then clearly exactly one point of $Q$---the smallest
element of $B\cap(p_{k-1},p_k)$---lies in $(p_{k-1},p_k)$.  We denote
this element $q_{k-1}$.\end{definition} 

\begin{lemma}\label{sequences} The sequences $p=(p_k)_{k\in\bbz}$ and
$q=(q_k)_{k\in\bbz}$  satisfy
 \be\label{order}
 \cdots p_{-1}<q_{-1}<p_0<q_0<p_1<q_1\cdots
 \ee
 and
 \begin{align}
 h(p_k)&>\sup_{r>p_k}h(r),\label{p1}\\
 h(p_k)&\ge\sup_{q_{k-1}\le r\le p_k}h(r),\label{p2}\\
 h^*(q_k)&<\inf_{r>q_k}h^*(r),\label{q1}\\
 h^*(q_k)&\le\inf_{p_{k}\le r\le q_k}h^*(r).\label{q2}
 \end{align}
 Moreover they are, up to a shift of labels, the unique sequences
 satisfying these equations.
\end{lemma}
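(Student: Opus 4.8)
The plan is to translate the whole statement into the combinatorics of the merged set $A\cup B$ listed in increasing order. First I would introduce two auxiliary functions, $M(a):=\sup_{r\ge a}h(r)$ and $m^*(b):=\inf_{r\ge b}h^*(r)$. These are finite because $1/2<\rho<2/3$ makes the mean slope $1-2\rho$ of $h$ negative and the mean slope $4/3-2\rho$ of $h^*$ positive, so $h(r)\to-\infty$ and $h^*(r)\to+\infty$ as $r\to+\infty$ (with the opposite limits as $r\to-\infty$, which is also why $A$ and $B$ are bi-infinite). From $M(j)=\max(h(j),M(j+1))$ one reads off that $j\in A$ iff $M(j)>M(j+1)$, in which case $M(j)=h(j)$; dually $j\in B$ iff $m^*(j)<m^*(j+1)$, in which case $m^*(j)=h^*(j)$. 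In particular $M(i)=M(i+1)$ whenever $i\notin A$ and $m^*(i)=m^*(i+1)$ whenever $i\notin B$.

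Next I would analyze the structure of $A\cup B$. Since $A$ and $B$ are disjoint and locally finite, their union, listed in increasing order, splits into maximal runs of $A$-elements (``$A$-blocks'') alternating with maximal runs of $B$-elements (``$B$-blocks''), each block finite, and---because $A$ and $B$ are each unbounded above and below---with infinitely many blocks of each type in both directions. For consecutive elements $a_{j-1}<a_j$ of $A$, the open interval $(a_{j-1},a_j)$ meets $B$ precisely when $a_j$ starts a new $A$-block; hence the set $P$ of \cdef{PQ} is exactly the set of first elements of the $A$-blocks, and $Q$ the set of first elements of the $B$-blocks. Since the blocks alternate, $P$ and $Q$ interleave: the unique $Q$-point lying in $(p_{k-1},p_k)$ is the start $q_{k-1}$ of the one $B$-block trapped between the $A$-blocks headed by $p_{k-1}$ and $p_k$, which is \eqref{order}. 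Then \eqref{p1} and \eqref{q1} are just the conditions \eqref{repeat} applied to $p_k\in A$ and $q_k\in B$.

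For \eqref{p2}, the block description gives that $[q_{k-1},p_k)$ contains no element of $A$, so $M$ is constant from $q_{k-1}$ to $p_k$; hence $\sup_{r\ge q_{k-1}}h(r)=M(q_{k-1})=M(p_k)=h(p_k)$, the last equality because $p_k\in A$, and \eqref{p2} follows a fortiori. Symmetrically, $[p_k,q_k)$ contains no element of $B$, so $m^*$ is constant from $p_k$ to $q_k$ and $\inf_{r\ge p_k}h^*(r)=m^*(p_k)=m^*(q_k)=h^*(q_k)$, which gives \eqref{q2}.

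For uniqueness, suppose $(\tilde p_k),(\tilde q_k)$ satisfy \eqref{order} together with \eqref{p1}--\eqref{q2}. By \eqref{p1} each $\tilde p_k\in A$, and \eqref{p1} and \eqref{p2} together give $h(\tilde p_k)\ge h(r)$ for every $r\ge\tilde q_{k-1}$; hence $h(\tilde p_k)=M(\tilde q_{k-1})$ (the supremum being attained at $\tilde p_k\ge\tilde q_{k-1}$), and since also $M(\tilde p_k)=h(\tilde p_k)$, the monotonicity of $M$ forces it constant on $[\tilde q_{k-1},\tilde p_k]$, so no element of $A$ lies in $[\tilde q_{k-1},\tilde p_k)$ and $\tilde p_k=\min\{a\in A:a>\tilde q_{k-1}\}$---i.e.\ $\tilde p_k$ is the start of an $A$-block. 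Dually $\tilde q_k=\min\{b\in B:b>\tilde p_k\}$ is the start of a $B$-block. But the genuine sequences $(p_k),(q_k)$ obey these same two ``next block start'' recursions, so writing $\tilde p_0=p_m$ and running the recursions forward and backward gives $\tilde p_k=p_{m+k}$, $\tilde q_k=q_{m+k}$. I expect the only real obstacle to be the bookkeeping in the second paragraph---checking that ``$a_j$ starts a new $A$-block'' is literally the defining condition in \cdef{PQ}, and that the block alternation is genuinely two-sided (which uses that $A$ and $B$ are unbounded in both directions with no infinite block)---after which \eqref{p1}--\eqref{q2} and the uniqueness drop out from the monotonicity of $M$ and $m^*$.
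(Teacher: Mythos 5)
Your argument is correct and follows essentially the same route as the paper's proof: identifying $P$ and $Q$ with the starts of the alternating $A$- and $B$-blocks, deducing \eqref{p2} and \eqref{q2} from the absence of $A$-points in $[q_{k-1},p_k)$ (resp.\ $B$-points in $[p_k,q_k)$), and for uniqueness using \eqref{p1}--\eqref{q2} to force any candidate $\tilde p_k$, $\tilde q_k$ back into $P$ and $Q$. The auxiliary functions $M$ and $m^*$ just make explicit the monotonicity bookkeeping that the paper's terser proof leaves implicit.
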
 

A graphical interpretation of \eqref{p1}--\eqref{q2} is given in
\cfig{fig:midprof1}, drawn for future purposes at time 0.  Here $h=h_0$
is represented as the piecewise linear curve obtained by connecting each
pair of points $(k,h(k))$ and $(k+1,h(k+1))$ by a line segment.  We have
introduced also two families of straight lines: for each $k\in\bbz$,
$L_k$ is a horizontal line through $(p_{0,k},h_0(p_{0,k}))$, and $L_k^*$
a line of slope $-1/3$ through $(q_{0,k},h_0(q_{0,k}))$. \eqref{p1} and
\eqref{p2} imply respectively that the profile must lie below $L_k$
between $q_{k-1}$ and $p_k$ and strictly below $L_k$ to the right of
$p_k$.  Similarly, \eqref{q1} and \eqref{q2} imply that the profile lies
above $L_k^*$ between $p_k$ and $q_k$ and strictly above $L_k^*$ to the
right of $q_k$.

\begin{figure}[ht!]  \begin{center}
\includegraphics[width=5.3truein,height=2.8truein]{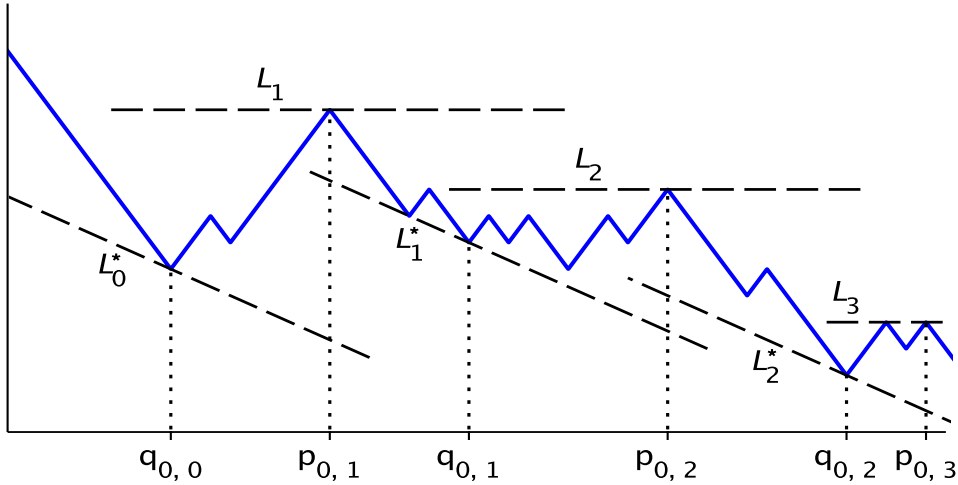}
\caption{Portion of typical initial height profile $h_0$ (solid blue
line), with the initial $p_k$ and $q_k$ values and the lines $L_k$
and $L_k^*$.}\label{fig:midprof1} \end{center} \end{figure}

\begin{proof}[Proof of \clem{sequences}]$(p_k)$ and $(q_k)$ clearly
satisfy \eqref{order}, \eqref{p1}, and \eqref{q1}.  \eqref{p2} follows
immediately from the fact that there can be no point of $A$ between
$q_{k-1}$ and $p_k$; the proof of \eqref{q2} is similar.

For uniqueness, suppose that $(\hat p_k)_{k\in\bbz}$ and
$(\hat q_k)_{k\in\bbz}$ satisfy \eqref{order}--\eqref{q2}, and let
$\Ph=\{\hat p_k\mid k\in\bbz$\} and $\Qh =\{\hat q_k\mid k\in\bbz$\}.
From \eqref{p1} and \eqref{q1} we see that $\Ph\subset A$ and
$\Qh\subset B$.  Now note that, for any $k$, \eqref{p2} implies that no
point of $A$ can belong to $(\hat q_{k-1},\hat p_k)$; this immediately
yields $\hat p_k\in P$.  Similarly, \eqref{q2} implies that
$\hat q_k\in Q$, so that $\Ph\subset P$ and $\Qh\subset Q$.  But now we
may again use $A\cap(\hat q_{k-1},\hat p_k)=\emptyset$ to conclude that
no point of $P$, and hence by \eqref{order} no point of $Q$, can lie in
$(\hat q_{k-1},\hat p_k)$; similarly, no point of $P$ or $Q$ can lie in
$(\hat p_k,\hat q_k)$, so that $P=\hat P$ and $Q=\hat Q$. \end{proof}

In our next result we record some trivial consequences of \clem{sequences}.

\begin{lemma}\label{brief}If $p$ and $q$ are as in \clem{sequences}
then for any $k\in\bbz$ we have (a)~$h^*(p_k+2)=h^*(p_k)-4/3$,
(b)~$h(q_k+1)=h(q_k)+1$, (c)~$h^*(q_k)\le h^*(p_k+2)$, and
(d)~$h(p_{k+1})\ge h(q_k+1)$. \end{lemma}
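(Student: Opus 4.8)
The plan is to obtain all four identities directly from the characterizations \eqref{p1}--\eqref{q2} of the sequences $p$ and $q$ in \clem{sequences}, using only the elementary fact that, by \eqref{eta-h}, $h$ is integer-valued with unit steps $h(i)-h(i-1)=\pm1$; equivalently $h^*(i)-h^*(i-1)$ equals $4/3$ when $\eta(i)=0$ and $-2/3$ when $\eta(i)=1$.

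For (a) I would argue as follows. By \eqref{p1}, $h(p_k+1)<h(p_k)$, and since $h$ has unit steps this forces $h(p_k+1)=h(p_k)-1$; then \eqref{p1} also gives $h(p_k+2)<h(p_k)$, and as $h(p_k+2)\in\{h(p_k),h(p_k)-2\}$ we conclude $h(p_k+2)=h(p_k)-2$. Adding $(p_k+2)/3$ gives (a) (in configuration terms this is just $\eta(p_k+1)=\eta(p_k+2)=1$). For (b): \eqref{q1} gives $h^*(q_k+1)>h^*(q_k)$, i.e.\ $h(q_k+1)>h(q_k)-1/3$; since $h(q_k+1)\in\{h(q_k)-1,h(q_k)+1\}$ this forces $h(q_k+1)=h(q_k)+1$, which is (b).

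For (c) I would first note that the computation done for (a) gives $h^*(p_k+1)=h^*(p_k)-2/3>h^*(p_k)-4/3=h^*(p_k+2)$, so $p_k+1$ violates the defining inequality $h^*(b)<\inf_{r>b}h^*(r)$ for membership in $B$ (cf.\ \eqref{repeat}); since by \cdef{PQ} $q_k$ is the smallest element of $B\cap(p_k,p_{k+1})$, this forces $q_k\ge p_k+2$, so that $p_k+2$ lies in the interval $[p_k,q_k]$ and \eqref{q2} yields $h^*(q_k)\le h^*(p_k+2)$. For (d): by \eqref{order} we have $q_k<p_{k+1}$, hence $q_k+1\le p_{k+1}$, so $q_k+1$ lies in $[q_k,p_{k+1}]$, and \eqref{p2} applied with index $k+1$ gives $h(p_{k+1})\ge h(q_k+1)$.

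As the paper itself signals, there is no real obstacle here; the only point worth a moment's care is the argument in (c) ruling out $q_k=p_k+1$, which is exactly what is needed to prevent the strict inequality $h^*(p_k+1)>h^*(p_k+2)$ from defeating the desired bound.
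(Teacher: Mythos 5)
Your proof is correct and follows essentially the same route as the paper's: all four parts are read off from \eqref{p1}--\eqref{q2} together with the unit-step property of $h$. The only cosmetic difference is in (c), where you rule out $q_k=p_k+1$ so that \eqref{q2} alone applies, whereas the paper covers that case by invoking \eqref{q1} (which gives the strict inequality $h^*(q_k)<h^*(p_k+2)$ when $p_k+2>q_k$); both handlings are valid.
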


\begin{proof}Observe first that $\eta(p_k+1)=\eta(p_k+2)=1$, for
otherwise $h(p_k+2)\ge h(p_k)$, contradicting \eqref{p1}; this gives
(a).  Similarly, $\eta(q_k+1)=0$, since otherwise $h^*(q_k+1)<h^*(q_k)$,
contradicting \eqref{q1}, and this implies (b). (c) is an immediate
consequence of \eqref{q1} and \eqref{q2}, and (d) of \eqref{p2}.\end{proof}

We now turn to the dynamics.  We fix an initial configuration $\eta_0$,
with density $\rho$ satisfying $1/2<\rho<2/3$ (see \eqref{rhodef}); this
then evolves via $\eta_{t+1}=\Ui\eta_t$ and $h_{t+1}=\Ui h_t$.  Let
$A_t$, $B_t$, $P_t$, $Q_t$, $a_t$, $b_t$, $p_t$ and $q_t$ denote the sets
and sequences obtained from $h_t$ as in the definitions above.

\begin{remark}\label{lines} Before giving any further proofs we give a
brief qualitative description of the evolution of $h_t$; a key role is
played by the lines $L_k$, $L_k^*$ of \cfig{fig:midprof1}.  During the
evolution, the point $(p_{t,k},h(p_{t,k}))$ travels to the left along
$L_k$, moving either zero or two lattice sites at each time step and
stopping just short of the intersection with $L_k^*$.  Similarly,
$(q_{t,k},h_t(q_{t,k}))$ travels up and to the left along $L_k^*$, zero
or three lattice sites at each time step, and stops just short of
intersection with $L_{k+1}$ ($h^*_t(q_{t,k})$ is constant during this
evolution).  The precise limiting values $p_{\infty,k}$ and
$q_{\infty,k}$ are given in \eqref{pinf} and \eqref{qinf} below.  After
these special points have reached their limiting positions the profile
may continue to evolve between them, eventually reaching a limiting
position everywhere.  In the region between $p_{\infty,k}$ and
$q_{\infty,k}$, the limiting configuration has the form
$0\,1\,1\,0\,1\,1\,\cdots$ and $h_\infty$ has average slope $-1/3$, while
between $q_{\infty,k-1}$ and $p_{\infty,k}$ the form is
$1\,0\,1\,0\,\cdots$ and $h_\infty$ is essentially flat.  The limiting
configuration for the initial condition of \cfig{fig:midprof1} is shown
in \cfig{fig:midprof2}.  \end{remark}

\begin{figure}[ht!]  \begin{center}
\includegraphics[width=5.3truein,height=2.8truein]{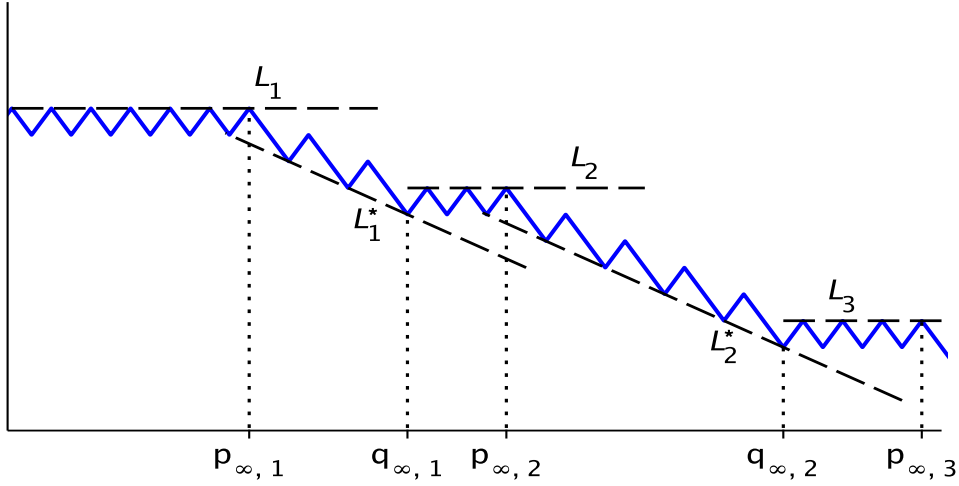}
\caption{Portion of final height profile for the initial profile of
\cfig{fig:midprof1}, with the final $p_k$ and $q_k$ values.}
\label{fig:midprof2} \end{center} \end{figure}

Next we show that (with appropriate indexing) the points
$(p_{t,k},h_t(p_{t,k}))$ and $(q_{t,k},h_t^*(q_{t,k}))$ move during the
evolution as described in \crem{lines}.

\begin{lemma}\label{evolve}The sequences $p_t$ and $q_t$ may be indexed
 so that for all $t\ge0$, 
 \begin{align}
(p_{t+1,k},h_{t+1}(p_{t+1,k}))&=(p_{t,k},h_t({p_t,k}))\text{ or }
   (p_{t,k}-2,h_t({p_t,k})),\label{pmove}\\
(q_{t+1,k},h_{t+1}^*(q_{t+1,k}))&=(q_{t,k},h_t^*({q_t,k})) \text{ or }
  (q_{t,k}-3,h_t^*({q_t,k})).\label{qmove1}
 \end{align}
 In particular, for each $k\in\bbz$ the sequences
$(p_{t,k})_{t=0}^\infty$ and $(q_{t,k})_{t=0}^\infty$ are nonincreasing
and the sequences $(h_t(p_{t,k}))_{t=0}^\infty$ and
$(h_t^*(q_{t,k}))_{t=0}^\infty$ constant.  \end{lemma}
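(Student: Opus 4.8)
The plan is to prove \clem{evolve} by induction on $t$, showing that the defining inequalities \eqref{p1}--\eqref{q2} are propagated by the evolution $\Ui=\tau^{-2}\Ul$, and that under this propagation each distinguished point either stays fixed or moves in the prescribed way. Since both $P_t$ and $Q_t$ are characterized (up to a shift of labels) by \eqref{order}--\eqref{q2} via \clem{sequences}, it suffices to exhibit, given the sequences $p_t$, $q_t$ at time $t$, candidate sequences $\tilde p$, $\tilde q$ of the form appearing on the right-hand sides of \eqref{pmove}--\eqref{qmove1} and to check that they satisfy \eqref{order}--\eqref{q2} for the profile $h_{t+1}$; uniqueness in \clem{sequences} then forces $p_{t+1}=\tilde p$ and $q_{t+1}=\tilde q$ after matching the labels. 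The matching of labels is done once and for all at $t=0$ and then simply carried along.

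First I would unwind the dynamics near a point $p_{t,k}$. By \clem{brief}(a), $\eta_t(p_{t,k}+1)=\eta_t(p_{t,k}+2)=1$ and (from \eqref{p1}) $\eta_t(p_{t,k})=0$; the status of $\eta_t(p_{t,k}-1)$ determines whether the block starting at $p_{t,k}+1$ is preceded by a $0$ or a $1$. Running the F-TASEP rule \eqref{profmove}: if $\eta_t(p_{t,k}-1)=0$ then $\eta_t\rng{p_{t,k}-1}{p_{t,k}+2}=0\,0\,1\,1$, so no exchange occurs at or just left of $p_{t,k}$ and (after the translation by $\tau^{-2}$) the point $(p_{t,k},h_t(p_{t,k}))$ reappears as $(p_{t,k}-2,h_t(p_{t,k}))$ in $h_{t+1}$ — this is the ``moves two'' case; if instead $\eta_t(p_{t,k}-1)=1$, then $\eta_t\rng{p_{t,k}-1}{p_{t,k}+1}=1\,0\,1$ and the pattern $1\,1\,0$ necessary for an exchange affecting $h_t(p_{t,k})$ is absent at $p_{t,k}$ itself, so $h_{t+1}(p_{t,k})$ (before translation) still equals $h_t(p_{t,k})$, giving the ``stays'' case $(p_{t,k},h_t(p_{t,k}))$ after the translation is undone by the fact that this point has absorbed the shift. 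The crucial structural point — and I expect this to be the main obstacle — is to verify that the value $h_{t+1}$ takes at this relocated point still dominates everything to its right in the sense of \eqref{p1}, i.e. that no exchange elsewhere can raise a point to the right of $p_{t+1,k}$ above $h_t(p_{t,k})$; this uses \eqref{profmove}'s constraint that $h_{t+1}(i)>h_t(i)$ forces $h_t(i-1)>h_t(i)$, so increases propagate only ``downhill,'' together with \eqref{p1} at time $t$, to bound the growth. The analogous analysis at $q_{t,k}$ uses \clem{brief}(b), which gives $\eta_t(q_{t,k}+1)=0$, and the modified profile $h^*$: here the relevant local patterns are $0\,0\,1$ versus $0\,1$-preceded-differently, and one checks via \eqref{highmod} and \eqref{q1}--\eqref{q2} (the $h^*$ analogues of the low-density argument, with slope $-1/3$ playing the role of the flat line) that $(q_{t,k},h^*_t(q_{t,k}))$ reappears either fixed or shifted by $3$ to the left with the same $h^*$-value, and that \eqref{q1} persists at $t+1$.

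Having established \eqref{pmove} and \eqref{qmove1}, the ``in particular'' clause is immediate: the first coordinates are nonincreasing because each step subtracts $0$ or $2$ (resp.\ $0$ or $3$), and the second coordinates $h_t(p_{t,k})$ and $h^*_t(q_{t,k})$ are literally constant in $t$ because both alternatives on the right-hand sides preserve them. One small bookkeeping point to address is that the candidate sequences $\tilde p$, $\tilde q$ must be verified to satisfy the interleaving \eqref{order} and the ``enclosure'' conditions \eqref{p2}, \eqref{q2} — i.e.\ that after a $p$-point jumps left by $2$ it has not jumped past the relevant $q$-point, and symmetrically; this follows from the fact that, by \eqref{q2} and \clem{brief}(c), the profile between $p_{t,k}$ and $q_{t,k}$ lies strictly above $L_k^*$ to the right of $q_{t,k}$, so the block of $1$'s at $p_{t,k}+1,p_{t,k}+2$ is separated from $q_{t,k}$ by at least one site, giving room for the leftward step; the detailed inequality-chasing here is routine given \clem{brief} and \clem{sequences} and I would present it compactly. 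With \eqref{order}--\eqref{q2} checked for $h_{t+1}$, uniqueness in \clem{sequences} closes the induction.
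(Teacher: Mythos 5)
Your overall strategy---exhibit candidate sequences of the prescribed form, verify \eqref{order}--\eqref{q2} for $h_{t+1}$, and invoke the uniqueness clause of \clem{sequences}---is exactly the paper's, and your mechanism for preserving \eqref{p1} (increases of $h$ occur only where $h_t(i-1)>h_t(i)$, so $H(i)\le h_t(p_{t,k})$ for $i>p_{t,k}$) is sound. But the local case analysis at $p_{t,k}$, which is the substantive content of \eqref{pmove}, is wrong. The dichotomy between ``stays'' and ``moves by $2$'' is governed by the site $p_{t,k}+3$ to the \emph{right} of the block $\eta_t(p_{t,k}+1)=\eta_t(p_{t,k}+2)=1$, not by $\eta_t(p_{t,k}-1)$. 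If $\eta_t\rng{p_{t,k}+1}{p_{t,k}+3}=1\,1\,0$, the exchange at $p_{t,k}+2$ raises $H(p_{t,k}+2)$ to $h_t(p_{t,k})$, the strict record maximum migrates two sites to the right, and the subsequent translation $\tau^{-2}$ returns it to $p_{t,k}$: the point stays. If $\eta_t\rng{p_{t,k}+1}{p_{t,k}+3}=1\,1\,1$, no such exchange occurs, the record remains at $p_{t,k}$ under the F-TASEP step, and the translation carries it to $p_{t,k}-2$: the point moves. Your assignment is keyed to the wrong site and is essentially reversed: for $\eta_t\rng{p_{t,k}-1}{p_{t,k}+3}=0\,0\,1\,1\,0$ you predict a move by two while the point in fact stays, and for $1\,0\,1\,1\,1$ you predict it stays while it in fact moves.

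The phrase ``the translation is undone by the fact that this point has absorbed the shift'' does not describe a mechanism: the only way the distinguished point can survive a left-shift by two in place is for the pre-shift record to have advanced two sites to the right, which requires precisely the $1\,1\,0\to1\,0\,1$ exchange you exclude in that case. The same defect infects the $q$ analysis, where the dichotomy is on $\eta_t(q_{t,k}+2)$ (pattern $0\,1$ after $q_{t,k}$ gives ``stays'', $0\,0$ gives ``moves by $3$'') and ``preceded-differently'' again points to the wrong side. (A minor further slip: $\eta_t(p_{t,k})=0$ follows from \eqref{p2}, not \eqref{p1}.) Since the verification of \eqref{order}, \eqref{p2}, and \eqref{q2} must be carried out case by case on these candidates, the proof as written does not go through.
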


\begin{proof}Examination of the action of the dynamics near the $p_k$ and
$q_k$ suggests that appropriate indexing will yield $p_{t+1,k}=p'_k$ and
$q_{t+1,k}=q'_k$, where
 \begin{align}\label{pkprime}
  p'_k&=\begin{cases}p_{t,k},& \text{if 
   $\eta_t\rng{p_{t,k}+1}{p_{t,k}+3}=1\,1\,0$,}\\
 p_{t,k}-2,& \text{if
   $\eta_t\rng{p_{t,k}+1}{p_{t,k}+3}=1\,1\,1$,}\end{cases}\\
 \label{qkprime}
 q'_k&=\begin{cases}q_{t,k},& \text{if
   $\eta_t\rng{q_{t,k}+1}{q_{t,k}+3}=0\,1$,}\\
 q_{t,k}-3,& \text{if
   $\eta_t\rng{q_{t,k}+1}{q_{t,k}+3}=0\,0$,}\end{cases}
 \end{align}
 (in each case the given possibilities are exhaustive).  To verify this,
and hence prove the result (for one sees easily that
$h_{t+1}(p_k')=h_t(p_{t,k})$ and $h_{t+1}^*(q_k')=h_t^*(q_{t,k})$) it
suffices, by the uniqueness in \clem{sequences}, to check that $(p_k')$
and $(q_k')$ satisfy \eqref{order}--\eqref{q2}.

Now \eqref{order}--\eqref{q2} imply that if $p'_k=p_{t,k}-2$ then either
$q_{k-1}'=q_{t,k-1}-3$ or $q_{t,k-1}\le p_{t,k}-3$, and that if
$q_k'=q_{t,k}-3$ then either $p_{t,k}\le q_{t,k}-4$ or
$p_{t,k}=q_{t,k}-3$ but $p'_k=p_{t,k}-2$; \eqref{order} for $(p_k')$ and
$(q_k')$ follows.  To continue, recall that the $\Ui$ dynamics takes
place in two steps, with the usual F-TASEP dynamics, at which let us say
$h_t(i)$ becomes $H(i)$, followed by a two-site translation to the left
(we also write $H^*(i)=H(i)+i/3$).  Now for $i>p_{t,k}$,
$H(i)\le h_t(p_{t,k})$, with equality only if $i=p_{t,k}+2$ and
$\eta_t\rng{p_{t,k}+1}{p_{t,k}+3}=1\,1\,0$, and it is precisely in this
case that $i$ becomes $p'_k$ after the translation.  Thus \eqref{p1} is
satisfied for $p_k'$.  \eqref{q1} for $q_k'$ is checked similarly.

One can check \eqref{p2} and \eqref{q2} considering separately the
various cases of \eqref{pkprime} and \eqref{qkprime}.  To illustrate,
consider \eqref{q2} when $p_k'=p_{t,k}-2$ and $q_k'=q_{t,k}$.  If
$p_{t,k}\le i\le q_{t,k}$ and $h^*_t(i)=h^*(q_{t,k})$ then from
\eqref{q2} at time $t$ necessarily $\eta\rng{i-1}{i+1}=1\,1\,0$, so that
$H^*(i)=h^*_t(i)+2$, and this with \eqref{q1} implies that
$H^*(i)\ge h_t^*(q_{t,k})+2/3$ for $p_{t,k}\le i\le q_{t,k}+2$. After the
translation step this becomes $H^*(i)\ge h_t^*(q_{t,k})$ for
$p_k'\le i\le q_k'$, verifying \eqref{q2} in this case.  \end{proof}

\begin{theorem}\label{midlimit}For each $i,k\in\bbz$, $\eta_t(i)$,
$h_t(i)$, $p_{t,k}$, and $q_{t,k}$ are eventually constant.  If we denote
these limiting values by $\eta_\infty(i)$, $h_\infty(i)$, $p_{\infty,k}$,
and $q_{\infty,k}$, then $p_\infty$ and $q_\infty$ are the sequences
obtained from $h_\infty$ as in \cdef{PQ}, and $\eta_\infty$ is given by
 \begin{align}
 \eta_\infty\rng{q_{\infty,k-1}+1}{p_{\infty,k}}=0\,1\,0\,1\,\ldots0\,1\,0,
   \label{01}\\
 \eta_\infty\rng{p_{\infty,k}+1}{q_{\infty,k}}=1\,1\,0\,1\,1\,\ldots0\,1\,1.
 \label{011}
 \end{align}
 Moreover,
 \begin{align}
p_{\infty,k}&=q_{0,k}-3(h_0(p_{0,k})-h_0(q_{0,k}))+4,\label{pinf}\\
q_{\infty,k}&=p_{0,k+1}-3(h_0^*(p_{0,k+1})-h_0^*(q_{0,k}))+3\label{qinf}.
\end{align}
   \end{theorem}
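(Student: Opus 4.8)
The plan is to mirror, for the intermediate region, the analysis of \csects{low}{high}, the new complication being that the special points $p_k,q_k$ now move during the evolution. I would establish the theorem in four steps: (i) the sequences $p_{t,k}$ and $q_{t,k}$ converge; (ii) hence $h_t(i)$ and $\eta_t(i)$ converge at each fixed $i$; (iii) the limiting configuration has the form \eqref{01}--\eqref{011}, and $p_\infty,q_\infty$ are exactly the sequences attached to $h_\infty$ by \cdef{PQ}; (iv) the explicit formulas \eqref{pinf}, \eqref{qinf} then drop out of a short computation.

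For step (i), \clem{evolve} already tells us that $(p_{t,k})_t$ and $(q_{t,k})_t$ are nonincreasing and integer-valued, so it suffices to bound them below uniformly in $t$. Since $h_t(p_{t,k})=h_0(p_{0,k})$ for all $t$, applying \eqref{q1} at $q_{t,k-1}$ to the site $p_{t,k}>q_{t,k-1}$ gives $h_0(p_{0,k})+p_{t,k}/3=h^*_t(p_{t,k})>h^*_t(q_{t,k-1})=h^*_0(q_{0,k-1})$, a lower bound for $p_{t,k}$ independent of $t$; hence $p_{t,k}$ is eventually constant, and then so is $q_{t,k}$, since $q_{t,k}>p_{t,k}$ by \eqref{order}. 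Because \eqref{order} forces the limits to interleave strictly, the sequences $(p_{\infty,k})_k$ and $(q_{\infty,k})_k$ are strictly increasing, hence unbounded in both directions.

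For step (ii), fix $i$ and choose $k$ with $q_{\infty,k-1}<i<q_{\infty,k}$; then take $t$ large enough that $p_{t,j}$ and $q_{t,j}$ have reached their limits for $j\in\{k-1,k\}$. From that time on, the inequalities \eqref{p1}--\eqref{q2} confine $h_t(i)$ to a bounded range, and no particle can cross any of the stationary sites $q_{\infty,k-1}$, $p_{\infty,k}$, $q_{\infty,k}$, since such a crossing would displace the corresponding special point. Thus the restriction of $\eta_t$ to each of the intervals $(q_{\infty,k-1},p_{\infty,k})$ and $(p_{\infty,k},q_{\infty,k})$ runs as an essentially closed finite system, in which every nontrivial update is a $1\,1\,0\to1\,0\,1$ exchange that strictly increases the (bounded) sum of profile heights over the interval; only finitely many exchanges can therefore occur, so $h_t(i)$, and via \eqref{eta-h} also $\eta_t(i)$, is eventually constant. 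Denoting the limits by $h_\infty(i),\eta_\infty(i)$, continuity of $\Ui$ gives that $\eta_\infty$ is $\Ui$-invariant. I expect this ``freezing of the interiors'' to be the main obstacle: one has to track precisely how the built-in translation in $\Ui=\tau^{-2}\Ul$ is absorbed once the endpoints stop moving, and confirm that no interior exchange involves a boundary site, before the Lyapunov argument can be applied cleanly.

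For step (iii), note first that $\eta_\infty$, being $\Ui$-invariant, lies in $\Xm$ (the analogue for a single $\Ui$-invariant configuration of \ccor{threeregions}(b), proved the same way): no two adjacent holes, no three consecutive particles. Letting $t\to\infty$ in \eqref{p1}--\eqref{q2} traps $h_\infty$ between the horizontal line through $(p_{\infty,k},h_0(p_{0,k}))$ and the slope-$(-1/3)$ line through $(q_{\infty,k-1},h_\infty(q_{\infty,k-1}))$ on $[q_{\infty,k-1},p_{\infty,k}]$, and symmetrically between the two corresponding lines on $[p_{\infty,k},q_{\infty,k}]$ (as described qualitatively in \crem{lines}); combining this with $\eta_\infty(q_{\infty,k-1}+1)=0$ and $\eta_\infty(p_{\infty,k}+1)=\eta_\infty(p_{\infty,k}+2)=1$ (from \clem{brief}, passed to the limit) and the exclusion rules of $\Xm$ forces the patterns \eqref{01} and \eqref{011}. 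Reading \eqref{order}--\eqref{q2} off these patterns and invoking the uniqueness in \clem{sequences} identifies $p_\infty,q_\infty$ with the \cdef{PQ}-sequences of $h_\infty$. Finally, for step (iv), \clem{evolve} supplies the conserved heights $h_\infty(p_{\infty,k})=h_0(p_{0,k})$ and $h^*_\infty(q_{\infty,k})=h^*_0(q_{0,k})$; feeding in the exact height increments of the two blocks --- a drop of $a+2$ across the $3a+2$ sites of \eqref{011}, and a rise of $1$ across the $2b+1$ sites of \eqref{01} --- and eliminating the integers $a,b$ yields \eqref{pinf} and \eqref{qinf}.
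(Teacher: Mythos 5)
Your steps (i), (iii) and (iv) track the paper closely: the lower bound on $p_{t,k}$ via the conserved heights of \clem{evolve} together with \eqref{p1}--\eqref{q2}, and the derivation of \eqref{pinf}--\eqref{qinf} from the block lengths and the conserved values $h_\infty(p_{\infty,k})=h_0(p_{0,k})$, $h^*_\infty(q_{\infty,k})=h^*_0(q_{0,k})$, are both essentially what the paper does. The gap is in step (ii), exactly where you flag ``the main obstacle,'' and it is not a deferrable technicality: the mechanism you propose is wrong. In the $\Ui$ frame every particle moves to the \emph{left} by one or two sites at each step (it jumps right by at most one under the F-TASEP update and is then shifted left by two), so particles stream leftward through the stationary sites $q_{\infty,k-1}$, $p_{\infty,k}$, $q_{\infty,k}$ forever --- even in the final state, where the occupation numbers are frozen but the particle current is not zero. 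The intervals between consecutive special points are therefore not closed systems, and your Lyapunov function fails for the same reason: under $\Ui$ one has $h_{t+1}(i)=(\Ul h_t)(i+2)\ge h_t(i+2)$, which is not comparable to $h_t(i)$, so the sum of heights over a fixed interval is not monotone. (Indeed it cannot be: on the blocks $[p_{\infty,k},q_{\infty,k}]$ the final profile hugs the line $L_k^*$ from above, i.e.\ sits at the \emph{minimum} allowed by \eqref{q2}, so heights there must in general come down, not up.)

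The paper closes this gap with a different monotone quantity, adapted to each type of block. Once $q_{\infty,k-1}$ and $p_{\infty,k}$ are in place, \eqref{p2} forces $\eta_t\rng{q_{\infty,k-1}+1}{p_{\infty,k}+3}$ to have the form $\cdots\,0\,0\,(10)^j\,1\,1\,0$ whenever \eqref{01} fails, and one step of $\Ui$ either produces \eqref{01} or strictly increases $j$; since $j$ is bounded by the interval length, \eqref{01} is reached in finite time, and similarly for \eqref{011} using the form $\cdots\,1\,1\,1\,(011)^j\,0\,1$. Note that this also collapses your steps (ii) and (iii) into one: convergence of $h_t(i)$ and $\eta_t(i)$ and the identification of the limiting pattern come out together, so there is no need for the separate argument that a $\Ui$-invariant configuration lies in $\Xm$ (which, as an analogue of \ccor{threeregions}(b) for a single configuration rather than a stationary measure, would itself require a new proof). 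If you want to keep your structure, the fix is to replace the height-sum functional by the length of the correctly formed pattern growing off the already-frozen special point.
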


We can summarize the theorem thus: the final configuration $\eta_\infty$
has the form 
 \be\label{midetainf}
\eta_\infty=\cdots(0\,1)^{n_k}(0\,1\,1)^{m_k}(0\,1)^{n_{k+1}}
    (0\,1\,1)^{m_{k+1}}\cdots,
 \ee
 with 
 \be\label{nm}\begin{aligned}
 2n_k&=p_{\infty,k}-q_{\infty,k-1}-1=3(h_0^*(q_{0,k})-h_0^*(q_{0,k-1})),\\
 3m_k&=q_{\infty,k}-p_{\infty,k}+1=3(h_0(p_{0,k+1})-h_0(p_{0,k})).
 \end{aligned}\ee
 These results are illustrated in \cfig{fig:midprof3}.
 
\begin{figure}[ht!]  \begin{center}
\includegraphics[width=5.3truein,height=2.00truein]{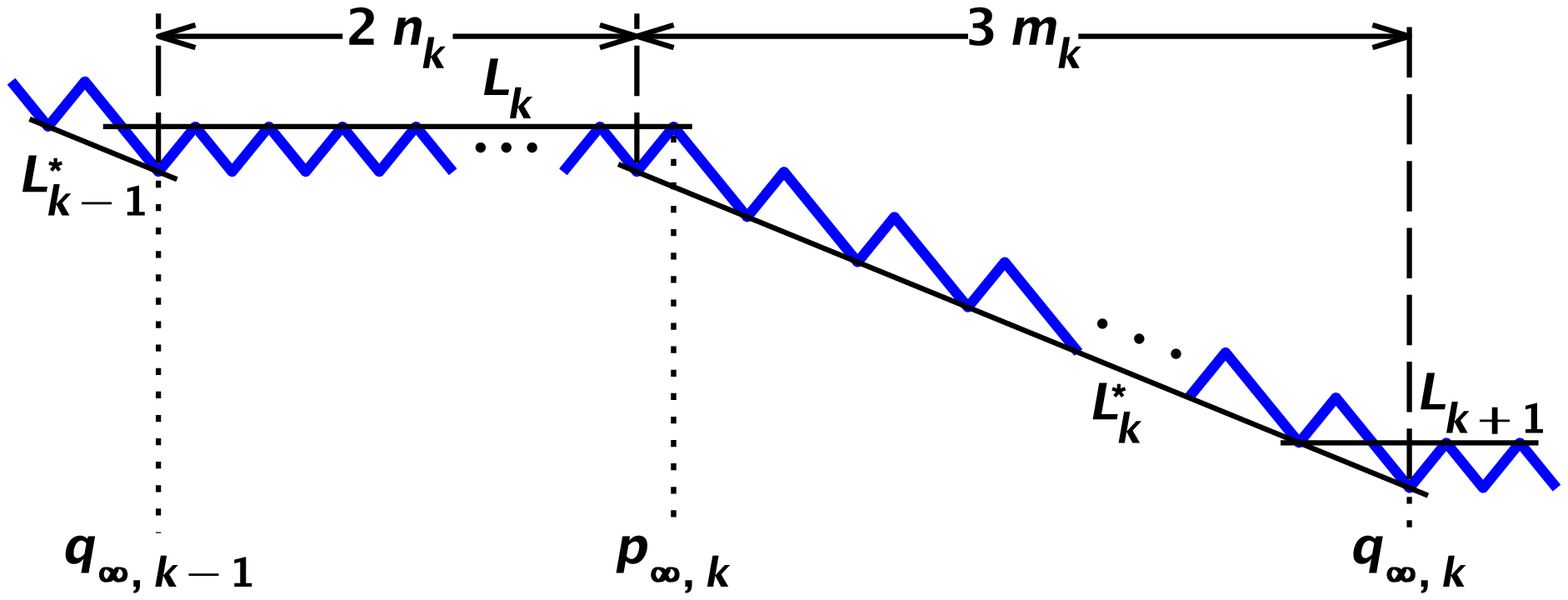}
\caption{Illustration of equations \eqref{01}, \eqref{011}, and
 \eqref{midetainf}.}
\label{fig:midprof3} \end{center} \end{figure}
 
\begin{proof}[Proof of \cthm{midlimit}]The nonincreasing sequences
$(p_{t,k})_{t=0}^\infty$ and $(q_{t,k})_{t=0}^\infty$ are clearly bounded
below, since, for example, $\bigl(q_{t,k},h_t(q_{t,k})\bigr)$ must remain
on the line $L_k^*$, and by \eqref{p2} must stay below $L_k$.  Thus the
limits $p_{\infty,k}$ and $q_{\infty,k}$ exist and will be attained by
some finite time.  Suppose that $t$ is a time for which $q_{t,k-1}$,
$p_{t,k}$, and $q_{t,k}$ have reached their limiting values.

To establish \eqref{01}, note first that if $\eta_t$ satisfies
\eqref{01} then this will remain true as $t$ increases.  Moreover, if
\eqref{01} does not hold (for $\eta_t$) then by \eqref{p2} we just have
that
 \be\label{SG}
\eta_t\rng{q_{\infty,k-1}+1}{p_{\infty,k}+3}
   = \cdots\,0\,0\,(10)^j\,1\,1\,0
 \ee
 for some $j\ge0$.  But then $\eta_{t+1}$ must either satisfy \eqref{01}
or be of the form \eqref{SG} with $j$ replaced by some $j'\ge j+1$.
Thus \eqref{01} must be attained in finite time.  \eqref{011} is obtained
similarly, with \eqref{SG} replaced by
 \be
 \eta_t\rng{p_{\infty,k}+1}{q_{\infty,k}+2}
  = \cdots\,1\,1\,1\,(011)^j\,0\,1.
 \ee

Finally, it follows from \eqref{01} that $q_{\infty,k}=i+3$, where $i$ is
the site at which the lines $L^*_{k-1}$ and $L_k$ intersect, and this is
just \eqref{qinf}.  Similarly, \eqref{011} implies that
$p_{\infty,k}=i'+4$, with $i'$ is the intersection of $L_k$ and $L^*_k$,
yielding \eqref{pinf}.
\end{proof}
 
\subsection{An  initial Bernoulli distribution\label{interbern}}

We again take up the case in which the initial configuration $\eta_0$ is
distributed according to the Bernoulli measure $\murho$, now with
$1/2<\rho<2/3$, and ask for the distribution of the final configuration
$\eta_\infty$, which we will obtain from the joint distribution of the
random variables $n_k$ and $m_k$ of \eqref{midetainf} (once these are
precisely defined---compare \cthm{ldrenewal}).  Note that these variables
are expressed in \eqref{nm} as functions of the initial configuration; we
will hence in this section refer to properties of the initial
configuration only, and write simply $\eta$, $h$, $p_k$, and $q_k$ rather
than $\eta_0$, etc.  While the process $\ldots,n_k,m_k,n_{k+1}\ldots$ is
not Markovian, we will show that one may define a ``hidden'' Markov
process, determined by the initial configuration, such that the variables
$n_k$ and $m_k$ are functions of the variables of that process.

To obtain a well-defined labeling of the points of $P$ and $Q$ we
introduce $V:=\{\eta\in X_\rho\mid0\in P(\eta)\}$, defining $p_0(\eta)=0$
for $\eta\in V$ and labeling the remaining points of $P(\eta)$ and
$Q(\eta)$ to satisfy \eqref{order}.  We write $\mu$ for the measure
$\murho$ conditioned on $V$.  We also decompose $X_\rho$ as
$X_\rho=X_\rho^-\times X_\rho^+$, where $X_\rho^-\subset\{0,1\}^{\bbz_-}$
is the set of configurations $\alpha:\bbz_-\to\{0,1\}$ which satisfy
$\lim_{N\to\infty}(N+1)^{-1}\sum_{i=-N}^{0}\alpha(i) = \rho$, and
$X_\rho^+\subset\{0,1\}^{\bbn}$ is the set of configurations
$\beta:\bbn\to\{0,1\}$ which satisfy
$\lim_{N\to\infty}N^{-1}\sum_{i=1}^N\beta(i)=\rho$.  We correspondingly
write $\eta\in X_\rho$ as $(\eta^-,\eta^+)$.

Now suppose that $F\subset V$ is an event, with $\mu(F)>0$, specifying an
arbitrary amount of information about $\eta^-$ and the $p_k$, $h(p_k)$,
$q_k$, and $h(q_k)$ for $k<0$, including in particular the values of
$q_{-1}=\tilde q$ and $h^*(q_{-1})=\tilde h^*$, while $F_0\subset V$
specifies only $q_{-1}=\tilde q$ and $h^*(q_{-1})=\tilde h^*$ (note that
the nonlocality in \cdef{PQ} means that $\eta^-$ does not determine the
$p_k$ and $q_k$, $k<0$).  Clearly from \eqref{p1} and \eqref{q1}, and the
fact that $p_0=h(p_0)=0$ on $V$, the occurrence of either $F$ or $F_0$
implies that $I$ occurs, where
 \be\label{restrict}I\,(=I_{\tilde h^*})
  :=\{\eta\mid h(i)<0 \text{ and } h^*(i)>\tilde h^*
    \text{ for all $i\ge1$}\}.
 \ee
 The next result gives the basic Markovian property of the $p_k$'s
and $q_k$'s.

\begin{lemma}\label{Markovian} The distribution of $\eta^+$ when
conditioned on $F$ is the same as when conditioned on $F_0$.  Moreover,
this distribution is explicitly given by the marginal of $\murho$ on
$X_\rho^+$,  conditioned on $I$.\end{lemma}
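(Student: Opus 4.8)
The plan is to exploit the product structure of $\murho$ together with the observation, made just before the statement, that $F,F_0\subseteq I=I_{\tilde h^*}$ and that $I$ is measurable with respect to $\eta^+$ alone (it constrains only the values $h(i)$ and $h^*(i)$ for $i\ge1$). The heart of the argument is a \emph{localization} claim: on the event $I$, and given the recorded values $q_{-1}=\tilde q$ and $h^*(q_{-1})=\tilde h^*$, the entire packet of data that $F$ involves --- namely $\eta^-$ together with $p_k,h(p_k),q_k,h(q_k)$ for $k<0$ --- is in fact a function of $\eta^-$ alone. Granting this, there is an $\eta^-$-measurable event $\tilde F$ with $F=\tilde F\cap I$, and likewise an $\eta^-$-measurable $\tilde F_0$ with $F_0=\tilde F_0\cap I$. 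For any event $E\subseteq X_\rho^+$, the set $\{\eta^+\in E\}\cap I$ is $\eta^+$-measurable while $\tilde F$ is $\eta^-$-measurable, so independence of $\eta^-$ and $\eta^+$ under $\murho$ gives $\murho(\{\eta^+\in E\}\cap F)=\murho(\tilde F)\,\murho(\{\eta^+\in E\}\cap I)$ and $\murho(F)=\murho(\tilde F)\,\murho(I)$, whence $\murho(\eta^+\in E\mid F)=\murho(\eta^+\in E\mid I)$. The identical computation with $\tilde F_0$ in place of $\tilde F$ gives the same value for $F_0$, and since $I$ depends only on $\tilde h^*$, which is common to $F$ and $F_0$, the two conditional laws coincide and equal the marginal of $\murho$ on $X_\rho^+$ conditioned on $I$.

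It remains to establish the localization claim, which is where the nonlocality of \cdef{PQ} must be confronted. On $I$ one has $h(i)<0=h(0)$ for all $i\ge1$, hence $0\in A$; and for $a<0$, $\sup_{r>a}h(r)=\sup_{a<r\le0}h(r)$, since the tail $\sup_{r\ge1}h(r)$ is negative and so never competes with $h(0)=0$. Thus membership of negative sites in $A$, and hence $A\cap\bbz_{\le0}$, is determined by $\eta^-$. Next I would show that for $b<\tilde q$ one has $b\in B$ if and only if $h^*(b)<\inf_{b<r\le0}h^*(r)$: if this inequality holds then, because $\tilde q\in(b,0]$ forces $\inf_{b<r\le0}h^*(r)\le h^*(\tilde q)=\tilde h^*$ while $I$ gives $\inf_{r\ge1}h^*(r)>\tilde h^*$, the overall infimum $\inf_{r>b}h^*(r)$ is attained on $\{b<r\le0\}$, so $b\in B$; the converse is immediate. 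Hence $B\cap\bbz_{<\tilde q}$ likewise depends only on $\eta^-$, given $\tilde q$ and $\tilde h^*$.

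One then needs a small structural fact, which is a consequence of $q_{-1}=\tilde q$: there is no element of $A$ strictly between $\tilde q$ and $0$. Indeed, if there were, then descending through consecutive $A$-points one would reach the smallest one, say $a$; its $A$-predecessor lies below $\tilde q$ (being $\le\tilde q$ and, by disjointness of $A$ and $B$, not equal to $\tilde q$), so the open interval between them contains $\tilde q\in B$, forcing $a\in P$; but $\tilde q>p_{-1}$, so $a$ would be a point of $P$ strictly between the consecutive points $p_{-1}$ and $p_0=0$, a contradiction. Consequently $A\cap\bbz_{\le0}=\{0\}\cup(A\cap\bbz_{<\tilde q})$ and $B\cap\bbz_{\le\tilde q}=\{\tilde q\}\cup(B\cap\bbz_{<\tilde q})$ are both $\eta^-$-measurable on the event in question, and by \cdef{PQ} the sequences $(p_k)_{k\le0}$ (with $p_0=0$) and $(q_k)_{k\le-1}$ are reconstructed from them; the corresponding heights $h(p_k)$, $h(q_k)$ for $k<0$ are $\eta^-$-measurable because $h(r)$ is for $r\le0$. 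This proves the localization claim.

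The main obstacle is precisely this localization step: since $A$, $B$, and hence $P$, $Q$ are defined through suprema and infima of $h$ (resp.\ $h^*$) over all sites to the right of a given site, the $k<0$ data a priori depends on $\eta^+$ as well, and the real work is in verifying that conditioning on the single event $I$ --- together with the recorded values $q_{-1}=\tilde q$ and $h^*(q_{-1})=\tilde h^*$ --- is exactly enough to sever that dependence. Everything afterwards is the short independence computation of the first paragraph.
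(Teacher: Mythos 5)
Your proposal follows essentially the same route as the paper: factor $F$ as an $\eta^-$-measurable event intersected with $I$ and then invoke independence of $\eta^-$ and $\eta^+$ under the product measure $\murho$; your ``localization claim'' is exactly the content of the paper's \clem{PQprime} together with the identity $F=G_L\cap H_N'\cap I$, and your sup/inf computations identifying $A\cap\bbz_{\le0}$ and $B\cap\bbz_{\le\tilde q}$ with their $\eta^-$-defined counterparts on $I$ are the same ingredients the paper uses.

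One point is left implicit that the paper treats explicitly: the factorization $\murho(F)=\murho(\tilde F)\,\murho(I)$ needs the set identity $F=\tilde F\cap I$ in \emph{both} directions, and your localization claim is stated under the hypothesis that the true (a priori $\eta^+$-dependent) quantities satisfy $q_{-1}=\tilde q$, $h^*(q_{-1})=\tilde h^*$. That gives $F\subset\tilde F\cap I$, but for $\tilde F\cap I\subset F$ you must start from the $\eta^-$-defined surrogates taking the prescribed values and deduce that the true $P$, $Q$ data agrees with them --- in particular that $0\in P(\eta)$, i.e., $\eta\in V$; your derivation of ``no point of $A$ in $(\tilde q,0)$'' from $q_{-1}=\tilde q$ would need to be rerun with the surrogate hypothesis instead. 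Your earlier observations (which use only the numerical value $\tilde h^*$ and the event $I$, not the identity of $q_{-1}$) do supply everything needed for this, so the gap is one of exposition rather than substance, but it is the step the paper's ``Conversely, if $\eta\in G_L\cap H_N'\cap I$ \dots'' paragraph exists to handle.
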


In preparation for the proof we make a preliminary definition: for
$\alpha\in X_\rho^-$ we adapt \cdef{PQ} to define
$A'(\alpha):=\{a\le0\mid h(a)>\sup_{a<r\le0}h(r)\}$,
$B'(\alpha):=\{b<0\mid h^*(b)<\inf_{b<r\le0}h^*(r)\}$ (so that
$0\in A'(\alpha)$, $0\notin B'(\alpha)$) and obtain $P'(\alpha)$ and
$Q'(\alpha)$ from $A'(\alpha)$ and $B'(\alpha)$ in parallel with
\cdef{PQ}; we index the elements of these sets as $(p'_k)_{k\le0}$ and
$(q'_k)_{k<0}$, with $p_0'=\max P'(\alpha)$.  We view $P'(\eta^-)$ and
$Q'(\eta^-)$ as approximations to $P(\eta)\cap\bbz_-$ and
$Q(\eta)\cap\bbz-$ which depend only on $\eta^-$.  But we have

\begin{lemma}\label{PQprime} If $\eta\in V$ then
$P'(\eta^-)=P(\eta)\cap\bbz_-$ and
$Q'(\eta^-)=Q(\eta)\cap\bbz_-$.\end{lemma}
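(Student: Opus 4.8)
The plan is to exploit the defining property of $V$, namely $p_0=0$ and hence $h(0)=0$, which makes the profile drop strictly below $h(0)$ immediately to the right of the origin; this forces the (a priori nonlocal) data entering \cdef{PQ} for indices $k\le0$ to be determined by $\eta^-$ alone, exactly in the form produced by the primed construction. First I would record that \eqref{p1} for $k=0$ gives $h(i)<h(0)=0$, hence $h(i)\le-1$, for every $i\ge1$. From this, $A'(\eta^-)=A(\eta)\cap\bbz_-$ is immediate: $0$ lies in both sets trivially, and for $a<0$ one has $\sup_{r>a}h(r)=\sup_{a<r\le0}h(r)$, since the latter already sees $h(0)=0$, which exceeds every $h(r)$ with $r\ge1$, so the two defining conditions coincide.

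For the $B$'s, an infimum over a larger index set is no larger, so $b\in B(\eta)$ implies $b\in B'(\eta^-)$, giving $B(\eta)\cap\bbz_-\subseteq B'(\eta^-)$ (and $0$ belongs to neither, being in $A$). Conversely, I would show that any $b\in B'(\eta^-)$ with $b<q_{-1}$ already lies in $B(\eta)$: since $q_{-1}\in B(\eta)\cap\bbz_-\subseteq B'(\eta^-)$ and $q_{-1}\in(b,0]$, membership of $b$ in $B'(\eta^-)$ forces $h^*(b)<h^*(q_{-1})$, and \eqref{q1} for $q_{-1}$ (legitimate because $q_{-1}<0$) then gives $h^*(b)<h^*(r)$ for all $r>q_{-1}$; combined with membership of $b$ in $B'(\eta^-)$ this yields $h^*(b)<h^*(r)$ for all $r>b$, i.e.\ $b\in B(\eta)$. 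Thus $B'(\eta^-)$ and $B(\eta)\cap\bbz_-$ can differ only inside the open interval $(q_{-1},p_0)$. The key geometric input is that this interval carries no point of $A(\eta)$: by \eqref{p2} for $k=0$ we have $h(r)\le h(p_0)=0$ for $q_{-1}\le r\le p_0$, whereas any $a\in A(\eta)$ in $(q_{-1},p_0)$ would satisfy $h(a)>h(p_0)=0$. By the first paragraph, $(q_{-1},p_0)$ then meets neither $A(\eta)$ nor $A'(\eta^-)$.

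To conclude I would match the constructions site by site. Every point of $P(\eta)\cap\bbz_-$ other than $p_0$ lies strictly below $q_{-1}$, and every point of $Q(\eta)\cap\bbz_-$ is $\le q_{-1}$ (by \eqref{order}); for any such point the ``previous'' $A$- or $B$-point used to test membership in $P$ or $Q$ also lies $\le q_{-1}$, so the whole test takes place in $(-\infty,q_{-1}]$, where $A'(\eta^-)=A(\eta)$ and $B'(\eta^-)=B(\eta)$ by the preceding steps; hence the point lies in $P'(\eta^-)$ (resp.\ $Q'(\eta^-)$) precisely when it lies in $P(\eta)$ (resp.\ $Q(\eta)$). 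The remaining point $p_0=0$ is dealt with directly: it lies in $P(\eta)$ because $\eta\in V$, and in $P'(\eta^-)$ because $q_{-1}\in B'(\eta^-)$ sits strictly between $0$ and the preceding $A'$-point. Finally, since $(q_{-1},p_0)$ contains no $A'$-point, any ``spurious'' point of $B'(\eta^-)$ there, and any point of $B(\eta)$ there that is not in $Q(\eta)$, is separated from the preceding $B'$-point by a sub-interval of $(q_{-1},p_0)$ free of $A'$-points, hence cannot enter $Q'(\eta^-)$; for the same reason it leaves $P'(\eta^-)$ unchanged.

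I expect the genuine content to lie in the two elementary observations that $h$ drops below $h(0)=0$ just to the right of the origin (from \eqref{p1}) and that the gap $(q_{-1},p_0)$ is free of $A$-points (from \eqref{p2}); the main work, and the part most prone to error, is the bookkeeping in the last step, where one must check that ``the previous point'' names the same site in the primed and unprimed pictures and that the cases at $p_0$ and $q_{-1}$ are handled consistently, without omission or double-counting.
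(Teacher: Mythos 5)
Your proof is correct and follows essentially the same route as the paper's: identify $A'(\eta^-)$ with $A(\eta)\cap\bbz_-$ using $h(i)<h(0)=0$ for $i\ge1$, show $B'(\eta^-)$ and $B(\eta)$ agree on $(-\infty,q_{-1}]$ via \eqref{q1} (resp.\ \eqref{q2}) for $q_{-1}$, and then match $P',Q'$ with $P,Q$ on $\bbz_-$. The only difference is that you spell out the final bookkeeping (in particular, that spurious $B'$-points in $(q_{-1},p_0)$ cannot enter $Q'$ because that interval is free of $A'$-points), which the paper compresses into ``the result follows.''
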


\begin{proof} Clearly $B(\eta)\subset B'(\eta^-)$ for all $\eta\in X$ and
if $\eta\in V$ then $A'(\eta^-)=A(\eta)$.  We index the points of
$A(\eta)$ and $A'(\eta^-)$ so that $a_0=a'_0=0$.  Then
$q_{-1}(\eta)=\min\bigl(B(\eta)\cap(a_{-1},a_0)\bigr)$ and since, by
\eqref{q2}, $h^*(r)\ge h^*(q_{-1})$ for $a_{-1}\le r<q_{-1}$,
$B'(\eta^-)\cap(a_{-1},q_{-1})=\emptyset$.  From this we find easily that
$B'(\eta^-)\cap(-\infty,q_{-1}]=B(\eta)\cap(-\infty,q_{-1}]$ and the
result follows.\end{proof}

\begin{proof}[Proof of \clem{Markovian}]Without loss of generality we may
assume that $F$ has the form $V\cap G_L\cap H_N$, where $G_L$ specifies
$\eta(i)$ for $-L\le i\le0$ and $H_N$ specifies $(p_k,h(p_k))$ for
$-N\le k\le0$ and $(q_k,h^*(q_k))$ for $-N\le k<0$, and in particular
requires that $p_0=h(p_0)=0$, $q_{-1}=\tilde q$, and
$h(q_{-1})=\tilde h^*$.  We claim that $F=G_L\cap H_N'\cap I$, where
$H'_N$ gives the same specification to the $p'_k(\eta^-)$ and
$q'_k(\eta^-)$ that $H_N$ gave to the $p_k$ and $q_k$.  Assuming this,
for $J$ an arbitrary event depending only on $\eta(i)$ for $i\ge1$ we have,
using first $F\subset V$ and then
$\murho(F)=\murho(G_L\cap H'_N)\murho(I)$ and
$\murho(J\cap F)=\murho(G_L\cap H'_N)\murho(J\cap I)$,
 \be
\mu(J\mid F)=\murho(J\mid F)=\murho(J\mid I),
 \ee
 which is the desired conclusion.

To verify that $F=G_L\cap H_N'\cap I$ we observe first that if
$\eta\in F$ then \eqref{p1} and \eqref{q1} imply that $\eta\in I$;
moreover, by \clem{PQprime}, $\eta\in H_N'$; thus
$F\subset G_L\cap H_N'\cap I$.  Conversely, if
$\eta\in G_L\cap H_N'\cap I$ then $\eta\in I$ implies that $0\in A(\eta)$
and, with $\tilde q\in B'(\eta^-)$, that $\tilde q\in B(\eta)$.
Moreover, from $\eta\in H_N'$ it follows that $0\in P'(\eta^-)$, and with
$\eta\in I$ and $\tilde q\in Q'(\eta^-)$ this implies that
$\eta\in V$; from this, $G_L\cap H_N'\cap I\subset F$ is immediate.
\end{proof}

A similar result holds with the roles of the $p_k$ and $q_k$
interchanged. Let $V^*:=\{\eta\in X_\rho\mid0\in Q(\eta)\}$, index the
points of $P$ and $Q$ on $V^*$ via $q_0(\eta)=0$ and \eqref{order}, and
let $\mu^*$ be $\murho$ conditioned on $V^*$.  Suppose that
$F^*\subset V^*$ is an event, with $\mu(F^*)>0$, specifying an arbitrary
amount of information about $\eta^-$ and the $p_k$, $h(p_k)$, $q_k$, and
$h(q_k)$ for $k<0$, including in particular the values of $p_{-1}=\hat p$
and $h(p_{-1})=\hat h$, while $F_0^*\subset V$ specifies only
$p_{-1}=\hat p$ and $h(p_{-1})=\hat h$.  The occurrence of either $F^*$
or $F_0^*$ implies that $I^*$ occurs, where
 \be\label{restrict2} I^*\,(=I^*_{\hat h})
    :=\{\eta\mid h^*(i)>0 \text{ and } h(i)<\hat h
    \text{ for all $i\ge1$}\}.
 \ee
 The proof of the next result is parallel to that of \clem{Markovian}.

\begin{lemma}\label{Markovian2} The distribution of $\eta^+$ when
conditioned on $F^*$ is the same as when conditioned on $F_0^*$.
Moreover, this distribution is explicitly given by the marginal of
$\murho$ on $X_\rho^+$, conditioned on $I^*$.\end{lemma}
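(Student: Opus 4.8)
The plan is to imitate the proof of \clem{Markovian} line for line, systematically interchanging the roles of the sequences $p_k$ and $q_k$ (equivalently, of the sets $A$ and $B$, and of the pair of relations \eqref{p1},\eqref{p2} with the pair \eqref{q1},\eqref{q2}). First I would introduce the ``starred'' one-sided approximations. For $\alpha\in X_\rho^-$ set $B'^*(\alpha):=\{b\le0\mid h^*(b)<\inf_{b<r\le0}h^*(r)\}$ and $A'^*(\alpha):=\{a<0\mid h(a)>\sup_{a<r\le0}h(r)\}$, so that now $0\in B'^*(\alpha)$ and $0\notin A'^*(\alpha)$, and form $P'^*(\alpha),Q'^*(\alpha)$ from $A'^*,B'^*$ exactly as $P,Q$ are formed from $A,B$ in \cdef{PQ}, indexing them as $(p_k'^*)_{k<0}$ and $(q_k'^*)_{k\le0}$ with $q_0'^*=\max Q'^*(\alpha)$. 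These sets depend only on $\eta^-$.

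The second step is the analogue of \clem{PQprime}: if $\eta\in V^*$ then $P'^*(\eta^-)=P(\eta)\cap\bbz_-$ and $Q'^*(\eta^-)=Q(\eta)\cap\bbz_-$. This is proved as \clem{PQprime} with $A$ and $B$ swapped: one has $A(\eta)\subset A'^*(\eta^-)$ (up to the restriction $a\le0$) for every $\eta$, while $\eta\in V^*$ forces $0\in Q(\eta)\subset B(\eta)$, hence by \eqref{q1} that $h^*(r)>h^*(0)$ for all $r>0$, so that $B'^*(\eta^-)$ coincides with $B(\eta)$ on $\bbz_{\le0}$; then one locates $p_{-1}(\eta)$ as the largest element of $A(\eta)$ in the interval between the previous point of $B(\eta)$ and $0$, invokes \eqref{p2} (which forbids any point of $A'^*(\eta^-)$ in $(q_{-2},p_{-1})$), concludes that $A'^*(\eta^-)$ and $A(\eta)$ agree on $(-\infty,p_{-1}]$, and reads off the asserted equalities.

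Finally, for \clem{Markovian2} itself I would, exactly as in \clem{Markovian}, reduce to the case $F^*=V^*\cap G_L\cap H_N^*$, where $G_L$ specifies $\eta(i)$ for $-L\le i\le0$ and $H_N^*$ specifies $(q_k,h^*(q_k))$ for $-N\le k\le0$ and $(p_k,h(p_k))$ for $-N\le k<0$ --- in particular requiring $q_0=h^*(q_0)=0$, $p_{-1}=\hat p$, and $h(p_{-1})=\hat h$. The claim to verify is $F^*=G_L\cap H_N'^*\cap I^*$, where $H_N'^*$ imposes on the $p_k'^*(\eta^-),q_k'^*(\eta^-)$ precisely the data that $H_N^*$ imposes on the $p_k,q_k$. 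Granting this, for any event $J$ depending only on the $\eta(i)$ with $i\ge1$, the product structure of $\murho$ on $X_\rho^-\times X_\rho^+$ gives $\murho(F^*)=\murho(G_L\cap H_N'^*)\,\murho(I^*)$ and $\murho(J\cap F^*)=\murho(G_L\cap H_N'^*)\,\murho(J\cap I^*)$, whence $\mu^*(J\mid F^*)=\murho(J\mid F^*)=\murho(J\mid I^*)$, which is both assertions of the lemma. The forward inclusion $F^*\subset G_L\cap H_N'^*\cap I^*$ is easy: \eqref{q1} at $q_0=0$ yields $h^*(i)>0$ for $i\ge1$, \eqref{p1} at $p_{-1}=\hat p$ (which is $<0$ by \eqref{order}) yields $h(i)<\hat h$ for $i\ge1$, so $\eta\in I^*$, and $\eta\in H_N'^*$ follows from the analogue of \clem{PQprime}. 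The reverse inclusion is the one genuinely careful point: from $\eta\in I^*$ one gets $h^*(0)=0<\inf_{r>0}h^*(r)$, hence $0\in B(\eta)$ by \eqref{repeat}, and $h(i)<\hat h=h(\hat p)$ for $i\ge1$ together with $\hat p\in A'^*(\eta^-)$ (integer-valued $h$ makes the relevant supremum strict) gives $\hat p\in A(\eta)$; combining $0\in Q'^*(\eta^-)$ and $\hat p\in P'^*(\eta^-)$ (both supplied by $\eta\in H_N'^*$) with $\eta\in I^*$ then forces $0\in Q(\eta)$, i.e.\ $\eta\in V^*$, after which $\eta\in F^*$ is immediate.

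I expect the only real obstacle to be the bookkeeping in this last verification --- matching each invocation of \eqref{p1}--\eqref{q2} to its mirror image under the $p\leftrightarrow q$ swap, and in particular pinning down that $\eta\in G_L\cap H_N'^*\cap I^*$ indeed lies in $V^*$; everything else is a mechanical transcription of arguments already in hand.
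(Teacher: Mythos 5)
Your proposal is correct and is precisely the argument the paper intends: the paper's ``proof'' of this lemma consists solely of the remark that it is parallel to that of \clem{Markovian}, and you carry out exactly that mirror-image argument (the starred one-sided sets $A'^*,B'^*,P'^*,Q'^*$, the analogue of \clem{PQprime}, and the identity $F^*=G_L\cap H_N'^*\cap I^*$ feeding the product structure of $\murho$). One small slip that does not affect the logic: in your analogue of \clem{PQprime}, the $P$-point immediately preceding $0$ is the \emph{smallest} (not largest) element of $A(\eta)$ in the interval following the preceding point of $B(\eta)$, since \eqref{p2} forbids points of $A$ strictly between a $Q$-point and the next $P$-point.
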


We next turn to the definition of the Markov process.  Let
$(Y_j)_{j\in\bbz}$ be the sequence of random variables on $V$ which take
values in $\bbz^2$ and are defined for $k\in\bbz$ by
 \be\label{markov}\begin{aligned}
  Y_{2k}&=\bigl(p_k-q_{k-1},h(p_k)-h(q_{k-1})\bigr),\\
Y_{2k+1}&=\bigl(q_k-p_k,h(q_k)-h(p_k)\bigr),\\
 \end{aligned}\ee
 This definition seems to single out $p_0$ (among the points of
$P\cup Q$) to play a special role, but the next lemma shows that this is
not really the case.

 \begin{lemma}\label{transn} (a) Fix $k\in\bbz$ and define the variables
$Y'_j$, $j\in\bbz$, on $V$ by $Y_j'=Y_{j+2k}$.  Then the joint
distribution of $(Y'_j)_{j\in\bbz}$ is the same as that of
$(Y_j)_{j\in\bbz}$.
 \par\smallskip\noindent
 (b) Suppose that $(Y^*_j)_{j\in\bbz}$ is defined on $V^*$ by replacing
$Y$ by $Y^*$ in \eqref{markov}.  Then $(Y^*_j)_{j\in\bbz}$ and
$(Y_j)_{j\in\bbz}$ have the same joint distribution.  \end{lemma}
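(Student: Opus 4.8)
The plan is to recognize $\mu$ and $\mu^*$ as the Palm measures of the translation-covariant point processes $P$ and $Q$ under $\murho$, and to read both assertions off the classical invariance of a Palm measure on $\bbz$ under its next-point shift. First I would record that $P$, $Q$, and hence $R:=P\cup Q$ are genuine translation-covariant functions of $\eta$: by \cdef{PQ} the sets $A,B$ — and therefore $P,Q$ — depend on $\eta$ only through differences of $h$ (and of $h^*$, which differ from those of $h$ by the deterministic term $i/3$), so replacing $\eta$ by $\tau\eta$ replaces $P,Q$ by $P+1,Q+1$. Under $\murho$, since $1/2<\rho<2/3$ the height has strictly negative mean slope, so $A,B$ are locally finite and unbounded in both directions, and by \clem{sequences} so are $P,Q,R$; also $\murho(V)=\murho(0\in P)>0$, as already used when $\mu$ was defined.

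For (a) I would introduce the next-$P$-point shift: for $\eta\in V$ put $\theta\eta:=\tau^{-p_1(\eta)}\eta$, so that $\theta\eta\in V$ and the points of $P(\theta\eta),Q(\theta\eta)$ are exactly $p_{k+1}(\eta)-p_1(\eta)$ and $q_{k+1}(\eta)-p_1(\eta)$; since the $Y_j$ of \eqref{markov} involve only differences of the $p_k,q_k$ and of the (translation-covariant) heights, this gives $Y_j(\theta\eta)=Y_{j+2}(\eta)$ for every $j$. The point to establish is $\theta_*\mu=\mu$. This is the standard invariance of a Palm measure under its next-point shift: writing $\mu(\cdot)=\murho(V)^{-1}\murho(\cdot\cap\{0\in P\})$ and decomposing $\{0\in P\}$ according to the value $m=p_1(\eta)\ge1$, the change of variables $\eta\mapsto\tau^{-m}\eta$ (legitimate because $\murho$ is translation invariant) rewrites $\int f\circ\theta\,d\mu$ as a sum over $m\ge1$ of integrals of $f$ against $\murho$ restricted to $\{\,0\in P,\ -m\text{ is the $P$-point preceding }0\,\}$; since for $\murho$-a.e.\ $\eta$ with $0\in P(\eta)$ there is exactly one such $m$, this re-sums to $\int f\,d\mu$. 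Hence $(Y_{j+2})_j$ and $(Y_j)_j$ have the same law under $\mu$; iterating and using the analogous invariance under $\theta^{-1}$ (the previous-$P$-point shift) gives (a) for all $k\in\bbz$.

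For (b) I would run the same argument for $R=P\cup Q$ with its next-point shift $\sigma$. By \eqref{order} exactly one point of $Q$ lies strictly between consecutive points of $P$ and vice versa, so $P$ and $Q$ have the same intensity under $\murho$; thus $\murho(0\in P)=\murho(0\in Q)=:\lambda>0$, the Palm measure $\hat\mu:=\murho(\cdot\mid0\in R)$ equals $\tfrac12\mu+\tfrac12\mu^*$, and $\hat\mu$ restricted to $\{0\in P\}$ is $\tfrac12\mu$ while its restriction to $\{0\in Q\}$ is $\tfrac12\mu^*$. Because the two types of points alternate, $\sigma$ is a bijection of $\{0\in P\}$ onto $\{0\in Q\}$ and of $\{0\in Q\}$ onto $\{0\in P\}$; combining this with $\sigma_*\hat\mu=\hat\mu$ (proved exactly as $\theta_*\mu=\mu$) and comparing, in $\sigma_*\hat\mu=\sigma_*(\hat\mu|_{\{0\in P\}})+\sigma_*(\hat\mu|_{\{0\in Q\}})=\hat\mu|_{\{0\in P\}}+\hat\mu|_{\{0\in Q\}}$, the parts supported on $\{0\in Q\}$, forces $\sigma_*\mu=\mu^*$. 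Finally, for $\eta\in V$ one has $\sigma\eta=\tau^{-q_0(\eta)}\eta\in V^*$, and with the $V^*$-labeling (anchored at $q_0=0$) the points $p_k(\sigma\eta),q_k(\sigma\eta)$ are the coordinate shifts of $p_k(\eta),q_k(\eta)$ with \emph{no} change of index $k$, since $q_0(\eta)$ becomes the origin while $p_0(\eta)$ remains the last $P$-point before the origin; hence $Y^*_j(\sigma\eta)=Y_j(\eta)$ for all $j$. Therefore the law of $(Y^*_j)_j$ under $\mu^*=\sigma_*\mu$ is the law of $(Y_j)_j$ under $\mu$, which is (b).

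The only genuinely non-formal parts are the two intertwining identities $Y_j\circ\theta=Y_{j+2}$ and $Y^*_j\circ\sigma=Y_j$, which require careful bookkeeping of how the labeling conventions on $V$ (where $p_0=0$) and on $V^*$ (where $q_0=0$) transform when the origin is moved to the adjacent special point, together with the support-matching step in (b) that upgrades $\sigma_*\hat\mu=\hat\mu$ to $\sigma_*\mu=\mu^*$. By contrast, the invariances $\theta_*\mu=\mu$ and $\sigma_*\hat\mu=\hat\mu$ are routine changes of variables that use nothing about $\murho$ beyond translation invariance, so the argument does not even need ergodicity (though the empirical picture — $\mu$, $\mu^*$, $\hat\mu$ are the limiting empirical distributions of $\eta$ translated to successive points of $P$, of $Q$, of $R$ — makes the statement intuitively obvious and could be substituted using that $\murho$ is mixing).
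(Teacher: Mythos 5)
Your proof is correct and rests on the same mechanism as the paper's: decompose $V$ according to the location of the relevant special point, push each piece forward by the corresponding translation using the invariance of $\murho$, and re-sum --- i.e., the Palm-measure change of variables, together with the bookkeeping that the $Y_j$ depend only on translation-covariant differences. The only organizational difference is in (b), where the paper pushes $\mu$ forward directly by $\tau^{-q_k}$ and observes that the images $\tau^{-i}(V\cap\{q_k=i\})$ tile $V^*$ (which also gives $\murho(V)=\murho(V^*)$ as a byproduct), whereas you detour through the Palm measure of $P\cup Q$ and a support-matching argument; both routes are sound.
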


\begin{proof}For (a) it suffices to show that the distribution of 
$\tau^{-p_k}\eta$, the configuration seen from $p_k$, is the same as $\mu$
itself.  But this measure is
 \begin{align*}
\murho(V)^{-1}\sum_{i\in\bbz}\tau_*^{-i}\murho\big|_{V\cap\{p_k=i\}}
 &=\murho(V)^{-1}\sum_{i\in\bbz}\murho\big|_{\tau^{-i}(V\cap\{p_k=i\})}\\
 &=\murho(V)^{-1}\murho\big|_{\bigcup_{i\in\bbz}\tau^{-i}(V\cap\{p_k=i\})}\\
 &=\murho(V)^{-1}\murho\big|_V
  =\mu.  \end{align*}
 Replacing $p_k$ by $q_k$ in the above, and in the last line $V$ by $V^*$
 and $\mu$ by $\mu^*=\murho(V^*)^{-1}\mu^{(\rho)}\big|_{V^*}$, we obtain (b).
 \end{proof}

 \begin{theorem}\label{hidden} $(Y_j)_{j\in\bbz}$ is a Markov process.
\end{theorem}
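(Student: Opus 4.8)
The Markov property I will prove is the two-sided one: for every $m\in\bbz$, the conditional law of $(Y_j)_{j>m}$ given $\sigma(Y_i:i\le m)$ depends only on $Y_m$. By \clem{transn}(a) this statement is invariant under shifting $m$ by an even integer, and by \clem{transn}(b) it transfers between the process $(Y_j)$ on $V$ and the process $(Y^*_j)$ on $V^*$; so it is enough to verify it for $m=0$ (working on $V$) and for $m=1$ (working on $V^*$).

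Two preliminary identifications do most of the work. First, since $p_0=h(p_0)=0$ on $V$, unwinding \eqref{markov} step by step shows that $\sigma(Y_i:i\le 0)$ is precisely the $\sigma$-field generated by $\{(p_k,h(p_k)):k\le 0\}$ together with $\{(q_k,h(q_k)):k\le -1\}$; in particular $Y_0$ determines $q_{-1}$ and $h^*(q_{-1})$. Second, the conditions \eqref{repeat} defining $A$ and $B$ at a site $i\ge 1$ only involve the values $h(r)$ with $r\ge i$, so the parts of $P$ and of $Q$ lying in $\bbz_{>0}$, and hence the whole family $(Y_j)_{j\ge 1}$, are $\eta^+$-measurable. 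On $V^*$ the analogues hold with the roles of $p_0$ and $q_0$ at the origin interchanged: $\sigma(Y^*_i:i\le 1)$ is generated by $\{(p_k,h(p_k)):k\le 0\}\cup\{(q_k,h(q_k)):k\le -1\}$, $Y^*_1$ determines $(p_0,h(p_0))$, and $(Y^*_j)_{j\ge 2}$ is $\eta^+$-measurable.

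Given these, the case $m=0$ is short. Enlarge the conditioning field to $\sigma(\eta^-)\vee\sigma(Y_i:i\le 0)$; conditioning on a finite window of this field is an event $F$ exactly of the type in \clem{Markovian}, which therefore tells us that the conditional law of $\eta^+$---and hence of $(Y_j)_{j\ge1}$---depends only on $h^*(q_{-1})$. Since $h^*(q_{-1})$ is already $\sigma(Y_0)$-measurable, the tower property shows that the conditional law given $\sigma(Y_i:i\le 0)$, and also given $\sigma(Y_0)$, is this same function of $Y_0$; passing from finite windows to the full past is a routine martingale argument. The case $m=1$ is carried out in the identical manner on $V^*$, with \clem{Markovian2} in place of \clem{Markovian}, and gives that the conditional law of $(Y^*_j)_{j\ge 2}$ depends only on $h(p_0)$, a function of $Y^*_1$.

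The step I expect to require real care is exactly the passage just invoked in the odd case: on $V^*$ the field $\sigma(Y^*_i:i\le 1)$ contains $(p_0,h(p_0))$, whereas the conditioning event $F^*$ of \clem{Markovian2} constrains only the $p_k,q_k$ with $k<0$ and ``transmits'' $h(p_{-1})$, not $h(p_0)$. I would bridge this by again enlarging the conditioning, to $\sigma(\eta^-)\vee\sigma\bigl(\{(p_k,h(p_k)):k\le 0\}\cup\{(q_k,h(q_k)):k\le -1\}\bigr)$, and using the variant of \clem{Markovian2} in which the last conditioned $p$-point is $p_0$: conditioned on $\eta^-$ together with all of the $p_k,q_k$ for $k\le 0$, the law of $\eta^+$ is the marginal of $\murho$ on $X_\rho^+$ conditioned on $\{h^*(i)>0\text{ and }h(i)<h(p_0)\text{ for all }i\ge 1\}$, which depends only on $h(p_0)$. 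This variant has the same proof as \clem{Markovian2} (using the $V^*$-analogue of \clem{PQprime} to re-express the conditioning through the $\eta^-$-measurable sets $P'(\eta^-)$, $Q'(\eta^-)$); once it is available, the tower property closes the odd case exactly as the even one.
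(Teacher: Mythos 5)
Your proof is correct and follows essentially the same route as the paper's: reduce via \clem{transn} to the transition out of $Y_0$ (respectively out of $Y_1$, working on $V^*$), apply \clem{Markovian} (respectively \clem{Markovian2}) to see that the conditional law of $\eta^+$ given the past depends only on the present, and observe that the entire future is a function of the present and $\eta^+$. Your flagged subtlety about the indexing in \clem{Markovian2} --- that on $V^*$ the relevant transmitted data is $(p_0,h(p_0))$ rather than $(p_{-1},h(p_{-1}))$ --- is well spotted and is consistent with how the paper actually applies that lemma when computing $\M^1$ (where $\hat p=p_0$, $\hat h=h(p_0)$).
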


\begin{proof}[Proof of \cthm{hidden}] We discuss first the transition
from $Y_0$ to $Y_1$. Observe that if $\eta\in V$ then $Y_1(\eta)$ is
determined by $Y_0(\eta)=\bigl(-q_{-1}(\eta),-h(q_{-1})\bigr)$ and
$\eta^+$, for certainly $B(\eta)\cap(0,\infty)$ is determined by $\eta^+$
and then since $p_0(\eta)=0$,
$q_0(\eta)=\min\left(B(\eta)\cap(0,\infty)\right)$. But by
\clem{Markovian} no knowledge of $Y_j$, $j<0$, can affect the
distribution of $\eta^+$ determined by $Y_0$; this is the Markov
property.  \clem{transn}(a) then implies that transitions from $Y_{2k}$
to $Y_{2k+1}$, $k\in\bbz$, are all Markovian.  That the transitions from
$Y_{2k-1}$ to $Y_{2k}$ are also Markovian follows from \clem{transn}(b)
and an argument on $V^*$ similar to the above. \end{proof}

 There are two transition matrices for this Markov process, for odd and
even steps respectively.  These can be expressed in terms of
combinatorial quantities $e_{a,b}^{n,m}$ which generalize the Catalan and
Fuss-Catalan numbers encountered earlier (although we don't have
closed-form expressions for these quantities).  Here $a$, $m$, and $n$
are integers, with $a\ge-1$ and $n\ge0$, and $b$ is of the form $2l/3$
with $l$ an integer (see (\ref{highmod}.ii)) and $b\le4/3$.
$e_{a,b}^{n,m}$ counts the number of (partial) height profiles
$h:\{0,\ldots,n\}\to\bbz$, with $h(i+1)-h(i)=\pm1$ for $i=0,\ldots,n-1$,
which satisfy
 \be\label{Edef}
h(0)=0,\ h(n)=m, \text{ and }
     b-\frac{i}3 \le h(i) \le a,\; 1\le i\le n.
 \ee
Note that if $b\le-2n/3$, so that the left-hand inequality in \eqref{Edef}
is satisfied for all possible $h$, then $e_{0,b}^{2n,0}=c_n$, and
similarly that for $a\ge n$, $e_{a,0}^{3n,-n}=d_n$ (see \eqref{catalan}
and \eqref{raney}).  Thus $e_{a,b}$ is a generalization of the Catalan
and Fuss-Catalan sequences which allows for appropriate upper and lower
bounds on the profiles.  Note that if we consider these profiles as
arising from configurations in $\{0,1\}^{\{1,\ldots,n\}}$ and weight these
configurations with a Bernoulli product measure of density $\rho$ then
the set of configurations counted by $e_{a,b}^{n,m}$ has probability
$f_{a,b}^{n,m}:=e_{a,b}^{n,m}\rho^{(n-m)/2}(1-\rho)^{(n+m)/2}$.

We calculate the transition matrix $\M^0$ from $Y_0$ to $Y_1$ (which is
the matrix for any transition $Y_{2n}\to Y_{2n+1}$) by taking
$Y_0=(-\tilde q,-\tilde h^*+\tilde q/3)$ and using the marginal on
$X_\rho^+$ of the conditional measure $\murho(\cdot\mid I_{\tilde h^*})$;
to obtain the matrix $\M^1$ for the transition from $Y_1$ to $Y_2$ (or
$Y_{2n+1}\to Y_{2n+2}$) we take $Y_1=(-\hat p,-\hat h)$ and use the
marginal on $X_\rho^+$ of $\murho(\cdot\mid I_{\hat h}^*)$.  To compute
the normalization $\murho(I_{\tilde h^*})$ we note that a partial profile
$h(i)_{i=1}^n$ obeys the bounds defining $I_{\tilde h^*}$ and passes
through $(n,m)$ iff it satisfies \eqref{Edef} with $a=-1$ and
$b=\tilde h^*+2/3$.  Thus there are $e_{-1,\tilde h^*+2/3}^{n,m}$ such
profiles; each has probability $\rho^{(n-m)/2}(1-\rho)^{(n+m)/2}$ so that
 \be
 \murho(I_{\tilde h^*})=\lim_{n\to\infty}\sum_{m=\tilde h^*+2/3-n/3}^{-1}
   f_{-1,\tilde h^*+2/3}^{n,m}.
 \ee
 To obtain $\murho(I^*_{\hat h})$, note that the restrictions
corresponding to the bounds defining $I_{\hat h}^*$ are given by
\eqref{Edef} with $a=\hat h-1$ and $b=2/3$, so that
 \be\label{I*}
\murho(I^*_{\hat h})=\lim_{n\to\infty}\sum_{m=2/3-n/3}^{\hat h-1}
   f_{\hat h-1,2/3}^{n,m}.
 \ee

We can now write down the transition matrix $\M^0_{y_0,y_1}$ for the
transition $Y_0\to Y_1$ (and any $Y_{2n}\to Y_{2n+1}$).  Set
$y_0=(-\tilde q,-\tilde h^*+\tilde q/3)$ as above and $y_1=(q',h')$, and
note that $\M^0_{y_0,y_1}$ vanishes unless
$\tilde h^*+2/3-q'/3\le h'\le-2$.  When this condition is satisfied, a
configuration $\eta^+$ with height function $h$ contributes to
$\mu(Y_1=y_1\mid Y_0=y_0)=\murho(Y_1=y_1\mid I_{\tilde h^*})$ iff:
(i)~$h$ reaches $(q',h')$ while obeying the restrictions specified by
\eqref{Edef} with the replacements $n\to q'$, $m\to h'$, $a\to-1$, and
$b\to h'+q'/3$, and (ii)~$h$ satisfies $h(i)\le -1$ and
$h^*(i)\ge h'+q'/3+2/3$ for $i\ge q'+1$, that is, the tail of $h$ is a
translate of a profile contributing to $I^*_{-h'}$ (see \eqref{I*} and
preceding discussion).  Thus if $\tilde h^*+2/3-q'/3\le h'\le-2$,
 \be\label{M0}
 \M^0_{y_0,y_1}=\frac{f_{-1,h'+q'/3}^{q',h'}\,
  \murho(I^*_{-h'})}{\murho(I_{\tilde h^*})}
 =\frac{f_{-1,y_{1,2}+y_{1,1}/\sss3}^{y_{1,1},y_{1,2}}\,
  \murho(I^*_{-y_{1,2}})}{\murho(I_{-y_{0,2}-y_{0,1}/\sss3})}.
 \ee
 A similar calculation gives the matrix for transitions $Y_1\to Y_2$ (and
any $Y_{2n+1}\to Y_{2n+2}$); taking $y_1=(-\hat p,-\hat h)$ and
$y_2=(p'',h'')$ we see that $\M^1_{y_1,y_2}$ vanishes unless
$2-p''/3\le h''\le\hat h-1$, and when this is satisfied,
 \be\label{M1}
 \M^1_{y_1,y_2}=\frac{f_{h'',2/3}^{p'',h''}\,
  \murho(I_{-h''-p''/3})}{\murho(I^*_{\hat h})}
 =\frac{f_{y_{2,2},\sss2/3}^{y_{2,1},y_{2,2}}\,
  \murho(I_{-y_{2,2}-y_{2,1}/\sss3})}{\murho(I^*_{-y_{1,2}})}.
 \ee

Although we have not provided a very explicit expression for the
transition probability for the Markov chain, we can more explicitly
characterize this process as a Gibbs state.  Consider for example the
probability that $Y_n=y_n$ for $-2N+1\le n\le 2N$, given that
$Y_{-2N}=y_{-2N}$.  It follows from \eqref{M0} and \eqref{M1} (and even
more directly from the successive bounds on the height function $h$
implied by the history of the Markov chain) that this probability is
given (somewhat formally) by
 \be\label{bigprob}\begin{aligned}
\M^0_{y_{-2N},y_{-2N+1}}&\M^1_{y_{-2N+1},y_{-2N+2}}
\cdots\M^0_{y_{2N-2},y_{2N-1}}\M^1_{y_{2N-1},y_{2N}}\\
&\hskip-40pt= Z^{-1}\exp\left(-\sum_{n=-N}^{N-1}\bigl(v^0(y_{2n},y_{2n+1})
    +v^1(y_{2n+1},y_{2n+2})\bigr)\right),
 \end{aligned}\ee
 where
$Z=\murho(I_{-y_{-2N,2}-y_{-2N,1}/\sss3})/\murho(I_{-y_{2N,2}-y_{2N,1}/\sss3})$
and
 \begin{align*}
v^0(y_{2n},y_{2n+1})&=\begin{cases}
      -\log f_{-1,y_{2n+1,2}+y_{2n+1,1}/\sss3}^{y_{2n+1,1},y_{2n+1,2}},&\\
   \noalign{\vskip3pt}&\hskip-95pt
   \text{if $-y_{2n,2}+\ds\frac{2-y_{2n,1}-y_{2n+1,1}}3\le y_{2n+1,2}\le-2$,}\\
    \infty,&\hskip-95pt\text{otherwise;}\end{cases}\\
v^1(y_{2n+1,2n+2})&=\begin{cases}
    -\log f_{y_{2n+2,2},\sss2/3}^{y_{2n+2,1},y_{2n+2,2}},&\\
   \noalign{\vskip3pt}&\hskip-69pt
    \text{if $2-\ds\frac{y_{2n+2,1}}3\le y_{2n+2,2}\le-y_{2n+1,2}-1$},\\
   \infty,&\hskip-69pt\text{otherwise.}\end{cases}
 \end{align*}
 The two-sided conditional probability that $Y_n=y_n$ for
$-2N+1\le n\le 2N-1$, given that $Y_{-2N}=y_{-2N}$ and $Y_{2N}=y_{2N}$,
 is then given by the same formula \eqref{bigprob}, with $Z$ now a
 normalizing constant.  We can argue similarly for all two-sided
 conditional probabilities, and we thus see that our Markov chain is a
 Gibbs state with interaction potentials given by $v^0$ and $v^1$. 

\begin{remark}\label{six}Numerical simulations of the model in the
intermediate density region show convincingly that the two-point function
$g(k)$ in the final state satisfies an analogue of Theorems \ref{twopt}(b) and
\ref{hdvarious}(d): for any $n\ge0$, $\sum_{i=1}^6g(6n+i)=6\rho^2$.
We conjecture that this is in fact true, but have no proof at the moment.
\end{remark}

 \medskip\noindent
 {\bf Acknowledgments:} We thank Ivan Corwin and Pablo Ferrari for
helpful comments.  The work of JLL was supported by the AFOSR under
award number FA9500-16-1-0037.

\appendix

\section{Generating functions\label{genfct}}

Our goal is to calculate the generating function
$G(z):=\sum_{n=1}^\infty g(n)z^n$ of the two-point function in
the low density region; the generating function $G^T(z)$ of the
truncated two-point function (see \clem{decay} and its proof) is then
given by $G^T(z)=G(z)-\rho^2z/(1-z)$.  We will use the quantities
 \begin{align}\label{psi}
\psi(n)&=c_n\rho^n(1-\rho)^{n+1},
 &\Psi(u)&=\sum_{n\ge0}\psi(n)u^n=\frac{2(1-\rho)}{1+\sqrt{1-4\rho(1-\rho)u}},\\
 \theta(n)&=\sum_{m\ge n}\psi(m),\label{theta}
 &\Theta(u)&=\sum_{n\ge0}\theta(n)u^n=\frac{1-u\Psi(u)}{1-u}\\
 \lambda(n)&=\sum_{m\ge n}\theta(m),\label{lambda}
 &\Lambda(u)&=\sum_{n\ge0}\lambda(n)u^n=\frac{\Theta(1)-u\Theta(u)}{1-u}.
 \end{align}
 Here \eqref{psi} is obtained from a standard formula for Catalan series,
see e.g.  \cite{Catalan}.  In obtaining \eqref{theta} we have used
$\Psi(1)=1$, which follows from \eqref{psi} or from the normalization of
the distribution \eqref{catdist}.  From \eqref{psi}--\eqref{lambda} we
further obtain
 \be\label{conseq}
\Theta(1)=\lambda(0)=\frac{1-\rho}{1-2\rho}, \qquad 
  \lambda(1)=\Theta(1)-1=\frac\rho{1-2\rho}.
 \ee

Now write $g(n)=\sum_{m\ge0}g_m(n)$, where $g_m(n)$ is the contribution
to $g(n)$ from configurations in which $m$ points of $P$, say
$p_1<p_2<\cdots<p_m$, lie between sites $0$ and $n$; note that $g_m(n)=0$
unless $m$ and $n$ have the same parity.  We let $p_0=-(2n_0+1)$ be the
largest point of $P$ to the left of 0, and $p_{m+1}$ be the smallest
point of $P$ to the right of $n$.  We first consider the special case
$m=0$; with $n=2l$ and $p_1=2n_1$, $n_1>l$, we have
 \be
g_0(n)=(1-2\rho)\sum_{n_0\ge0}\sum_{n_1\ge l+1}\psi(n_0+n_1)
 = (1-2\rho)\lambda(l+1),
 \ee
  and then, using \eqref{conseq}
 \be\label{Gmez}
\sum_{l\ge1}g_0(2l)z^{2l}
   =\frac{1-2\rho}{z^2}
   \left(\Lambda(z^2)-\frac{z^2\rho}{1-2\rho}-\frac{1-\rho}{1-2\rho}\right).
 \ee

Now we turn to the case $m\ge1$, writing $p_1=2n_1$ with $n_1\ge1$,
$p_j-p_{j-1}=2n_j+1$ for $j=2,\ldots,m$, $n-p_m=2l+1$, and
$p_{m+1}-p_m=2n_{m+1}+1$ with $n_{m+1}>l$.  The contribution to $g_m(n)$
for fixed $p_1,\ldots,p_m$ is
 \begin{align}\nonumber
(1-2\rho)&\left(\sum_{n_0\ge0}\psi(n_0+n_1)\right)
 \prod_{j=2}^m\psi(n_j)\left(\sum_{n_{m+1}\ge l+1}\psi(n_{m+1})\right)\\
 &=(1-2\rho)\theta(n_1)\prod_{j=2}^m\psi(n_j)\theta(l+1).\label{step1}
 \end{align}
 Multiplying \eqref{step1} by $z^n$ and summing over $n$ and
$n_1,\ldots,n_m$, and then over $m$, yields
 \be\label{Gmgz}
 \sum_{m\ge1}\sum_{n\ge1} g_m(n)z^n
  = (1-2\rho)\frac{\bigl(\Theta(z^2)-1\bigr)^2}
 {z(1-z\Psi(z^2))}.
 \ee
 The generating function $G(z)$ is the sum of \eqref{Gmez} and
\eqref{Gmgz}.  

From the formulas above it is clear that the possible singularities of
$G(z)$ are at $z=\alpha_0:=(4\rho(1-\rho))^{-1/2}$, where
$\Psi(z^2)$ is singular, and at the unique root $z=1$ of
$z\Psi(z^2)=1$; this uniqueness may be verified, for example, from
the fact \cite{Catalan} that $\Psi$ satisfies
$\Psi(u)=1-\rho+u\rho\Psi^2(u)$.  (There is also a singularity at $z=0$,
but $\Psi(u)$ as defined in \eqref{psi} is clearly regular at $u=0$; this
singularity lies on the second sheet.)  A straightforward calculation
shows that $G(z)$ has a simple pole at $z=1$, with residue
$\rho^2$, and this pole is removed in passing to $G^T(z)$ via
 \be\label{GT}
 G^T(z)=G(z) - \frac{\rho^2z}{1-z}.
 \ee
  Thus $G^T(z)$ is
analytic for $|z|<\alpha_0$ (see \cthm{decay}).

\begin{remark}\label{secondproof}If one writes
$G(z)=G_{\rm even}(z)+G_{\rm odd}(z)$, where $G_{\rm even}$
and $G_{\rm odd}$ are respectively even and odd in $z$, then one
finds that 
$G_{\rm even}(z)+z G_{\rm odd}(z)=2\rho^2z^2/(1-z^2)$.
This is an independent proof of \cthm{twopt}(b). \end{remark}

\section{A semi-infinite system\label{semiinf}}

Consider again the system at low density.  In \csect{lowbern} we
introduced the event $F:=\{\eta_0\mid0\in P(\eta_0)\}$, where $P(\eta_0)$
was defined in \eqref{lowpk}; $F$ is invariant under the F-TASEP dynamics
and, under that dynamics on $F$, no particles jump from site 0 to site 1.
Thus the behavior of the system on $\bbn$, conditioned on the occurrence
of $F$, is independent of the system to the left of the origin and so is
equivalent to the dynamics of a semi-infinite system on $\bbn$, with a
boundary condition given by an extra site at 0 which is always empty.  It
is this semi-infinite system that we study here, and in fact, for this
system, our arguments apply at all densities.

In this appendix only we write $X=\{0,1\}^\bbn$, define $\tau:X\to X$ to
be the left shift operator, $\tau(x\eta)=\eta$ for $x=0,1$, and say that
a measure $\mu$ on $X$ is {\it $\tau$-invariant} if
$\mu(\tau^{-1}(A))=\mu(A)$ for any measurable $A\subset X$; we define the
density for such a measure to be $\rho_\mu=\mu(\eta(i))$ for
any $i\in\bbn$.  As usual we let $\eta_t$ denote the configuration at
time $t$, under the F-TASEP evolution with boundary condition as
described above, when the initial configuration is $\eta_0$.

\begin{theorem}\label{sinfmain}If $\mu$ is a $\tau$-invariant measure on
$X$ and $n\in\bbn$ is odd then for all $t\ge0$,
$\mu(\eta_t(n))=\rho_\mu$.  \end{theorem}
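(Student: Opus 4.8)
The plan is to put the evolution in conservation form, reduce \cthm{sinfmain} to an identity for the expected currents, and prove that identity by induction on $t$, using the translation invariance of $\mu$ at the base step and at every inductive step.

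First I would record the continuity equation. Let $J_t(k)=\eta_t(k-1)\,\eta_t(k)\,(1-\eta_t(k+1))$ be the indicator that a particle jumps from $k$ to $k+1$ during the step $t\to t+1$; with the boundary convention $\eta_t(0)\equiv0$ one has $J_t(0)=J_t(1)=0$, and the F-TASEP rule reads $\eta_{t+1}(k)=\eta_t(k)+J_t(k-1)-J_t(k)$. Taking $\mu$-expectations and writing $e_t(k):=\mu(J_t(k))$ gives
\[
 \mu(\eta_{t+1}(k))-\mu(\eta_t(k))=e_t(k-1)-e_t(k),\qquad e_t(0)=e_t(1)=0 .
\]
Since $\mu(\eta_0(k))=\rho_\mu$ for all $k\in\bbn$ by $\tau$-invariance, a one-line induction on $t$ shows that \cthm{sinfmain} is \emph{equivalent} to the current-balance identity $e_t(2j-2)=e_t(2j-1)$ for all $j\ge1$ and $t\ge0$; equivalently, $e_t$ is constant on each pair $\{2m,2m+1\}$, $m\ge0$, and the defect $\mu(\eta_t(k))-\rho_\mu$ stays supported on the positive even sites. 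It is also convenient to embed the system on $\bbz$: let $\tilde\eta_0$ agree with $\eta_0$ on $\bbn$ and vanish on $\bbz_{\le0}$, and run the full-line F-TASEP. By induction on $t$ the sites $\le0$ stay empty (in $\tilde\eta_{t+1}(k)=\tilde\eta_t(k)+J_t(k-1)-J_t(k)$ all three terms vanish for $k\le0$), so $\tilde\eta_t|_{\bbn}=\eta_t$ and one gains the translation covariance of the full-line rule.

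The base case $t=0$ is immediate: for $k\ge2$ the quantity $e_0(k)=\mu(\eta(k-1)\eta(k)(1-\eta(k+1)))$ is independent of $k$ by $\tau$-invariance, so with $e_0(0)=e_0(1)=0$ the profile $e_0$ is pair-constant. The mechanism of the inductive step is already visible in the passage $t=0\to1$: tracking the first updates near the wall gives $\tilde\eta_1(1)\tilde\eta_1(2)=\eta_0(1)\eta_0(2)\eta_0(3)$ and $\tilde\eta_1(2)\tilde\eta_1(3)=\eta_0(2)\eta_0(3)\eta_0(4)$, and on these events $1-\tilde\eta_1(3)$ and $1-\tilde\eta_1(4)$ collapse to $1-\eta_0(4)$ and $1-\eta_0(5)$, so that
\[
 e_1(2)=\mu\big(\eta(1)\eta(2)\eta(3)(1-\eta(4))\big)
       =\mu\big(\eta(2)\eta(3)\eta(4)(1-\eta(5))\big)=e_1(3),
\]
the middle equality being $\tau$-invariance; meanwhile $e_1(2j-2)=e_1(2j-1)$ for $j\ge2$ because the wall affects $\tilde\eta_1$ only through sites within distance $2$, so $e_1$ is already constant there. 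Thus one step peels exactly one new plateau $\{2,3\}$ off the bulk value, next to the wall, and $e_1$ is again pair-constant.

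The general inductive step is the main obstacle. One must show: if $\mu_t:=U^t_*\mu$ on $\bbn$ is translation invariant on the sites $\ge 2t+1$ and its current profile is a step function that is $0$ on $\{0,1\}$ and constant on the successive pairs $\{2,3\},\{4,5\},\dots$ up to its bulk value, then the same holds at time $t+1$ (with the boundary window grown by $2$). The mechanism is as for $t=0\to1$: expanding $\eta_{t+1}$ through $\eta_t$ and using that $\mu_t$ is $\tau$-invariant off a bounded window, each plateau of $e_t$ maps to a plateau of $e_{t+1}$ and exactly one more pair is peeled off at the wall, with value a $\mu$-expectation of a shift-covariant local function of $\eta_0$; $\tau$-invariance of $\mu$ then forces the two entries of every pair to coincide. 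The delicate point — where the real work lies — is that a priori $e_{t+1}(k)$ depends on two- and three-point functions of $\mu_{t+1}$, hence on an unbounded amount of information about $\mu$, and one must show that the boundary rearrangement $e_t\mapsto e_{t+1}$ genuinely respects the pairing $\{2m,2m+1\}$: the content of the step is precisely that, because one further lattice site adjacent to the stationary "wall particle'' freezes at each update, all the relevant correlations of $\mu_{t+1}$ collapse to local $\mu$-expectations whose only label is the distance to the wall.
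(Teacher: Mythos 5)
There is a genuine gap: the general inductive step, which you yourself flag as ``the main obstacle'' and ``where the real work lies,'' is never carried out, and as formulated the induction cannot close. Your inductive hypothesis is a statement about the current profile $e_t(k)=\mu(J_t(k))$, i.e.\ about certain three-point functions of the time-$t$ measure; but $e_{t+1}(k)=\mu(\eta_{t+1}(k-1)\eta_{t+1}(k)(1-\eta_{t+1}(k+1)))$ is a three-point function of the time-$(t+1)$ measure, which is not determined by the time-$t$ currents. Pair-constancy of $e_t$ therefore gives you no handle on $e_{t+1}$ in the boundary window $2\le k\le 2t+3$; your explicit computations for $t=0$ and $t=1$ work only because there the relevant correlations can be unwound all the way back to local functions of $\eta_0$ by hand, and the sketch of ``each plateau maps to a plateau and one more pair is peeled off'' is a description of the desired conclusion, not an argument. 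To close the induction you would need to propagate a much stronger structural property of the measure itself, not just of its expected currents.

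That stronger property is exactly what the paper supplies. It proves a pathwise coupling statement (\clem{equal}): with $\pi_1$ the coarse-graining that reads off the pair pattern at sites $(2i-1,2i)$ and $\pi_2$ the one shifted by a single site, one has $\pi_1(\eta_t)=\pi_2(\zeta_t)$ for all $t$ whenever $\zeta_0=x\eta_0$. Combined with $\tau$-invariance of $\mu$ this yields
$\mu\bigl(\eta_t\rng{n}{n+1}=x\,x\bigr)=\mu\bigl(\eta_t\rng{n+1}{n+2}=x\,x\bigr)$ for odd $n$, hence the symmetry of the law of $(\eta_t(2n-1),\eta_t(2n),\eta_t(2n+1))$ under exchange of its endpoints, and the theorem follows by induction on the odd sites (in space, at fixed $t$), starting from the frozen site $1$. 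If you want to salvage your conservation-law framework, you would need an analogue of that lemma --- some exact statement about how prepending one site to the initial condition shifts the entire boundary layer of the evolved measure by one pair --- and at that point you have essentially reconstructed the paper's argument.
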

  
\noindent Note that \cthm{sinfmain} generalizes \cthm{twopt}(a) in two
ways: it is valid for an arbitrary $\tau$-invariant initial measure, and
the result holds at all times, not just in the final state, i.e., not
just for $\eta_\infty$.  By taking $\mu$ to be the Bernoulli measure
$\murho$ and considering the $t\to\infty$ limit we obtain a new proof of
the earlier result.

We begin by introducing two distinct ``coarse grainings''
$\pi_1,\pi_2:X\to \{0,1,d\}^\bbn$. For the first, $\pi_1(\eta)(i)=x$ if
$\eta(2i-1)=\eta(2i)=x$ (where $x=0,1)$ and $\pi_1(\eta)(i)=d$ if
$\eta(2i-1)\ne\eta(2i)$; for the second, $\pi_2(x\eta)=\pi_1(\eta)$ for
$x=0,1$ (here the symbol $d$ stands for ``different'').

\begin{lemma}\label{equal}Suppose that $\eta_0\in X$ and for
$x\in\{0,1\}$ let $\zeta_0=x\eta_0$.  Then for any $t\ge0$,
$\pi_1(\eta_t)=\pi_2(\zeta_t)$.  \end{lemma}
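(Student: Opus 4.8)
The plan is to prove the lemma by induction on $t$. The base case $t=0$ is immediate, since by the very definition of $\pi_2$ we have $\pi_2(\zeta_0)=\pi_2(x\eta_0)=\pi_1(\eta_0)$; all the work is in the inductive step. I would recast everything in terms of height profiles. Let $h^\eta_t$ and $h^\zeta_t$ be the profiles of $\eta_t$ and $\zeta_t$ (defined by $h_t(0)=0$ and \eqref{eta-h}); note that, as is clear from the jump rule, site $0$ is permanently empty and a particle at site $1$ can never move, so $h^\eta_t(0)=h^\zeta_t(0)=0$ and $h^\zeta_t(1)=(-1)^x$ for every $t$. Recall from \eqref{profmove} that one F-TASEP step replaces $h_t(k)$ by $h_t(k)+2$ precisely when $\eta_t(k-1)=\eta_t(k)=1$ and $\eta_t(k+1)=0$ — that is, when $h_t$ descends by two unit steps into $k$ and then ascends — and leaves $h_t(k)$ fixed otherwise; in particular each profile is nondecreasing in $t$. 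Since $\pi_1(\eta_t)(i)$ is determined by $h^\eta_t(2i)-h^\eta_t(2i-2)$ and $\pi_2(\zeta_t)(i)$ by $h^\zeta_t(2i+1)-h^\zeta_t(2i-1)$, the lemma is equivalent to the assertion $h^\zeta_t(2i+1)=h^\eta_t(2i)+(-1)^x$ for all $i\ge0$ and $t\ge0$ (the constant being forced by the $i=0$ case). Writing $\hat h_t$ for the profile of $\tau\zeta_t$ — i.e.\ $\hat h_t(k)=h^\zeta_t(k+1)-h^\zeta_t(1)$ — this says exactly that $\hat h_t$ and $h^\eta_t$ agree at all even sites.

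When $x=0$ this is easy, and in fact one proves directly that $\tau\zeta_t=\eta_t$ for every $t$: $\tau\zeta_0=\eta_0$, the sites $1$ are frozen in both systems, and since $\zeta_t(1)=0$ the jump rule at each site $k\ge 2$ of $\zeta$ is governed by exactly the data governing the jump rule at site $k-1$ of $\eta$, so $\tau\zeta_{t+1}=\eta_{t+1}$ follows in one line.

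The substantive case is $x=1$, where $\tau\zeta_t$ and $\eta_t$ genuinely differ: the only discrepancy between the two dynamics is that in $\tau\zeta$ the particle at the original site $1$ of $\zeta$ acts as a permanent left wall, so a particle at site $1$ of $\tau\zeta$ jumps right whenever site $2$ is empty, whereas in $\eta$ such a particle is frozen. I would prove the following strengthening by induction on $t$: the difference $\delta_t:=\hat h_t-h^\eta_t$ takes only the values $0$ and $2$, is supported on a set $S_t$ of odd sites, and every $k\in S_t$ is a ``shallow valley'' of $h^\eta_t$, meaning $h^\eta_t(k\pm1)=h^\eta_t(k)+1$ and, unless $k\le2$, also $h^\eta_t(k-2)=h^\eta_t(k)$ (at $t=0$, $\delta_0\equiv0$ and $S_0=\emptyset$). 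The shallow-valley restriction is precisely what makes the step go through: to propagate the even-site agreement one needs, for each even site $2i$, that a jump out of $2i$ occurs in $\tau\zeta$ iff it occurs in $\eta$; the ``descend twice into $2i$'' part of the jump condition reads $h_t(2i-2)=h_t(2i)+2$ and so is insensitive to $\delta_t$ (which vanishes at even sites), while the subsequent ``then ascend'' part, $h_t(2i+1)=h_t(2i)+1$, can differ between the two systems only when $2i+1\in S_t$ — but if $2i+1\in S_t$ is a shallow valley then $h^\eta_t(2i-2)\ne h^\eta_t(2i)+2$, so the ``descend twice'' part fails for both systems and there is no jump out of $2i$ in either. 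The remaining task is to verify that $S_{t+1}$ is again a set of shallow valleys of $h^\eta_{t+1}$: the only mechanisms enlarging $S$ are an ``extra'' jump out of $\tau\zeta$-site $1$ (adding site $1$) or the forward propagation of an existing discrepancy (appending the next odd site), and in each case a check in a window of a few sites shows the condition persists.

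The main obstacle is exactly this last verification — finding the right strengthening of the inductive hypothesis. The difficulty is structural: a single F-TASEP step is not a function of $\pi_1(\eta_t)$ alone (it also depends on the orientations of the ``$d$'' blocks near the boundary), so one cannot simply compare induced coarse-grained dynamics, and the extra data $S_t$ is what makes the comparison well-posed. An equivalent, more computational route avoiding the height function is to carry along the ``coupling'' in which $\tau\zeta_t$ coincides with $\eta_t$ except for particles shuffled within individual $\pi_1$-blocks near site $1$, and to verify by enumerating the finitely many relevant five-site windows that one F-TASEP step preserves this relation; this again gives $\pi_1(\eta_t)=\pi_1(\tau\zeta_t)=\pi_2(\zeta_t)$.
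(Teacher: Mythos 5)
Your reduction to height profiles is sound: the lemma is indeed equivalent to the statement that $\hat h_t$ and $h^\eta_t$ agree at all even sites, the $x=0$ case is correctly dispatched by $\tau\zeta_t=\eta_t$, and your analysis of why even-site agreement propagates \emph{given} the shallow-valley property is correct. The gap is that the strengthened inductive hypothesis you propose for $x=1$ is not preserved by the dynamics, so the induction does not close. Concretely, your hypothesis permits a configuration with $k\in S_t$ ($k\ge5$ odd, $\delta_t(k)=2$, $k$ a shallow valley) and $\eta_t(k-3)=\eta_t(k-2)=1$: the shallow-valley condition at $k$ forces $\eta_t\rng{k-1}{k+1}=0\,1\,0$ but leaves $\eta_t(k-2)$ free, and nothing in your hypothesis forces $k-2\in S_t$ (your $S_t$ need not be contiguous). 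In such a configuration the particle at $k-2$ jumps to $k-1$ in $\eta$ at time $t$, so $h^\eta_{t+1}(k-2)=h^\eta_t(k-2)+2=h^\eta_{t+1}(k)+2$; meanwhile the particle at $k$ does not move and $\hat h_t(k)$ does not change either (since $(\tau\zeta_t)(k)=0$), so $k\in S_{t+1}$ with $\delta_{t+1}(k)=2$ — but the clause $h^\eta_{t+1}(k-2)=h^\eta_{t+1}(k)$ now fails. Such configurations turn out to be unreachable, but your induction hypothesis cannot know that; excluding them is precisely the job of the hypothesis, and the verification you defer to ``a check in a window of a few sites'' fails as stated.

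The repair is to strengthen the invariant further, to what is actually true: $S_t$ is always a \emph{contiguous} initial segment $\{1,3,\dots,2m_t-1\}$, equivalently $\tau\zeta_t$ and $\eta_t$ coincide except on $\rng1{2m_t}$, where $\eta_t\rng1{2m_t}=(1\,0)^{m_t}$ and $\tau\zeta_t\rng1{2m_t}=(0\,1)^{m_t}$. That block is frozen in both systems (so sites never leave $S_t$), and $S_t$ grows only by appending $2m_t+1$, which happens exactly when $\eta_t(2m_t+1)=1$ and $\eta_t(2m_t+2)=0$; with this invariant the even-site agreement is immediate and the induction does close, with far less bookkeeping than the shallow-valley formulation. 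For comparison, the paper's proof avoids the temporal induction entirely: it inducts on the coarse-grained position $n$, taking the first time $t_*$ with $\zeta_{t_*}\ne1\,\eta_{t_*}$, observing that then $\eta_{t_*}\rng12=1\,0$ and $\zeta_{t_*}\rng13=1\,0\,1$ are frozen forever and $\tau^2\zeta_{t_*}=1\cdot\tau^2\eta_{t_*}$, so the whole problem recurses two sites to the right.
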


\begin{proof}If $\zeta_t=x\eta_t$ for all $t$, which certainly holds if
$x=0$, then the result is immediate.  We consider then $x=1$ and suppose
that there is a time $t_*$, which we take to be minimal, such that
$\zeta_{t_*}\ne1\eta_{t_*}$.  We will show that then for all $n\ge1$,
$\pi_1(\eta_t)(n)=\pi_2(\zeta_t)(n)$ for all $t\ge0$ and all
$\eta_0\in X$.  The case $n=1$ is easily verified; we proceed by
induction, assuming that the result is true for $n$.  Now necessarily
$\eta_{t_*-1}\rng12=1\,0$, $\zeta_{t_*-1}\rng13=1\,1\,0$, and
$\zeta_{t_*}\rng23=0\,1$, $\zeta_{t_*}(i)=(1\eta_{t_*})(i)$ for $i\ge4$.
Writing $\hat\eta_0:=\tau^2\eta_{t_*}$ and
$\hat\zeta_0:=\tau^2\zeta_{t_*}$ we thus have that
$\hat\zeta_0=1\hat\eta_0$.  Since $\tau^2\eta_{t_*+t}=\hat\eta(t)$ and
$\tau^2\zeta_{t_*+t}=\hat\zeta(t)$, it follows from the induction
hypothesis that $\pi_1(\eta_t)(n+1)=\pi_2(\zeta_t)(n+1)$.  \end{proof}

\begin{proof}[Proof of \cthm{sinfmain}]The result is immediate for
$n=1$.  Now observe that for $x=0,1$ and $n$ odd,
 \be\label{transeq}
\mu\bigl(\eta_t\rng{n}{n+1}=x\,x\bigr)
   =\mu\bigl(\eta_t\rng{n+1}{n+2}=x\,x\bigr).
 \ee
 For if $\eta_0\in X$ and $\zeta_0=y\eta_0$ then from \clem{equal},
$\eta_t\rng{n}{n+1}=x\,x$ if and only if $\zeta_t\rng{n+1}{n+2}=x\,x$,
and \eqref{transeq} follows from the $\tau$-invariance of $\mu$.  But
\eqref{transeq} implies that the distribution of
$(\eta_t(2n-1),\eta_t(2n),\eta_t(2n+1))$ is symmetric under the exchange
of the first and last variables.  From this, and the $n=1$ result the
general case follows by induction.\end{proof}

\end{document}